\newlength\squareheight
\renewcommand*\env@matrix[1][*\c@MaxMatrixCols c]{%
  \hskip -\arraycolsep
  \let\@ifnextchar\new@ifnextchar
  \array{#1}}
\DeclareSymbolFont{symbols2}{LS1}{stixfrak}{m}{n}
\DeclareMathSymbol{\typecolon}{\mathbin}{symbols2}{"25}
\newcommand{\C}{\mathds{C}}
\newcommand{\Z}{\mathds{Z}}
\newcommand{\ov}[1]{\overline{#1}}
\newcommand{\mr}[1]{\mathsf{#1}}
\newcommand{\tsf}[1]{\textsf{#1}}
\newcommand{\mf}[1]{\mathfrak{#1}}
\def\PE{\mathsf{PE}}
\def\CB{{\cal B}}
\def\CI {{\cal I}}
\def\CN {{\cal N}}
\def\CV {{\cal V}}
\def\CI {{{\cal I}}}
\def\CB {{\cal B}}
\def\CS {{\cal S}}
\def\IF{\mathds{F}}
\def\IZ{{\mathds{Z}}}
\newcommand{\dplus}{\dot{+}}
\begin{document}
\title{Argyres-Douglas Theories, Macdonald Indices And Arc Space Of Zhu Algebra}
\author[a]{George Andrews,}
\author[b]{Anindya Banerjee,}\author[c]{Chinmaya Bhargava,}\author[d]{Ranveer Kumar Singh,} \author[d]{Runkai Tao}
\affiliation[a]{Department of Mathematics, Penn State University, University Park, PA 16802}
\emailAdd{gea1@psu.edu}
\affiliation[b]{Department of Physics, University of Cincinnati, Cincinnati, OH 45211}
\emailAdd{ab1702@scarletmail.rutgers.edu}
\affiliation[c]{CTP and Department of Physics and Astronomy, Queen Mary University of London, London E1 4NS, UK}
\emailAdd{c.bhargava@qmul.ac.uk}
\affiliation[d]{NHETC and Department of Physics and Astronomy, Rutgers University, 126
Frelinghuysen Rd., Piscataway NJ 08855, USA}
\emailAdd{rks158@scarletmail.rutgers.edu}
\emailAdd{runkai.tao@physics.rutgers.edu}
\abstract{In this paper, we relate the MacDonald index of a 4d $\CN=2$ \tsf{SCFT} with the Hilbert series of the arc space of the Zhu algebra of the corresponding Schur \tsf{VOA}. Using this, we conjecture a simple formula for the MacDonald index of $(A_1,D_{2n+1})$ Argyres-Douglas theory. We perform checks of the formula against the known Schur limits and \tsf{RG} flows. To match the Schur limit, we prove new $q$-series identities.} 
\maketitle
\section{Introduction}
We study some families of 4d $\mathcal{N}=2$ superconformal field theories known as Argyres-Douglas theories \cite{Argyres:1995jj, Argyres:1995xn, Witten:1997sc, Cecotti:2010fi, Xie:2012hs}. These theories are characterized by the lack of a description in terms of any Lagrangian preserving the $\mathcal{N}=2$ supersymmetry, and fractional scaling dimensions of Coulomb branch generators (a subset of well-understood local operators that are expected to be universal in any interacting 4d $\mathcal{N}=2$ theory). Despite the absence of a Lagrangian description, these theories are amenable to powerful methods of analysis that give us insight into the dynamics of the theory without recourse to any perturbative approximation. This is made possible by several special features of these theories. These include: 
\begin{enumerate}[(1)]
\item standard simplifying features of conformal symmetry like radial quantization, the state-operator correspondence and the finite radius of convergence of the operator product expansion \cite{Rattazzi:2008pe} 
\item applicability of the representation theory of the superconformal algebra \cite{Dolan:2002zh, Cordova:2016emh} in constraining and characterizing the local operator spectrum of the theory, \item the presence of a continuous moduli space of vacua in the quantum theory that is not lifted by any radiative corrections \cite{Argyres:1996eh}, and 
\item the existence of subsectors of local operators (BPS operators) that are protected against (or transform in a tractable manner under) large classes of deformations of the theory. Such deformations include certain supersymmetry-preserving renormalization group flows (henceforth, \tsf{RG} flows) associated with spontaneous breaking of the global symmetry of these theories, and these in particular will be of interest to us \cite{Maruyoshi:2016aim, Maruyoshi:2016tqk}. 
\end{enumerate}
In this context, we will make extensive use of a familiar counting function called the superconformal index \cite{Kinney:2005ej} that counts (albeit with ambiguities) the \tsf{BPS} operators of a 4d \tsf{SCFT} and has good behavior under supersymmetry-preserving \tsf{RG} flows.
\paragraph{}
It was demonstrated in \cite{Beem:2013sza} that a subset of the ring of \tsf{BPS} local operators (called the Schur ring) has close connections to a vertex operator algebra, the algebraic building block of a two-dimensional rational conformal field theory. Throughout this work, we will focus our analysis on this subring. The superconformal index on this subring also restricts to two special limits called the MacDonald index and the Schur index \cite{Gadde:2011uv, Buican:2015tda, Cordova:2015nma} (see Appendix \ref{app:conventions} for a brief review) that we will use to understand properties of this ring. 
\paragraph{}
The goal of the present paper is twofold:
\begin{enumerate}
    \item Provide a concrete algorithm to compute the MacDonald index of a 4d $\CN=2$ \tsf{SCFT} using the Schur \tsf{VOA}. To do this, we prove, in Section \ref{sec:Mac_Hil_ser_arc}, that the Hilbert series of the arc space of the Zhu algebra of a strongly-finitely generated \tsf{VOA} is equal to the refined index of the \tsf{VOA} defined by Song \cite{Song:2016yfd}. Then, Song's conjecture implies that the MacDonald index (with flavor fugacities set to 1) of a 4d $\CN=2$ \tsf{SCFT} is the Hilbert series of the arc space of the Zhu algebra (viewed as a variety) of the corresponding Schur \tsf{VOA}. We check this result for $(A_1,A_{2n})$ Argyres-Douglas theory for which the MacDonald index is known to take a simple form. One of the main technical obstacles in the evaluation of the Hilbert series of arc space of a variety is determining the dimension of varieties appearing in the arc space construction. We use the Gr\"{o}bner basis algorithm \cite{loustaunau1994introduction} for this computation, and prove a formula for the Hilbert series in terms of monomials in the Gr\"{o}bner basis, see \eqref{eq: hilbert_ser_Rinfty_gen} for the explicit expression.
    \item We propose a simple formula for the MacDonald index of $(A_1,D_{2n+1})$ Argyres-Douglas theory (see Section \ref{sec:AD_intro} for a brief account of Argyres-Douglas theories). The main observation to expect a simple formula is that the Higgs branch \tsf{RG} flow of these theories relate it to free hypermultiplet, see Section \ref{sec:RG_flow} for more details. We can then expect that the MacDonald index can be written as some modification of the MacDonald index of the free hypermultiplet\footnote{There is a subtle issue involving a $\Z_2$ projection of the operator content of free hypermultiplet that is required, see Footnote \ref{foot:proj_hyper}.}. Following this observation and using some numerical evidence, we propose a formula for MacDonald index of the $(A_1,D_3)$ theory in Section \ref{sec:A1D3_Mac_conj} and generalize our proposal to $(A_1,D_{2n+1})$ theory in Section \ref{sec:A1D2n+1_Mac_conj}.  We then perform several nontrivial checks of our formula. In particular, to match the Schur limit to known Schur indices of the theory, we proved a family of interesting $q$-series identities. Checking our formula against the Hilbert series of the arc space of the Zhu algebra of the Schur \tsf{VOA} of these theories takes us into very techincal discussion of Gr\"{o}bner bases and constraint counting. We also check that our formula is consistent with the \tsf{RG} flow to $(A_1,A_{2(n-1)})$ theory and a free hypermultiplet. More concretely, in Section \ref{sec:prod_A2n_fH_Mac} we show that the product of the MacDonald indices of $(A_1,A_{2(n-1)})$ theory and the free hypermultiplet matches with our formula up to a correction factor.  
    \end{enumerate} 
In the following, we summarize the results of the paper for readers who are only interested in the final results. Conjectures (or results which assume another conjecture) are indicated with a question mark in parenthesis (?).
\begin{enumerate}
    \item (?) The MacDonald index of a 4d $\CN=2$ \tsf{SCFT} is the Hilbert series for the arc space of the Zhu algebra of the corresponding Schur \tsf{VOA}. 
    \item (?) The MacDonald index for $(A_1,D_{2n+1})$ theory is given by (see Appendix \ref{app:special_func} for the definition of the special functions appearing in the formulas below) 
    \begin{equation}
    \CI^{(A_1,D_{2n+1})}_{\mr{M}} = 1 + \sum_{m=1}^\infty 
    t^m \sum_{l=1}^m
    \frac{f^n_{m,l}(q)}{(q;q)_l}\sum_{k=0}^{2l}\binom{2l}{k}_q~,
\end{equation}
where $f^n_{m,l}(q)$ is a class of polynomial functions in $q$ which satisfy the hypergeometrical relation 
\begin{equation}
    1 + \sum_{m=1}^\infty q^m \sum_{l=1}^m 
    \frac{f^n_{m,l}(q)}{(q;q)_l} (q^N; q)_l 
    (q^{1-N};q)_l = q^{nN(N-1)}~.
\end{equation}
A closed form expression for the functions $f^n_{m,l}$ is given by 
\begin{equation}
f^k_{m,l}(q)=\sum_{i_k,i_{k-1},\dots,i_3\geq 0}\frac{q^{Q(m-(k-1)i_k-(k-2)i_{k-1}-\dots-2i_3-l,i_3,\dots,i_k)}(q;q)_{l}}{(q;q)_{i_k}\dots (q;q)_{i_3}(q;q)_{m-(k-1)i_k-(k-2)i_{k-1}-\dots-2i_3-l}(q;q)_{2l+i_3+2i_4+\dots+(k-2)i_k-m}}~,    
\end{equation}
where 
\begin{equation}
    Q(i_2,i_3,\dots,i_k):=i_k^2+(i_k+i_{k-1})^2+\dots+(i_k+\dots+i_3+i_2)^2~.
\end{equation}
In particular
\begin{equation}
    \CI_{\mr{M}}^{(A_1,D_3)}=\sum_{m=0}^\infty 
    \frac{t^m}{(q;q)_m} \sum_{k=0}^{2m}\binom{2m}{k}_q~.
\end{equation}
\item To match the Schur limit of our formula with the known Schur indices, we proved the following identity:
\begin{equation}
\left(\frac{\left(q^{2n+1} ; q^{2n+1}\right)_{\infty}}{(q ; q)_{\infty}}\right)^3= 1 + \sum_{m=1}^\infty 
    q^m \sum_{l=1}^m
    \frac{f^n_{m,l}(q)}{(q;q)_l}\sum_{k=0}^{2l}\binom{2l}{k}_q~.    
\end{equation}
We also proved the identity
\begin{equation}
\left(\frac{\left(q^{2} ; q^{2}\right)_{\infty}}{(q ; q)_{\infty}}\right)^3= \frac{1}{(q;q)_\infty(q^2;q^2)_\infty}\sum_{m=0}^\infty(-1)^mq^{m^2+m}\left(\frac{1+q^{2m+1}}{1-q^{2m+1}}\right)~,    
\end{equation}
of potential mathematical interest in its own right. 
\end{enumerate}
\section{Constructing Argyres-Douglas Theories}\label{sec:AD_intro}
Argyres-Douglas theories were originally discovered as special singular points on the Coulomb branch of 4d $\mathcal{N}=2$ gauge theories where mutually non-local \tsf{BPS} particles became simultaneously massless, leading to an \tsf{IR} \tsf{SCFT} that could not have a Lagrangian description preserving $\mathcal{N}=2$ supersymmetry \cite{Argyres:1995jj,Argyres:1995xn}. A second feature of these theories was that one or more Coulomb branch operator dimensions of these \tsf{SCFT}s were fractional. Later, other constructions for these theories were found. It is often the case that different observables become more tractable within different constructions of these theories. Three such constructions include: 
\begin{enumerate}[(1)]
    \item From D-branes probing isolated hypersurface singularities in Type \tsf{IIB} string theory \cite{Cecotti:2010fi}. 
    \item As generalized Class $\CS$ theories\footnote{See \cite{Witten:1997sc,Gaiotto:2009hg,Gaiotto:2009we} for discussion on class $\CS$ theories.} obtained from twisted compactification of 6d $(2,0)$ theories on punctured Riemann surfaces with defects \cite{Xie:2012hs}.
    \item As endpoints of $\mathcal{N}=1$-preserving \tsf{RG} flows with accidental supersymmetry enhancement in the \tsf{IR} \cite{Maruyoshi:2016aim, Maruyoshi:2016tqk}.
\end{enumerate}    
    Having multiple constructions has allowed us to catalogue various properties of these theories, including global symmetry (0-form and generalized), anomaly coefficients, moduli spaces of vacua, \tsf{BPS} spectrum and chiral ring, closed sub-rings of local operators like the Schur ring and the associated Schur \tsf{VOA}, and partition functions on various 4-manifolds and various choices of couplings to background fields, allowing access to information about compactifications and dualities. In this section, we collect some details about each of these constructions for the theories we will be working with and catalogue the properties that will be of interest to us.  
\subsection{Construction within Type \tsf{IIB} string theory} 
4d $\mathcal{N}=2$ \tsf{SCFT}s can be engineered within string theory in multiple ways. One way is to consider Type \tsf{IIB} string theory on a noncompact, singular Calabi-Yau threefold (the allowed types of singularities are constrained by additional consistency conditions). For our purposes, we can restrict ourselves to isolated hypersurface singularities of certain kinds that have an A-D-E classification. More concretely, the isolated hypersurface singularities are given by $\C^2/\Gamma$ where $\Gamma\subset \mr{SU(2)}$ is a discrete subgroup. Such discrete subgroups have an \tsf{ADE} classification. For $\mf{g}\in\{A_n,D_n,E_{6,7,8}\}$, let $\Gamma_\mf{g}\subset\mr{SU(2)}$ be the discrete subgroup and let $f_{\mf{g}}(x,y)$ be a polynomial such that  $\C^2/\Gamma_{\mf{g}}$ is the zero set of $f_{\mf{g}}$. The corresponding singular Calabi-Yau threefold is given by the zero set of 
\begin{equation}
x^2+y^2+f_{\mf{g}}(z,w)~.    
\end{equation}
Explicit form of the polynomials is shown in Table \ref{tab:pol_sing_CY}.
\begin{table}[]
\centering
\begin{tabular}{|c|l|}
\hline
\(\mathfrak{g}\) & \(f_{\mathfrak{g}}(x,y)\) \\
\hline
\(A_n\)   & \(x^2 + y^{n+1} \) \\
\hline
\(D_n\)   & \(y\,(x^2 + y^{n-2}) \) \\
\hline
\(E_6\)   & \(x^3 + y^4 \) \\
\hline
\(E_7\)   & \(x\,(x^2 + y^3) \) \\
\hline
\(E_8\)   & \(x^3 + y^5 \) \\
\hline
\end{tabular}
\caption{Polynomials defining singular Calabi-Yau threefolds}
\label{tab:pol_sing_CY}
\end{table}
\paragraph{$(A_1,A_{N-1})$ Theory.}
The $(A_1,A_{N-1})$ theories are constructed using the singular Calabi-Yau given by \cite{Cecotti:2010fi}
\begin{equation}
    x^2+y^2+z^2+w^{N}=0~.
\end{equation}
This allows us to define the Seiberg-Witten curve $\Sigma$ and differential $\lambda$ as,
\begin{equation}
    \Sigma:~~z^2+w^{N}=0, \quad \lambda=z\,dw~.
\end{equation}
The above form allows us to read off the scaling dimensions of $z$ and $w$ as follows:
\begin{equation}
    \begin{split}
        &[z]+[w]=[\lambda]=1~,\quad 2[z]=N[w]  \\
    &\implies [z]=\frac{N}{N+2}, \quad [w]=\frac{2}{N+2}~.
    \end{split}
\end{equation}
Consider the simplest case of the $(A_1,A_2)$ theory. The Coulomb branch spectrum can be read off from the following deformed curve:
\begin{equation}
    z^2+w^3=u+\cdots
\end{equation}
where $u$ is the unique relevant operator appearing on the \tsf{RHS} since any other coefficient $v$ in a monomial involving $z$ and $w$ would have $[v]<1$ and therefore would not correspond to an operator in a unitary 4d $\mathcal{N}=2$ \tsf{SCFT}. 
This gives the following scaling dimensions:
\begin{equation}
    [z]=\frac{3}{5}, \quad [w]=\frac{2}{5}, \quad [u]=\frac{6}{5}~.
\end{equation}
Generalizing the above to arbitrary $N$, we find the following spectra of Coulomb branch operators:
\begin{equation}
    \begin{split}
        &\Delta\in\left\{\frac{2N}{N+2},\frac{2N-2}{N+2},\frac{2N-4}{N+2},\cdots,\frac{N+3}{N+2}\right\},\quad  N\text{ odd}~, \\
        &\Delta\in\left\{\frac{2N}{N+2},\frac{2N-2}{N+2},\frac{2N-4}{N+2},\cdots,\frac{N+4}{N+2}\right\},\quad \text{ }N\text{ even}~.
    \end{split}
\end{equation}
We therefore see that the rank (complex dimension of its Coulomb branch) of the $(A_1,A_{N-1})$ \tsf{SCFT} is $r=\lfloor\frac{N-1}{2}\rfloor$. We will be interested in the cases with $N$ odd above. Denoting $N=2n+1$ with $n\geq 1$, we can write down the central charges for these theories,
\begin{equation}
    c=\frac{n(6n+5)}{6(2n+3)}, \quad a=\frac{n(24n+19)}{24(2n+3)}~.
\end{equation}
These central charges may be derived from the Shapere-Tachikawa formula \cite{Shapere:2008zf}: 
\begin{equation}
    2a-c=\frac{1}{4}\sum_{i=1}^{r}(2\Delta_i-1)~,
\end{equation}
and the modular properties of the Schur index\footnote{The Schur index of a 4d 
$\CN=2$ \tsf{SCFT} counts certain class of operators in the theory, see Appendix \ref{app:conventions} for more details. In Type \tsf{IIB} construction, the Schur index can be obtained by computing the trace of the Kontsevich-Soibelman wall-crossing invariant $\mathcal{I}_{\mr{S}}(q)=\mr{Tr}~\mathcal{O}(q)$ \cite{Cordova:2015nma,Cecotti:2010fi}.}: 
\begin{equation}
    \lim_{q\to 1}\text{log }\mathcal{I}_{\mr{S}}(q)\approx \frac{8\pi^2}{\beta}\cdot (c-a), \quad  \quad q=e^{-\beta}~.
\end{equation}
These theories have trivial flavor symmetry for all $n$. 
\paragraph{$(A_1,D_N)$ Theory.}
The $(A_1,D_N)$ Argyres-Douglas theories are obtained from the hypersurface \cite{Cecotti:2010fi} 
\begin{equation}
x^2+y^2+wz^2+w^{N-1}=0~.    
\end{equation}
This implies the scaling dimensions
\begin{equation}
    \begin{split}
    [z]=\frac{N-2}{N}, \quad [w]=\frac{2}{N}~.
    \end{split}
\end{equation}
For the case of $(A_1,D_3)$, this leads to the following deformed Seiberg-Witten curve:
\begin{equation}
    \Sigma:~~w^2+wz^2=u~,
\end{equation}
which gives us the following scaling dimensions:
\begin{equation}
    [w]=\frac{2}{3}, \quad [z]=\frac{1}{3}, \quad [u]=\frac{4}{3}~.
\end{equation}
For general $N$, we obtain the following spectrum of Coulomb branch dimensions \cite{Cecotti:2010fi}:
\begin{equation}
    \begin{split}
        &\Delta\in\left\{\frac{2N-2}{N},\frac{2N-4}{N},\frac{2N-6}{N},\cdots,\frac{N+1}{N}\right\}, \quad  N\text{ odd}~, \\
        &\Delta\in\left\{\frac{2N-2}{N},\frac{2N-4}{N},\frac{2N-6}{N},\cdots,\frac{N+2}{N}\right\},\quad  N\text{ even}~, \\
    \end{split}
\end{equation}
so that the rank of the $(A_1,D_N)$ \tsf{SCFT} is $\lfloor\frac{N-1}{2}\rfloor$.  We will be interested in the cases with $N$ odd. Denoting $N=2n+1$ with $n\geq 1$, we can also use the Type \tsf{IIB} construction to infer the central charges for these theories:
\begin{equation}
    c=\frac{n}{2}, \quad a=\frac{n(8n+3)}{8(2n+1)}~.
\end{equation}
These theories generically have $\mr{U(1)}$ flavor symmetry, which enhances to $\mr{SU(2)}$ for $n=1$. 
\subsection{Construction As Generalized Class $\CS$ Theories}
A second construction of these theories that will be useful for us is to start from the 6d $(2,0)$ theory denoted as $\mathcal{S}[\mf{g}]$ (where $\mf{g}=\mf{a,d,e}$ is a simply laced finite-dimensional Lie algebra) and compactify on a Riemann surface with regular and irregular punctures. The theories that form the building blocks of this family of theories involve compactifications on the two-punctured Riemann sphere with one irregular puncture and one regular puncture. The 6d theory has codimension 2 defects that live in the directions transverse to these punctures at the singular locus. In the case where $\mf{g}=\mf{a}_{N-1}$ (which is what we will be concerned with), regular punctures are labeled by the data of a Young diagram $Y$ of $\mf{a}_{N-1}$ consisting of $N$ boxes \cite{Gaiotto:2009we,Chacaltana:2010ks,Tachikawa:2013kta,LeFloch:2020uop}. This corresponds to an irreducible representation of $\mf{a}_{N-1}$, and describes a boundary condition at the puncture for the Higgs field $\Phi(z)$ of the Hitchin system living on the Riemann surface. A full puncture $Y=\mr{F}$ corresponds to the Young diagram labeled $[1^N]$ in our convention, i.e., with a single row containing $N$ boxes. A simple puncture $Y=\phi$ corresponds to the Young diagram denoted as $[N]$ in our convention, and it refers to the Young diagram with $N$ rows with one box each, or the trivial irreducible representation of $\mf{a}_{N-1}$. The irregular puncture is labeled by two integers $b$ and $p$ which refer to the pole structure of the singular Higgs field. Together, this allows a packaging of the data defining the construction into the form $D_p^b(\mf{g},Y)$, which labels the 4d $\mathcal{N}=2$ \tsf{SCFT}. Some well-studied families of Argyres-Douglas theories that will be relevant to our work, constructed as generalized Class $\CS$ theories, have the following parameters in this notation \cite{Beem:2023ofp}:
\begin{equation}
    \begin{split}
        (A_1,D_{2N+2})&=D_{2N+2}^2(\mf{sl}_2,[1^2])~, \\
        (A_1,A_{2N-1})&=D_{2N+2}^2(\mf{sl}_2,[2])~, \\
        (A_1,D_{2N+3})&=D_{2N+3}^2(\mf{sl}_2,[1^2])~, \\
        (A_1,A_{2N})&=D_{2N+3}^2(\mf{sl}_2,[2])~. \\
    \end{split}
\end{equation}
This construction is useful for deriving the MacDonald index (see Appendix \ref{app:conventions} for a brief account of MacDonald index) of these theories. For example, the MacDonald index for $(A_1,D_{2N-3})=D_{2N-3}^2(\mf{sl}_2,[1^2])$ theory is given by \cite{Buican:2015tda}:
\begin{equation}\label{eq:Mac_class_S-1}
    I_M(q,t,a)=\sum_{n=0}^{\infty}C_n\tilde{F}_n^{(N)}(q,t,a)~,
\end{equation}
where
\begin{equation}
    C_n=\bigg[\frac{(t;q)_{\infty}}{(q;q)_{\infty}} \bigg]^{1/2}\frac{P_n(q,t,\sqrt{t})}{(t^2;q)_{\infty}}~,
\end{equation}
and
\begin{equation}
    \tilde{F}_n^{(N)}(q,t,a)=\bigg[\frac{(t;q)_{\infty}}{(q;q)_{\infty}} \bigg]^{1/2}\frac{\tilde{P}_n^{(N)}(q,t,a)}{(t;q)_{\infty}}~,
\end{equation}
is the wavefunction associated to the irregular puncture. The special functions $P_n(q,t,a)$ and $\tilde{P}_n^{(N)}(q,t,a)$ respectively have the expressions
\begin{equation}
    \begin{split}
        &P_n(q,t,a)=N_n(q,t)\sum_{m=0}^{n}\frac{(t;q)_m(t;q)_{n-m}}{(q;q)_m(q;q)_{n-m}}a^{2m-n}~, \\
        &\tilde{P}_n^{(N)}(q,t,a)=N_n(q,t)\sum_{m=0}^n\frac{(t;q)_m(t;q)_{n-m}}{(q;q)_m(q;q)_{n-m}}a^{2m-n}q^{-N(\frac{n}{2}-m)^2}~,
    \end{split}
\end{equation}
and the normalization factor $N_n(q,t)$ is given as
\begin{equation}\label{eq:Mac_class_S-7}
    N_n(q,t)=\frac{(q;q)_n}{(t;q)_{\infty}}\sqrt{(1-q^nt)(q^{n+1};q)_{\infty}(q^nt^2;q)_{\infty}}
\end{equation}
For completeness, we also note that the wavefunction associated with a regular puncture is given as
\begin{equation}
    F_n(q,t,a)=\bigg[\frac{(t;q)_{\infty}}{(q;q)_{\infty}} \bigg]^{1/2}\frac{P_n(q,t,a)}{(ta^{\pm 2,0};q)_{\infty}}~.
\end{equation}

\subsection{Higgs Branch \tsf{RG} Flow And Quantum Hamiltonian Reduction}\label{sec:RG_flow}
The construction of the $(A_1,A_N)$ and the $(A_1,D_N)$ theories within the Class $\CS$ framework makes it transparent that the $(A_1,D_{2n+3})=D_{2n+3}^2(\mf{sl}_2,[1^2])$ theory flows to the $(A_1,A_{2n})=D_{2n+3}^2(\mf{sl}_2,[2])$ theory under a Higgs branch \tsf{RG} flow, implemented as a partial closure of the regular puncture from $Y=\mr{F}=[1^2]$ to $Y=\phi=[2]$. We briefly review the procedure of partial closure of the puncture applied for our case and refer the reader to \cite{Tachikawa:2015bga} for more details. Let us first recall the notion of Higgsing. Suppose a 4d $\CN=2$ \tsf{SCFT} with flavor symmetry group $G$. The flavor multiplet has a chiral operator $\mu^+$ in the adjoint of $G$. One can then give a nilpotent \tsf{VEV} to this operator which triggers a Higgs branch flow. By Jacobson-Morozov theorem, any nilpotent element of a semi-simple Lie
algebra $\mf{g}$ is given by the image $\rho(\sigma^+)$ of the raising operator $\sigma^+\in\mf{su}(2)$ of an embedding $\rho: \mf{su}(2) \longrightarrow \mf{g}$. The flavor symmetry algebra of the \tsf{IR} fixed point\footnote{Note that, in general, this procedure does not preserve the full $\mr{SU(2)_R}$-symmetry of the \tsf{UV} theory and is broken to $\mr{U(1)_R}$.} then is the commutant of $\rho(\sigma^+)$ in $\mf{g}:=\mr{Lie}(G)$. By the usual Higgs mechanism, this sponteneous breaking of flavor symmetry results in some Nambu-Goldstone hypermultiplets at the \tsf{IR} fixed point. 
\par In a class $\CS$ theory, flavor symmetries are associated to the defects at the punctures of the Riemann surface. In case of a class $\CS$ theory of type $A_{n-1}$, Higgsing of the flavor symmetry associated to a regular puncture is called partial closure of puncture. The $(A_1,D_{2n+3})$ Argyres-Douglas theory has $\mr{SU(2)}$ flavor symmetry. The chiral operator $\mu^+$ is in the adjoint of $\mr{SU(2)}$. The only choice of nilpotent \tsf{VEV} for $\mu^+$ is
\begin{equation}
    \langle\mu^+\rangle=\sigma^+~.
\end{equation}
The flavor symmetry then breaks from $\mr{SU(2)}$ to the commutant of $e^{i\sigma^+}$ which is $\mr{U(1)}$. At the level of defect at the puncture, the nilpotent \tsf{VEV} $\langle\mu^+\rangle=\sigma^+$ closes the puncture $[1^2]$ to $[2]$. The \tsf{IR} fixed point\footnote{One can show that this partial closure of puncture the \tsf{UV} $\mr{SU(2)_R}$-symmetry breaks to $\mr{U(1)_R}$ in the \tsf{IR} but sits as the Cartan of an $\mr{SU(2)'_R}$ which becomes the \tsf{R}-symmetry of the \tsf{IR}. The \tsf{IR} fixed point is thus a 4d $\CN=2$ \tsf{SCFT}. In the language of Tachikawa \cite{Tachikawa:2015bga}, this partial closure of puncture is \textit{good}.} is thus the $(A_1,A_{2n})$ Argyres-Douglas theory with flavor symmetry $\mr{U(1)}$. Following the counting of free Nambu-Goldstone hypermultiplets in \cite{Tachikawa:2015bga}, we see that the partial closure of puncture $[1^2]\to[2]$ results in a single free hypermultiplet. We thus conclude that the Higgs branch \tsf{RG} flow of the $(A_1,D_{2n+3})$ theory results in a single free hypermultiplet and the $(A_1,A_{2n})$ theory. We summurize this as 
\begin{equation}\label{eq:RG_flowD2n+1to2n}
(A_1,D_{2n+3})\xrightarrow[\tsf{RG flow}]{\langle\mu^+\rangle=\sigma^+} (A_1,A_{2n})\otimes 1\,\mr{HM}~.   
\end{equation}
At the level of the corresponding Schur \tsf{VOA}s, the analogous operation is the quantum Drinfeld-Sokolov (or quantum Hamiltonian) reduction (see Section 4 of \cite{Beem:2014rza}). This is a special case of the \tsf{BRST} procedure applied to \tsf{VOA}s, typically affine Lie \tsf{VOA}s $V_k(\mathfrak{g})$. It requires the input data of the affine \tsf{VOA} $V_k(\mathfrak{g})$ and a Lie algebra homomorphism $\rho:\mf{su}(2)\longrightarrow\mathfrak{g}$. The output is a second \tsf{VOA}, typically a $\mathcal{W}$-algebra obtained by ``gauging'' or the imposition of certain constraints (related to the nilpotent orbit $\rho$) on the parent \tsf{VOA} by the \tsf{BRST} procedure. Additional useful properties appear (e.g., modular properties of the characters of modules) when the parent affine \tsf{VOA} $V_k(\mathfrak{g})$ are at \textit{boundary admissible levels} in the sense of Kac and Wakimoto, i.e., there exists $q\geq h^\vee$ with $\mr{gcd}(q,h^\vee)=1$ such that
\begin{equation}
    \tilde{k}=-h^{\vee}+\frac{h^{\vee}}{q}~, 
\end{equation}
where $h^\vee$ is the dual Coxeter number of $\mf{g}$. For the \tsf{RG} flow \eqref{eq:RG_flowD2n+1to2n}, the quatum Drinfeld-Sokolov reduction is \cite{adamovic2019realizations}
\begin{equation}
    \mf{su}(2)_{-\frac{4n+4}{2n+3}}\to\mr{Vir}\left(-\frac{2n(6n+5)}{2n+3},0\right)~.
\end{equation}
Based on this, we expect that the indices in the theories related by \tsf{RG} flow \eqref{eq:RG_flowD2n+1to2n} will have some factorization property\footnote{See \cite{Deb:2025ypl} for an intriguing relation of Schur indices of theories lying on the same \tsf{RG} flow line.}. Indeed, we show in Section \ref{sec:RG_flow} that such a factorization of MacDonald index holds. 
\par
It has been suggested by the authors of \cite{Beem:2019tfp} that an inversion of the quantum Drinfeld-Sokolov reduction is implemented by their free field realizations\footnote{See also \cite{Beem:2024fom} for free-field realizations of rank 1 \tsf{SCFT}s.} for the corresponding Schur \tsf{VOA}s. This aligns with the spirit of this work, where we use analogous machinery to these free field realizations to come up with a new proposal for the MacDonald indices of these theories. Note that while the free field realizations of \cite{Beem:2019tfp} come with an R-filtration and therefore can be used to obtain the MacDonald index, our techniques do not seem to derive in any obvious way from this R-filtration, and the operator rearrangement scheme we work out here seems to have a different (though assuredly related) origin. \\
\section{MacDonald Index And Zhu Algebra}\label{sec:Mac_Hil_ser_arc}
In \cite{Song:2016yfd}, Song conjectured an algorithmic procedure to compute the MacDonald index of a 4d $\CN=2$ theory using only the data of the corresponding \tsf{VOA}. \cite{Song:2016yfd} defined a certain refined character for a strongly finitely generated \tsf{VOA} (see \eqref{eq:ref_char_song}) and checked that it reproduces the MacDonald index for a large class of 4d $\CN=2$ \tsf{SCFT}.   In this section, we prove the following theorem.
\begin{thm}\label{thm:Hil_ser_mac}
Let $\CV$ be a strongly finitely generated $\mr{VOA}$. Then the refined character of $\CV$ with the fugacities for (non-affine) symmetry $\mf{g}$ set to 1 is equal to the Hilbert series for the arc space of the Zhu algebra of $\CV$. In particular, assuming Song's conjecture, 
the MacDonald index without flavor fugacities of a 4d $\CN=2~\mr{SCFT}$ is given by the Hilbert series of the arc space of the Zhu algebra of the corresponding chiral algebra.     
\end{thm}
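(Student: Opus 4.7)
The strategy is to recognize Song's refined character and the Hilbert series of the arc space as the bigraded dimension of one and the same graded vector space, namely the associated graded of $\CV$ with respect to a natural filtration (the Li/R-filtration). I carry this out in three steps.

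First, I unpack the ingredients. Fix strong generators $a_1,\ldots,a_n$ of $\CV$ of conformal weights $h_1,\ldots,h_n$. The Zhu $C_2$-algebra $R_\CV = \CV/C_2(\CV)$ is a commutative Poisson algebra generated by the images $\bar a_i$, so there is a presentation $R_\CV \cong \C[x_1,\ldots,x_n]/I$ for some homogeneous ideal $I$ (with $\deg x_i = h_i$). The associated scheme is $X_\CV = \mathrm{Spec}\, R_\CV$, and its arc space $J_\infty X_\CV$ has coordinate ring $\C[x_i^{(k)}]_{i,k}/\langle D^k f : f\in I,\ k\ge 0\rangle$, where $D$ is the derivation with $D x_i^{(k)} = (k+1)\,x_i^{(k+1)}$. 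The arc space carries a bigrading in which $x_i^{(k)}$ has $q$-weight $h_i + k$ (conformal weight) and $t$-weight $h_i$ (generator weight), giving a bigraded Hilbert series.

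Second, I invoke the PBW-type spanning theorem (Li, and extensions due to Arakawa) for strongly finitely generated VOAs: $\CV$ is linearly spanned by normal-ordered monomials $\prod_{i,k}\partial^{k}a_{i,(-1)}\mathbf{1}$, and the linear dependence relations among them are generated by lifts of the defining relations of $R_\CV$ together with their $T$-derivatives, modulo terms lying lower in the Li filtration. Passing to the associated graded $\mathrm{gr}^{F}\CV$ for the Li filtration $F$, the lower-order corrections drop out and one obtains a surjection $\mathrm{gr}^{F}\CV \twoheadrightarrow \C[J_\infty X_\CV]$ sending $\partial^{k}a_{i,(-1)}\mathbf{1}$ to $x_i^{(k)}$ (up to combinatorial factors). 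Under the hypothesis of strong finite generation, this map is an isomorphism of bigraded rings, which is exactly the statement that $\mathrm{gr}^{F}\CV$ realizes the arc variety of $X_\CV$.

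Third, I verify that Song's refined character coincides with the Hilbert series of this bigraded associated graded. Song's definition of the refined character is set up so that, once the non-affine flavor fugacities are set to $1$, the strong generator $a_i$ is counted with weight $q^{h_i}t^{h_i}$ and its descendant $T^{k}a_i$ with weight $q^{h_i+k}t^{h_i}$, and the relations imposed are precisely the $T$-descendants of the defining relations of $R_\CV$. These match bijectively with the bigrading and defining ideal of $\C[J_\infty X_\CV]$ from the first step; hence the two Hilbert series agree term by term. Invoking Song's conjecture that this refined character coincides with the MacDonald index (with flavor fugacities set to $1$) then yields the second assertion.

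The main obstacle is the second step: ensuring that the surjection $\mathrm{gr}^{F}\CV \twoheadrightarrow \C[J_\infty X_\CV]$ is actually an isomorphism in the bigraded sense for a general strongly finitely generated $\CV$. In full generality one only has a surjection, and equality requires controlling how each relation in $I$ lifts to a genuine null vector of $\CV$ and how its $T$-descendants exhaust every additional linear dependence that could refine the character. I would resolve this by an inductive argument on conformal weight, using strong finite generation to reduce any linear dependence in $\CV$ to a finite combination of lifted $C_2$-relations and their $T$-descendants, together with the fact that Song's R-grading is constructed precisely so that no relations beyond these are visible after refinement.
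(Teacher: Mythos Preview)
Your overall strategy—identify $\mathrm{Gr}(\CV)$ with the arc algebra of the Zhu algebra via an explicit generator-by-generator map—is the same as the paper's. But two concrete errors would prevent your argument from going through.

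\textbf{The $t$-grading is wrong.} Song's filtration is by \emph{number of strong generators}, so in the refined character each generator contributes $t$-weight $1$, not $h_i$. Concretely, a monomial $u^{i_1}_{(-k_1)}\cdots u^{i_n}_{(-k_n)}|0\rangle$ of conformal weight $h$ contributes $q^{h-n}t^n$, and the paper's bigrading on the arc space is set up to match: $x^{(i)}_j$ carries weight $(j,1)$ with $j\ge h_i-1$. Your assignment of $t$-weight $h_i$ to each $x_i^{(k)}$ would produce a different Hilbert series, and the matching in your step~3 fails at the level of a single generator: the paper gets $q^{h_i-1}t$, you get $q^{h_i}t^{h_i}$. (The paper even remarks that assigning $t$-weight $1$ to every strong generator is the defining feature of Song's construction, and that replacing it by the true $R$-charge is the open ``$R$-filtration problem''.)

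\textbf{The direction of the surjection is backwards, so the obstacle is misidentified.} The result of Li (and Arakawa) gives a surjection $\C[J_\infty X_\CV]\twoheadrightarrow \mathrm{gr}\,\CV$, not the other way: every relation in the arc ideal automatically comes from a null vector (since $C_2$-relations are images of nulls, and their $s$-expansion coefficients correspond to $L_{-1}$-descendants of nulls). The real question is the converse—whether every graded null relation in $\mathrm{Gr}(\CV)$ lies in the arc ideal, i.e.\ whether $\CV$ is ``classically free.'' Your proposed resolution (``each relation in $I$ lifts to a genuine null vector'') addresses the easy direction. The paper handles the hard direction by arguing directly that null vectors in $\CV^{(n)}_h$ arise in exactly two ways (a singular vector of degree $n$ at level $h$, or a normal-ordered product of a lower-degree singular vector with generators), and that these correspond precisely to the two mechanisms producing constraints in the arc space (the lowest $s$-coefficient of some $f_i$, or a monomial multiple of a higher coefficient). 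This is the substantive content of the proof and is not a consequence of strong finite generation alone.
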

\subsection{The Refined Character Of A \tsf{VOA}}
We begin by recalling the notion of the refined index of a \tsf{VOA}. Recall that a \tsf{VOA}$\CV$ is said to be \textit{strongly generated} by a subset $U\subseteq \CV$ if  
\begin{equation}
    \CV=\mathsf{Span}\left\{u^1_{(-n_1)}\cdots u^k_{(-n_k)}|0\rangle:u^i\in U,n_i\geq 1,k\geq 0\right\}/\{\mr{null~~ vectors}\}~,
\end{equation}
where $k=0$ corresponds to the vacuum $\ket{0}$ and our convention for the expansion of the vertex operators is 
\begin{equation}
    Y(u,z)=\sum_{n\in\Z}u_{(-n)}z^{-n-1}~.
\end{equation}
If $U$ is a finite-dimensional subspace, then $\CV$ is called \textit{strongly finitely generated}. Note that in this convention, we have 
\begin{equation}
    L_0\cdot u^1_{(-n_1)}\cdots u^k_{(-n_k)}|0\rangle=\left(\sum_{i=1}^k(d_i+n_i-1)\right)u^1_{(-n_1)}\cdots u^k_{(-n_k)}|0\rangle~,
\end{equation}
where $d_i$ is the conformal dimension of $u^i$.

The notion of the refined character, as we will see below, requires the \tsf{VOA} to be strongly generated. It is widely believed that the \tsf{VOA} arising from a 4d $\CN=2$ \tsf{SCFT} is strongly finitely generated \cite{Beem:2019tfp,Arakawa:2016hkg}. We will need this assumption to prove our statement.

A \tsf{VOA} $\CV$ strongly generated by $U$ admits the filtration
\begin{equation}
    \CV_0\subset\CV_1\subset\CV_2\subset\dots
\end{equation}
where 
\begin{equation}
    \CV_k=\mr{Span}\left\{u_{(-n_1)}^{1} u_{(-n_2)}^{2} \cdots u_{(-n_m)}^{m}|0\rangle: n_1 \geq n_2 \geq \ldots \geq n_m, m \leq k;u^i\in U\right\}/\{\mr{null~~ vectors}\}~.
\end{equation}
Notice that if $\CV$ has a one-dimensional vacuum space generated by the vacuum vector, then $\CV_0=\C|0\rangle$. 
Using this filtration, we can define the graded space 
\begin{equation}
    \mr{Gr}(\CV):=\CV^{(0)}\oplus\bigoplus_{n\geq 1}\CV^{(n)}~,
\end{equation}
where 
\begin{equation}
\CV^{(0)}:=\CV_0=\C|0\rangle~,\quad \CV^{(n)}:=\CV_n/\CV_{n-1}~.   
\end{equation}
We can further decompose $\mr{Gr}(\CV)$ into $L_0$-homogeneous subspaces:
\begin{equation}
\mr{Gr}(\CV)=\bigoplus_{n\geq 0}\bigoplus_{h\in\Z}\CV^{(n)}_h~,    
\end{equation}
where $\CV^{(n)}_h$ is the subspace of $\CV^{(n)}$ with $L_0$-eigenvalue $h$. 
Let $\mf{g}$ be the algebra spanned by the zeromodes of the generators of $\CV$. Then the homogeneous spaces $\CV^{(n)}_h$ are $\mf{g}$-modules. 

The refined character of $\CV$ is then defined by\footnote{Note the difference in the exponent of $q$ compared to \cite[Eq. (1.8)]{Song:2016yfd}. The difference arises because of our conventions and can be related to \cite{Song:2016yfd} using $T=t/q$. Our conventions are summarized in Appendix \ref{app:conventions}.}
\begin{equation}\label{eq:ref_char_song}
Z_{\mathcal{V}}^{\mathsf{ref}}(\boldsymbol{z} ; q, t)=\sum_{n=0}^{\infty} \sum_{h\in\Z} \mr{ch}\left(\CV_h^{(n)} ; \boldsymbol{z}\right) q^{h-n} t^n~,
\end{equation}
where  
\begin{equation}
\mr{ch}\left(\CV_h^{(n)} ; \boldsymbol{z}\right)=\mr{Tr}_{\CV_h^{(i)}}\left(\prod_a \boldsymbol{z}_a^{\mr{F}_a}\right)    
\end{equation}
denotes the character of the (non-affine) symmetry $\mathfrak{g}$ on $\CV_h^{(n)}$. Here $\mr{F}_a$ are the Cartan generators of $\mathfrak{g}$. 
\begin{conj}\cite[Song]{Song:2016yfd}
The MacDonald index of a 4d $\CN=2$ \tsf{SCFT} with associated \tsf{VOA} $\CV$ is the refined index of $\CV$:
\begin{equation}\label{eq:song_conj}
    \mathcal{I}_M(\boldsymbol{z};q,t)=Z_{\mathcal{V}}^{\mathsf{ref}}(\boldsymbol{z} ; q, t)~.
\end{equation}
\end{conj}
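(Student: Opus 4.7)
The plan is to prove the first (and unconditional) assertion—that, with flavor fugacities set to $1$, the refined character $Z_\CV^{\mathsf{ref}}$ equals the Hilbert series of $J_\infty(R_\CV)$—and derive the MacDonald-index corollary from Song's conjecture \eqref{eq:song_conj}. The strategy for the main claim is to exhibit a bigraded vector space isomorphism $\mathrm{Gr}(\CV) \cong J_\infty(R_\CV)$ that matches the $(q^{h-n}, t^n)$ bigrading of the refined character with the natural bigrading on the arc space; once this is established, equality of generating functions is automatic, and applying Song's conjecture converts $Z_\CV^{\mathsf{ref}}$ into the MacDonald index $\CI_M$, yielding the second assertion.

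Throughout, take $R_\CV = \CV/C_2(\CV)$ with its inherited weight grading, so that the image $\bar u^i$ of a strong generator $u^i$ has weight $d_i$. On the VOA side, $\CV^{(n)} = \CV_n/\CV_{n-1}$ is spanned by the classes of ordered monomials $u^1_{(-n_1)} \cdots u^n_{(-n_n)}|0\rangle$, each of conformal weight $h = \sum_i(d_i + n_i - 1)$ and generator count $n$; the refined character with fugacities set to $1$ is then $\sum_{n,h}\dim(\CV_h^{(n)})\,q^{h-n}t^n$. On the geometric side, $J_\infty(R_\CV)$ is generated as a commutative algebra by the symbols $\partial^k \bar u^i$ for $k \geq 0$, and assigning to $\partial^k \bar u^i$ the bidegree $(d_i + k - 1,\, 1)$ produces a Hilbert series in which the exponent of $q$ is weight-minus-arc-level and the exponent of $t$ is arc level—exactly the conventions of $Z_\CV^{\mathsf{ref}}$.

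I would then construct the comparison map $\varphi: J_\infty(R_\CV) \to \mathrm{Gr}(\CV)$ sending $\partial^{n_1-1}\bar u^1 \cdots \partial^{n_m-1}\bar u^m$ to the class of $u^1_{(-n_1)} \cdots u^m_{(-n_m)}|0\rangle$ in $\CV_m/\CV_{m-1}$. It is bigraded by construction and surjective by strong finite generation. Well-definedness requires that the defining relations of $J_\infty(R_\CV)$—the ideal of $R_\CV$ together with all its $\partial$-derivatives—lift to vectors in the lower filtration step. A relation in $R_\CV$ comes from an element of $C_2(\CV)$, which by the normal-ordering identities rewrites as a combination of monomials in fewer than $m$ strong generators; applying $\partial$ to a relation corresponds in the VOA to raising a mode index, which preserves the generator count, so differentiated relations also map into $\CV_{m-1}$.

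The main obstacle is injectivity of $\varphi$: one must exclude hidden relations in $\mathrm{Gr}(\CV)$ beyond those of the arc space. Here I would adapt Haisheng Li's theorem identifying the associated graded under his $C_2$-filtration with $J_\infty(R_\CV)$ as a Poisson vertex algebra. Song's generator filtration $\CV_\bullet$ is coarser than Li's filtration, and the key technical step is a compatible bigraded PBW-type refinement: every element of $\CV_m$ admits a unique decomposition modulo $\CV_{m-1}$ as a linear combination of ordered monomials in the strong generators, with linear dependences dictated precisely by the arc-space ideal. Establishing this—by inducting on filtration degree and using Borcherds's commutator and associativity identities to reduce an arbitrary monomial to normal form modulo $\CV_{m-1}$—is the technical heart of the argument. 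Once proved, the two bigraded Hilbert series coincide term-by-term, proving the first assertion; combining with Song's conjecture \eqref{eq:song_conj} then delivers the MacDonald-index corollary and completes the theorem.
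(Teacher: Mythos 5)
There is a fundamental mismatch between what you set out to prove and the statement at hand. The statement is Song's conjecture itself, \eqref{eq:song_conj}: the equality of the 4d MacDonald index $\mathcal{I}_M$ with the 2d refined character $Z_{\mathcal{V}}^{\mathsf{ref}}$. The paper does not prove this — it is stated as a \emph{conjecture}, attributed to \cite{Song:2016yfd}, and is only used as an assumption elsewhere (e.g.\ in Theorem \ref{thm:Hil_ser_mac}) and checked against examples. Your proposal explicitly says ``applying Song's conjecture converts $Z_\CV^{\mathsf{ref}}$ into the MacDonald index $\CI_M$,'' i.e.\ you assume the very statement you are asked to establish. What you actually sketch is a proof of a \emph{different} result — the paper's Theorem \ref{thm:Hil_ser_mac}, identifying $Z_\CV^{\mathsf{ref}}$ (at trivial flavor fugacities) with the Hilbert series of the arc space of the Zhu algebra via the map \eqref{eq:voa_arc_space_zhu} — and that part of your argument is broadly in line with the paper's Section \ref{sec:Mac_Hil_ser_arc}. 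But it does not touch the content of \eqref{eq:song_conj}.

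The reason no such proof is available is also worth internalizing: Song's conjecture is a statement about 4d physics, not about the VOA alone. The refined character is built from a filtration that assigns weight $1$ to every strong generator, and as the paper notes this prescription ``is strictly not correct for all strong generators'' — the correct second grading is the image of the 4d $\mr{R}$-charge, and finding a purely 2d characterization of it is the open $\mr{R}$-filtration problem. Any argument carried out entirely inside $\CV$ (Li's filtration, PBW-type normal forms, arc spaces) can at best reproduce $Z_{\mathcal{V}}^{\mathsf{ref}}$ in another guise; it cannot show that this object counts the 4d Schur operators with the correct $t$-grading. If your goal were Theorem \ref{thm:Hil_ser_mac} instead, your outline would be a reasonable (and somewhat more careful, regarding injectivity via Li's filtration) route to the same isomorphism \eqref{eq:iso_ref_char_Hilbert_ser} the paper constructs; but as a proof of the stated conjecture it is circular.
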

We remark that in defining the refined character, we assigned the bigrading 
\begin{equation}
    \mr{wt}\left(u_{(-n_1)}^{1} u_{(-n_2)}^{2} \cdots u_{(-n_m)}^{m}|0\rangle\right)=(h-m,m)~,
\end{equation}
where $h$ is the conformal dimension of the state. In other words, the second grading was assigned as $w(u^i)=1$ to each of the strong generators. This is strictly not correct for all strong generators \cite{Song:2016yfd}. For generators of an affine vertex operator algebra and the stress tensor, it is 1 but in general it is not inherently defined in the 2d chiral algebra. If we could track the 4d origin of the strong generators, then there is a consistent way of assigning the weight. Finding a consistent 2d assignment of the second weight is called the $\mr{R}$-filtration problem. Up to these limitations, Song's conjecture gives us a prescription to compute the MacDonald index from the 2d chiral algebra.  
\subsection{The Arc Space Of Zhu Algebra And Its Hilbert Series}
In this section, we recall the notion of Zhu algebra, Hilbert series of arc space of a variety and prove that MacDonald index of 4d $\CN=2$ \tsf{SCFT} is given by the Hilbert series of the arc of space of the Zhu algebra of the corresponding \tsf{VOA}.
\subsubsection{Zhu Algebra Of A (Finitely Strongly Generated) \tsf{VOA}}
Let us begin by recalling the notion of the Zhu algebra of a \tsf{VOA}. 

Let $\CV$ be a \tsf{VOA}. Define the subspace 
\begin{equation}
    C_2(\CV):=\{a_{(-2)}\cdot b:a,b\in\CV\}~.
\end{equation}
Then the Zhu algebra of $\CV$ is defined by 
\begin{equation}
    A(\CV):=\CV/C_2(\CV)~.
\end{equation}
It is well known that $A(\CV)$ is a commutative associative algebra with the multiplication induced by the normal ordered product of vertex operators \cite{Zhu1995ModularIO}. The \tsf{VOA} $\CV$ is called $C_2$-\textit{cofinite} (or \textit{lisse}) if $A(\CV)$ is finite dimensional. Note that $A(\CV)$ inherits a $\Z$-grading from that of $\CV$. 

It is clear that $\CV$ is finitely strongly generated if and only if $A(\CV)$ is finitely generated. In particular, if $\CV$ is generated by $U$ with basis 
\begin{equation}
    \mathcal{B}_U=\{u^1,\dots,u^N\}~,
\end{equation}
then 
\begin{equation}
    A(\CV)\cong\C[x^{(1)},\dots,x^{(N)}]/I~,
\end{equation}
for some ideal $I$. We want to determine the ideal $I$. Roughly speaking, the ideal corresponds to all the singular vector relations among the vertex operators of $u^1,\dots,u^N$. It was conjectured in  in \cite{Xie:2019zlb} that when $\CV$ is a $\mathcal{W}$-algebra then $I$ is generated by these null state relations. We now show that $I$ is generated by the singular vector relations assuming that the conformal vector and vacuum are not singular vectors. This proves a conjecture in \cite{Xie:2019zlb}. Let $\CN$ be the ideal of 
\begin{equation}
\ov{\CV}:= \mathsf{Span}\left\{u^{i_1}_{(-n_1)}\cdots u^{i_k}_{(-n_k)}|0\rangle:n_i\geq 1,k\geq 0\right\}~.   
\end{equation}
generated by the singular vectors, so that $\CV=\ov{\CV}/\CN$. 
By \cite[Proposition 1.4.2]{frenkel1992vertex}, we have 
\begin{equation}
A(\CV)=A(\ov{\CV})/A(\CN)~,    
\end{equation}
where $A(\CN)$ is the image of $\CN$ in $A(\ov{\CV})$ under the projection map. It is clear that the map
\begin{equation}
\begin{split}
    A(\ov{\CV})&\longrightarrow\C[x^{(1)},\dots,x^{(N)}] 
    \\
    [u^{i_1}_{(-1)}\cdots u^{i_k}_{(-1)}|0\rangle]&\longmapsto x^{(i_1)}\cdots x^{(i_k)},~~~~~k\geq 0~,
\end{split} 
\end{equation}
is an isomorphism of algebras. 
We claim that $A(\CN)$ is generated by singular vector relations. Given a singular vector, it corresponds to a polynomial 
\begin{equation}
    F(Y(u^1,z),\partial Y(u^1,z),\dots;Y(u^2,z),\partial Y(u^2,z),\dots; Y(u^N,z),\partial Y(u^N,z),\dots)~,
\end{equation}
where $\dots$ mean higher order derivatives of the vertex operators, 
and the product being normal ordered product of vertex operators. In the Zhu algebra, these polynomials are equivalent to  
\begin{equation}
    F'(Y(u^1,z),Y(u^2,z),\dots,Y(u^N,z))~,
\end{equation}
obtained from $F$ by setting all the monomials containing atleast one derivative to zero. Since $\CN$ is generated by singular vectors,  $A(\CN)$ is generated by the corresponding polynomials.
\subsubsection{Arc Spaces And Hilbert Series}
Let us now define the arc space of an algebraic variety
\begin{equation}\label{eq:variety_R}
    R=\C[x^{(1)},\dots,x^{(N)}]/\langle f_1,\dots,f_K\rangle~.
\end{equation}
We will follow \cite{bruschek2011arc} for our exposition. 
Suppose $R$ is graded with the grading defined by declaring 
\begin{equation}
    \mr{wt}(x^{(i)})=d_i\geq 1~.
\end{equation}
Without loss of generality, we can assume that the polynomials $f_1,\dots,f_K$ are homogeneous with respect to the above grading. We begin by writing the formal power series
\begin{equation}
    x^{(i)}(s)=\sum_{n=d_i-1}^\infty x^{(i)}_ns^n\in\left(\C[[s]]\right)^N~,\quad i=1,\dots,N
\end{equation}
Let us write the variables as a vector 
\begin{equation}
\vec{x}(s)=\left(x^{(1)}(s),\dots,x^{(N)}(s)\right)~.    
\end{equation}
For $n\in\mathds{Z}_{\geq 0}$, the $n$-th Jet space of $R$ is defined by
\begin{equation}
    R_n:=\C[x^{(1)}_{d_1-1},\dots,x^{(1)}_n,x^{(2)}_{d_2-1},\dots,x^{(2)}_n,\dots,x^{(N)}_{d_N-1},\dots,x^{(N)}_n]/I_n~,
\end{equation}
where 
\begin{equation}
    I_n:=\left\langle f_1(\vec{x}(s)),\dots,f_K(\vec{x}(s))\equiv 0\bmod s^{n+1}\right\rangle~.
\end{equation}
The $n$-th Jet scheme is defined as 
\begin{equation}
    X_n(R):=\mr{Spec}\,R_n~.
\end{equation}
The \textit{arc space} $X_\infty(R)$ of $R$ is defined to be 
\begin{equation}
X_\infty(R)=\lim_{\leftarrow}X_n(R)\cong\mr{Spec}\,R_\infty~,     
\end{equation}
where 
\begin{equation}
 R_\infty:=\C[x^{(1)}_{i_1},\dots,x^{(N)}_{i_N},i_k\geq d_{k}-1]/\left\langle f_1(\vec{x}(s)),\dots,f_K(\vec{x}(s))\right\rangle~.   
\end{equation}
$R_\infty$ admits a filtration
\begin{equation}
R_0\subset R_1\subset R_2\subset\dots    
\end{equation}
Assign bigrading to $R_\infty$ by declaring 
\begin{equation}\label{eq:wt_arc_space_var}
    \mr{wt}^{(2)}(x_j^{(i)}):=\left(\mr{wt}_1(x_j^{(i)}),\mr{wt}_2(x_j^{(i)})\right):=(j,1)~.
\end{equation}
The Hilbert series of $R_\infty$ is then defined by 
\begin{equation}\label{eq:Hilber_ser_def}
H_{R_\infty}(q,t)=\sum_{\ell,m\geq 0} \mathsf{dim}(R_\infty)_{\ell,m}q^\ell t^m~,    
\end{equation}
where $(R_\infty)_{\ell,m}\subset R_\infty$ is the homogeneous subspace with weight $(\ell,m)$.

When $R=A(\CV)$ is the Zhu algebra of a \tsf{VOA}, then the grading is simply the conformal dimension of the operators. 
\subsubsection{Proof Of Theorem \ref{thm:Hil_ser_mac}}
We now show that the Hilbert series of $(A(\CV))_\infty$ is the unflavored MacDonald index of the 4d $\CN=2$ \tsf{SCFT} with chiral algebra $\CV$ assuming that $\CV$ is strongly finitely generated. Let us begin by noting that 
\begin{equation}
    \mr{ch}(\CV^{(n)}_h;1)=\mr{dim}~\CV^{(n)}_h~.
\end{equation}
We now show that 
\begin{equation}\label{eq:hom_space_voa_arc_iso}
\CV^{(n)}_h\cong  \left((A(\CV))_\infty\right)_{h-n,n}~.   
\end{equation}
Let $u^1,\dots,u^N$ be a set of strong generators of $\CV$ with conformal dimension $d_1,\dots,d_N$ respectively, and $x^{(1)},\dots,x^{(N)}$ be the corresponding generators in the Zhu algebra. 

By definition, $\CV^{(n)}$ is given by 
\begin{equation}
    \CV^{(n)}_h=\mr{Span}\left\{u_{(-k_1)}^{i_1} u_{(-k_2)}^{i_2} \cdots u_{(-k_n)}^{i_n}|0\rangle: k_1 \geq k_2 \geq \ldots \geq k_n\geq 1;\sum_{i=1}^n(d_i+k_i-1)=h \right\}/\{\mr{null~~ vectors}\}~.
\end{equation}
Define the map
\begin{equation}\label{eq:voa_arc_space_zhu}
\begin{split}
    \mr{Gr}(\CV)&\longrightarrow (A(\CV))_\infty
    \\
    [u_{(-k_1)}^{i_1} u_{(-k_2)}^{i_2} \cdots u_{(-k_n)}^{i_n}|0\rangle]&\longmapsto [x^{(i_1)}_{k_1+d_{i_1}-2}\cdots x^{(i_n)}_{k_n+d_{i_n}-2}]~,
\end{split}    
\end{equation}
where $[\cdot]$ indicated the equivalence class.  Note that this restricts to a map 
\begin{equation}\label{eq:iso_ref_char_Hilbert_ser}
\CV^{(n)}_h\longrightarrow  \left((A(\CV))_\infty\right)_{h-n,n}~.    
\end{equation}
We first show that the map is well defined. Let us write
\begin{equation}
    A(\CV)=\C[x^{(1)},\dots,x^{(N)}]/\langle f_1,\dots,f_K\rangle~,
\end{equation}
where $f_1,\dots,f_K$ are homogeneous polynomials with respect to the induced grading on $A(\CV)$, corresponding to $K$ singular vectors in 
\begin{equation}
    \mathsf{Span}\left\{u^{i_1}_{(-n_1)}\cdots u^{i_k}_{(-n_k)}|0\rangle:n_i\geq 1,k\geq 1\right\}~.
\end{equation}
These polynomials correspond to polynomials 
\begin{equation}
    F_i(Y(u^1,z),\partial Y(u^1,z),\dots;Y(u^2,z),\partial Y(u^2,z),\dots; Y(u^N,z),\partial Y(u^N,z))~,\quad i=1,\dots,K~.
\end{equation}
with the product being normal ordered product of vertex operators obtained from singular vectors. In the Zhu algebra, these polynomials are equivalent to  
\begin{equation}\label{eq:Fi'_sing_rel}
    F'_i(Y(u^1,z),Y(u^2,z),\dots,Y(u^N,z))~,\quad i=1,\dots,K~,
\end{equation}
obtained from $F_i$ by setting all the monomials containing atleast one derivative to zero. In particular, note that $F_i'=f_i$. Now, noting that only the  modes $u^i_{(-n)}$ with $n\geq 1$ are relevant for $\CV$, we can make an identification 
\begin{equation}
    \sum_{n\geq 1}u^i_{(-n)}z^{n-1}\longleftrightarrow x^{(i)}(s)~,
\end{equation}
to conclude that the ideal generated by singular vectors on the \tsf{LHS} is equivalent to the ideal 
\begin{equation}
    \left\langle f_1(\vec{x}(s)),\dots,f_K(\vec{x}(s))\right\rangle~.
\end{equation}
Indeed, there are two ways of obtaining null vectors in $\CV^{(n)}_h$: 
\begin{enumerate}
    \item From a singular vector at level $h$ corresponding to a polynomial $F_i'$ of homogeneous degree $n$. The the coefficient of the lowest power of $s$ in the expansion of $f_i(\vec{x}(s))$ implements this constraint in $\left((A(\CV))_\infty\right)_{h-n,n}$.
    \item Taking a normal ordered product of other operators with $F_j'$ of degree lower than $n$  which corresponds to a singular vector at level less than $h$ and acting it on the vacuum. On the arc space side, this corresponds to forming polynomials of degree $n$ by multiplying monomials to coefficients of $f_j$ in the expansion in $s$ to form a degree $n$ polynomial.   
\end{enumerate}
These two are precisely the ways constraints appear in the arc space side. This establishes the isomorphism \eqref{eq:iso_ref_char_Hilbert_ser}. In particular, by definition of the Hilbert series \eqref{eq:Hilber_ser_def} and \eqref{eq:song_conj} we see that 
\begin{equation}
\mathcal{I}_M(1;q,t)= H_{A(\CV)_\infty}(q,t)~.   
\end{equation}
\subsection{Computation Of The Hilbert Series}\label{sec:comp_Hilb_ser}
In this section, we explain an algorithmic procedure for the computation of the Hilbert series of the arc space of a variety. The algorithm uses Gr\"{o}bner basis for a finitely generated ideal, relevant definitions and results have been reviewed in Appendix \ref{app:Grob}. 

Assume the notation of \eqref{eq:variety_R} -- \eqref{eq:Hilber_ser_def}. The Hilbert series of $R_n$ is defined as
\begin{equation}
H_{R_n}(q, t)=\sum_{\ell=0}^n \sum_{m \geq 0} \mr{dim}\left(R_n\right)_{\ell, m} q^{\ell} t^m~.
\end{equation}
Taking $n \rightarrow \infty$, we get the Hilbert series of  $R_{\infty}$:
\begin{equation}
H_{R_{\infty}}(q, t)=\sum_{\ell, m \geq 0} \mr{dim}\left(R_{\infty}\right)_{\ell, m} q^{\ell} t^m~.
\end{equation}
One way of computing the Hilbert series of an $n$-the jet space $R_n$ is by using Gr\"{o}bner basis of $I_n$.
Let us denote the Gr\"{o}bner basis of $I_n$ by $G_n$. Note that $G_n$ depends on the choice of ordering. Let us denote the set of leading monomials of the polynomials in $G_n$ by
\begin{equation}
\mr{LM}(G_n):=\{\mr{LM}(g)\mid g\in G_n\}~.
\end{equation}
We call the monomials in $\mr{LM}(G_n)$ as \textit{constraints}. 
Then, we can find a canonical basis of $R_n$ as explained in Appendix \ref{app:Grob}. Let us denote a monomial in $\C[x^{(1)}_{d_1-1},\dots,x^{(1)}_n,x^{(2)}_{d_2-1},\dots,x^{(2)}_n,\dots,x^{(N)}_{d_N-1},\dots,x^{(N)}_n]$ by
\begin{equation}
    X_1^{\vec{a}_1}X_2^{\vec{a}_2}\cdots X_N^{\vec{a}_N}~,\quad \vec{a}_k\in (\IZ_{\geq 0})^n
\end{equation}
where
\begin{equation}
    X_k^{\vec{a}_k}:=(x_{d_k-1}^{(k)})^{a_k^1}(x_{d_1}^{(k)})^{a_k^2}\cdots (x_n^{(k)})^{a_k^n}~.
\end{equation}
We call such monomials as \textit{words}. 
Then a canonical basis of $R_n/I_n$ is given by 
\begin{equation}\label{eq:basis_Rn/I_n}
    \CB_{R_n}:=\{\text{monomial}~~X_1^{\vec{a}_1}X_2^{\vec{a}_2}\cdots X_N^{\vec{a}_N}:g\nmid X_1^{\vec{a}_1}X_2^{\vec{a}_2}\cdots X_N^{\vec{a}_N}\text{ for all }g\in \mr{LM}(G_n)\}~,
\end{equation}
i.e., words modulo constraints. 
Let $(\CB_{R_n})_{\ell,m}\subseteq \CB_{R_n}$ be the subset of monomials of weight $(\ell,m)$. Then the Hilbert series for $R_n$ can be written as 
\begin{equation}
H_{R_n}(q, t)=\sum_{\ell=0}^n \sum_{m \geq 0} |(\CB_{R_n})_{\ell,m}|q^{\ell} t^m~.
\end{equation}
Taking $n \rightarrow \infty$, we get the Hilbert series of  $R_{\infty}$:
\begin{equation}
H_{R_{\infty}}(q, t)=\sum_{\ell, m \geq 0} |(\CB_{R_n})_{\ell,m}| q^{\ell} t^m~.
\end{equation}
In Appendix \ref{app:Hilb_ser_count}, we prove the following formula for the Hilbert series of $R_n$: 
\begin{equation}
\label{eq: hilbert_ser_Rn_gen}
H_{R_n}(q,t) = \mr{Ser}_{q=0,t=0}\frac{1+\sum_{S\subseteq \mr{LM}(G_n)}(-1)^{|S|}(q,t)^{\mr{wt}^{(2)}(\mr{LCM}(S))}}{\prod_{i=1}^N\prod_{j\geq 0}\left(1-q^{(d_i-1)+j}t\right)} \quad (\mr{mod} \;q^{n+1})~,
\end{equation}
where $\mr{Ser}_{q=0,t=0}(f(q,t))$ denotes the power series of $f(q,t)$ around $q=t=0$, the summation in the numerator is over all subsets of $\mr{LM}(G_n)$, $\mr{LCM}(S)$ is least common multiple of the monomials in $S$, and we have defined
\begin{equation}
(q,t)^{(\ell,m)} := q^{\ell} t^m ~.   
\end{equation}
The full Hilbert series is then given by taking $n\to \infty$:
\begin{equation}
\label{eq: hilbert_ser_Rinfty_gen}
H_{R_\infty}(q,t) = \mr{Ser}_{q=0,t=0}\frac{1+\sum_{S\subseteq \mr{LM}(G_\infty)}(-1)^{|S|}(q,t)^{\mr{wt}^{(2)}(\mr{LCM}(S))}}{\prod_{i=1}^N\prod_{j\geq 0}\left(1-q^{(d_i-1)+j}t\right)}~,
\end{equation}
where 
\begin{equation}
    \mr{LM}(G_\infty):=\lim _{\leftarrow } \mr{LM}(G_n)~.
\end{equation}
This formula will be primarily used in Section \ref{sec:Hilb_arc_MacD3} for computation of MacDonald index of $(A_1,D_3)$ theory. 
\subsubsection{Example: $(A_1,A_{2n})$ Argyres-Douglas Theory}
In this section, we work out the MacDonald index of $(A_1,A_{2n})$ theory using the Hilbert series of arc space of the Zhu algebra of the corresponding \tsf{VOA} as a demonstration of Theorem \ref{thm:Hil_ser_mac}. \par 
The MacDonald index for $(A_1,A_{2n})$ theory is given by \cite{Buican:2015tda}
\begin{equation}\label{eq:Mac_A_1A2n}
    \CI_{\mr{M}}^{(A_1,A_{2n})}(q,t)=\sum_{N_1\geq N_2\geq\dots\geq N_{n}\geq 0}t^{N_1+N_2+\dots+N_{n}}\frac{q^{N_1^2+\dots+N_{n}^2}}{(q;q)_{N_1-N_2}\dots (q;q)_{N_{n-1}-N_{n}}(q;q)_{N_{n}}}~.
\end{equation}
The associated \tsf{VOA} of the $(A_1,A_{2n})$ Argyres-Douglas theories is \cite{beem2014infinite,Beem:2017ooy}
\begin{equation}
    \CV_{(A_1,A_{2n})}=\mr{Vir}\left(c=-\frac{2 n(5+6 n)}{3+2 n},0\right)~,
\end{equation}
where $\mr{Vir}(c,0)$ denotes the Virasoro \tsf{VOA} with central charge $c$. The Zhu algebra of $\CV_{(A_1,A_{2n})}$ is given by \cite{Beem:2017ooy} 
\begin{equation}\label{eq:zhuA1A2}
A(\CV_{(A_1,A_{2n})})\cong \C[J]/\langle J^{n+1}\rangle~.    
\end{equation}
Here $J$ is the variable corresponding to the stress tensor in the \tsf{VOA}.
The MacDonald index for $(A_1,A_2)$ theory was reproduced from the Hilbert series for the arc space of $A(\CV_{(A_1,A_{2})})$ in \cite{Bhargava:2023hsc}, the physical reasoning was an exact map of operators of the theory to the free vector multiplet. Here we reproduce some of the calculations from \cite{Bhargava:2023hsc} but using the algorithm given in Section \ref{sec:comp_Hilb_ser} and show that it reproduces the MacDonald index in line with our general Theorem \ref{thm:Hil_ser_mac}. For $(A_1,A_4)$ theory, we compute the Hilbert series to $O(t^6)$ and show that it matches with the MacDonald index. For $n>1$, we have done numerical checks and confirmed that it reproduces the MacDonald index.  
\\\\
$\underline{\boldsymbol{n=1}.}$ The Hilbert series for the arc space of \eqref{eq:zhuA1A2} for $n=1$ was computed rigorously in  \cite{bruschek2011arc,bai2020quadratic}. Here we present a computation using our formula \eqref{eq: hilbert_ser_Rinfty_gen}.  We start with the formal series 
\begin{equation}
    J(x)=\sum_{m=1}^\infty J_mx^m~.
\end{equation}
We choose the ordering of variables to be
\begin{equation}
J_1>J_2>\dots
\end{equation}
With the degree lexicographic ordering of monomials, Mathematica calculations give is the leading terms of the Gr\"{o}bner basis to be
\begin{equation}
\mr{LM}\left(G_s^{\left(A_1, A_2\right)}\right)=\bigcup_{m \geq 2} \bigcup_{0 \leq \ell \leq s} I_{\ell, m}^{\left(A_1, A_2\right)}~,
\end{equation}
where $I_{\ell, m}^{\left(A_1, A_2\right)}$ is given by
\begin{equation}
I_{\ell, m}^{\left(A_1, A_2\right)}:=\{J_{m-1} \}\cdot F_{m-1}(\ell-m+1, m-1)~,
\end{equation}
where for sets $A,B$ we have introduced the notation $A\cdot B:=\{ab:a\in A,b\in B\}$, $F_s(\ell, m)$ is the set of all monomials generated by $\left\{J_i: i \geq s\right\}$ of weight $(\ell, m)$.
The set $F_m(\ell, m)$ is thus the set of monomials of weight $(\ell, m)$. We now want to count the number of basis elements of $(A(\CV_{(A_1,A_2)})_n/I_n)_{\ell,m}$ obtained from Gr\"{o}bner basis using the formula \eqref{eq:basis_Rn/I_n}. The set $F(\ell,m)$ of all monomials of weight $(\ell,m)$ are given by 
\begin{equation}
    F(\ell,m):=\left\{J_{k_1}^{a_1}\cdots J_{k_s}^{a_s}:s\geq 0,k_1\leq k_2\leq\dots\leq k_s,\sum_{i=1}^sa_ik_i=\ell,\sum_{i=1}^sa_i=m\right\}~.
\end{equation}
From $F(\ell,m)$, we need to remove the monomials
\begin{equation}
\{J_k\}\cdot F_k(\ell-k,m-1)~,\quad k=1,\dots,m-1~,    
\end{equation}
because these monomials are part of the leading monomials of the Gr\"{o}bner basis. This means that that 
\begin{equation}
|(A(\CV_{(A_1,A_2)})_n/I_n)_{\ell,m}|=|F_m(\ell,m)|~.    
\end{equation}
The Hilbert series is then given by
\begin{equation}
H_{(A(\CV_{(A_1,A_2)})_\infty}(q,t)=\sum_{\ell \geq 0, m \geq 0}\left|F_m(\ell, m)\right| q^\ell t^m~.
\end{equation}
We now claim that 
\begin{equation}
    \sum_{\ell \geq 0, m \geq 0}\left|F_m(\ell, m)\right| q^\ell t^m=\sum_{m \geq 0} \frac{q^{m^2}}{(q)_m} t^m~,
\end{equation}
where 
\begin{equation}
    (q)_m:=(q;q)_m=\prod_{i=1}^m(1-q^i)~.
\end{equation}
To prove this, let us first define 
\begin{equation}
\label{def:Ism}
\mathcal{I}_m^s(q):=\sum_{\ell \geq 0}\left|F_s(\ell, m)\right| q^\ell~.
\end{equation}
Then we claim that
\begin{equation}\label{eq:Ism}
\mathcal{I}_m^s(q)=\frac{q^{sm}}{(q)_m}~.
\end{equation}
The generating function for $|F(\ell,m)|$ is given by 
\begin{equation}
    \sum_{\ell,m\geq 0}|F(\ell,m)|q^\ell t^m=\mr{Ser}_{t=0, q=0} \frac{1}{\prod_{i=1}^{\infty}\left(1-q^it\right)}~.
\end{equation}
It is also easy to see that 
\begin{equation}\label{eq:free_gen_ser_A1A4}
\mr{Ser}_{t=0, q=0} \frac{1}{\prod_{i=s}^{\infty}\left(1-q^it\right)}=\sum_{\ell,m\geq 0}|F_s(\ell,m)|q^\ell t^m=\sum_{m\geq 0}\CI_m^s(q)t^m~.    
\end{equation}
Using the formula 
\begin{equation}
    \frac{1}{(z;q)_\infty}=\sum_{m\geq 0}\frac{z^m}{(q;q)_m}~,
\end{equation}
for $z=tq$, we deduce that 
\begin{equation}\label{eq:I1m}
    \CI^1_{m}(q)=\frac{q^m}{(q)_m}~.
\end{equation}
It is easy to see that 
\begin{equation}\label{eq:Is0}
    \CI^s_0=1~.
\end{equation}
Next note that 
\begin{equation}
 \left(1-q^s\right) \sum_{m \geq 0} \mathcal{I}_m^s(q) t^m=\sum_{m \geq 0} \mathcal{I}_m^{s+1}(q) t^m ~.  
\end{equation}
This gives us the recursion relation
\begin{equation}
\mathcal{I}_m^{s+1}(q)=\mathcal{I}_m^s(q)-q^s \mathcal{I}_{m-1}^s(q)~,\quad m>0~.
\end{equation}
The solution to this recursion relation subject to initial conditions \eqref{eq:I1m} and \eqref{eq:Is0} is given by \eqref{eq:Ism}. Indeed
\begin{equation}
\mathcal{I}_m^{s+1}(q)=q^{s m}\left(\frac{1}{(q)_m}-\frac{1}{(q)_{m-1}}\right)=\frac{q^{(s+1)_m}}{(q)_m}~.
\end{equation}
This proves that 
\begin{equation}\label{eq:Hil_A1A2}
\sum_{\ell \geq 0, m \geq 0}\left|F_m(\ell, m)\right| q^\ell t^m=\sum_{m\geq 0}\CI_m^m(q)t^m=\sum_{m \geq 0} \frac{q^{m^2}}{(q)_m} t^m~.    
\end{equation}
Thus the Hilbert series \eqref{eq:Hil_A1A2} matches with the MacDonald index \eqref{eq:Mac_A_1A2n} of $(A_1,A_2)$ theory. 
\\\\
\underline{$\boldsymbol{n=2}$.} For $(A_1,A_4)$ theory, using numerical checks we obtain the following expression for the Hilbert series of the arc space of $A(\CV_{(A_1,A_4)})$:
\begin{equation}
H_{(A(\CV_{(A_1,A_4)})_\infty}(q,t)=\sum_{m \geq 0}\CI^{(A_1,A_4)}_m(q) t^m~,   
\end{equation}
where 
\begin{equation}
    \mathcal{I}_m^{\left(A_1, A_4\right)}= \begin{cases}\mathcal{I}_m^{m / 2}+\sum_{i=1}^{(m-2) / 2} I_m^{2 i, m / 2-i}, & m \text { is even }, \\ \mathcal{I}_m^{(m+1) / 2}+\sum_{i=1}^{(m-1) / 2} I_m^{2 i-1,(m+1) / 2-i}, & m \text { is odd } ,\end{cases}
\end{equation}
and $I_m^{c,s}(q)$ can be obtained from the recursion relation:
\begin{equation}
I_m^{c, s}= \begin{cases}q^s \mathcal{I}_{m-1}^{s+1}, & c=1, \\ q^s \mathcal{I}_{m-1}^{s+2}, & c=2,3, \\ q^s \mathcal{I}_{m-1}^{s+c+1}+q^s \sum_{i=1}^{c / 2-1} I_{m-1}^{2 i-1, s+c / 2+1-i}, & c>3 \text { and even }, \\ q^s \mathcal{I}_{m-1}^{s+(c+1) / 2}+q^s \sum_{i=1}^{(c-3) / 2} I_{m-1}^{2 i, s+(c+1) / 2-i}, & c>3 \text { and odd~. }\end{cases}    
\end{equation}
See Appendix \ref{app:Hilb_A1A4} for details of the calculation. 
This formula matches the MacDonald index \eqref{eq:Mac_A_1A2n} for the theory $(A_1,A_4)$ up to $O(t^{16})$. More concrelty, we can change variables to $N_1+N_2=m,N_2=k$ and write the MacDonald index as
\begin{equation}
    \CI_{\mr{M}}^{(A_1,A_4)}(q,t)=\sum_{m\geq 0}t^m\left(\sum_{k=0}^{\lfloor\frac{m}{2}\rfloor}\frac{q^{(m-k)^2+k^2}}{(q;q)_{m-2k}(q;q)_k}\right)~.
\end{equation}
Then, our formula for MacDonald index using the Hilbert series predicts the following identity:
\begin{equation}
\mathcal{I}_m^{\left(A_1, A_4\right)}(q)=\sum_{k=0}^{\lfloor\frac{m}{2}\rfloor}\frac{q^{(m-k)^2+k^2}}{(q;q)_{m-2k}(q;q)_k}~.    
\end{equation}
This was checked numerically for large values of $m$. 
We conjecture that this is true for all $m$. 
\section{Macdonald Index For $(A_1,D_{3})$ Theory}\label{sec:A1D3_Mac_conj}
In this section, we conjecture a very simple expression for the MacDonald index of $(A_1,D_3)$ and provide powerful checks of our formula. Our starting point is the observation, supported by Higgs branch \tsf{RG} flow explained in Section \ref{sec:RG_flow}, that the MacDonald index for the $(A_1,D_3)$ theory must be related to the MacDonald index for the (projected) free hypermultiplet (see Footnote \ref{foot:proj_hyper} for a remark about the term ``projected''.). 
\subsection{Relation To Free Hypermultiplet Index}\label{sec:D3_free_hyper_rel}
We begin by reviewing the MacDonald index for (projected) free hypermultiplet.
\subsubsection{Free Hypermultiplet}
The 4d $\mathcal{N}=2$ hypermultiplet consists of two 4d $\mathcal{N}=1$ chiral (and anti-chiral) multiplets $Q$ $(q,q^\dagger,\psi,\bar\psi)$ and 
$\widetilde{Q}$ $(\tilde{q},\tilde{q}^\dagger, \chi,\bar\chi)$.
$(q,\tilde{q}^\dagger)$ and 
$(- \tilde{q},q^\dagger)$
transform as doublets of $\mr{SU(2)_R}$. We choose the $\mr{U(1)}_r$ charge of $q$ and 
$\widetilde{q}$ to be zero. 
There is a global $\mr{U(1)_B}$ baryonic symmetry that acts differently on each half-hypermultiplet, 
\begin{equation}
 Q \to e^{i\varphi } Q,\quad \widetilde{Q}\to e^{-i\varphi} \widetilde Q~.   
\end{equation}
The letters contributing the index of the free 
$\mathcal{N}=2$ (half) hypermultiplet  are shown in Table \ref{tab:half_hyper_charge}.
\begin{table}[H]
    \centering
    \begin{tabular}{|c|c|c|c|c|c|c|}
    \hline
        \text{Letters} & 
        $E$ &
        $j_1$& $j_2$ & $R$ & $r$ &
        $\mathcal{I}(p,q,t)$\\
        \hline
         $q$& $1$ & $0$ & $0$ & $\frac{1}{2}$ &  $0$ & $t^{1/2}$\\
         $\bar\psi_{\dot +}$& 
         $\frac{3}{2}$
         & $0$ & $\frac{1}{2}$ & $0$ &$-\frac{1}{2}$  &
         $-pqt^{-1/2} $
         \\
         $\partial_{\pm \dplus}$
         & $1$ & $\pm \frac{1}{2}$ &$\frac{1}{2}$  & $0$ & $0$ & $p,q$\\
         \hline
    \end{tabular}
    \caption{Charge assignments of various operators in free (half) hypermultiplet.}
    \label{tab:half_hyper_charge}
\end{table}
\noindent We have the following single particle indices for free hypermultiplet
\begin{equation}
    \mathcal{I}^{\mr{HM}}_{s . p .}=t^{\frac{1}{2}} \frac{1-\frac{p q}{t}}{(1-p)(1-q)}\left(a+a^{-1}\right) \chi_{\Lambda}(\boldsymbol{x})~,
\end{equation}
where $a$ is the $U(1)_B$ fugacity and 
$\chi_\Lambda(\boldsymbol{x})$ is the character of the representation of some global symmetry.
In this paper, we will ignore flavor fugacities and consequently take $\boldsymbol{x} = 1,a=1$. We will take $a\to 1$ at the end. 
We get the superconformal index of a free hypermultiplet 
\begin{equation}
\begin{aligned}
    \mathcal{I}_{\mr{SC}}^{\mr{HM}}(p,q,t;a ) &= 
    \mathsf{PE}\left[
    t^{\frac{1}{2}} \frac{1-\frac{p q}{t}}{(1-p)(1-q)}\left(a+a^{-1}\right)
    \right]\\
    & = 
    \Gamma(t^{\frac{1}{2}} a; p,q)
    \Gamma(t^{\frac{1}{2}} a^{-1}; p,q)~,
\end{aligned}
\end{equation}
where the plethystic exponential is given by 
\begin{equation}
    \mr{PE}[f(x, y, \cdots)]:=\exp \left[\sum_{\ell=1}^{\infty} \frac{1}{\ell} f\left(x^{\ell}, y^{\ell}, \cdots\right)\right]~,
\end{equation}
and the elliptic gamma function is defined by 
\begin{equation}
    \Gamma(z ; p, q):=\mathrm{PE}\left[\frac{z-\frac{p q}{z}}{(1-p)(1-q)}\right]=\prod_{i, j=0}^{\infty} \frac{1-p^{i+1} q^{j+1} z^{-1}}{1-p^i q^j z}~.
\end{equation}
The Macdonald index of a free hypermultiplet is obtained by taking $p\to 0$ in the superconformal index: 
\begin{equation}
    \mathcal{I}_{\mr{SC}}^{\mr{HM}}( p\to 0, q,t; a) = 
    \prod_{n=0}^\infty 
    \frac{1}{\left(1-q^n t^{\frac{1}{2}} a\right)}
    \frac{1}{\left(1-q^n t^{\frac{1}{2}} a^{-1}\right)}.
\end{equation}
Taking $a \to 1$, we can draw a coefficient diagram for Macdonald index of a free hypermultiplet as in Table \ref{tab:coeff_mac_FH}.
\begin{table}[h]
    \centering
    \begin{tabular}{|c|ccccccccccc|}
    \hline
   Variables & $t^0$ & $t^{1/2}$ 
    & $t^{1}$ & $t^{3/2}$
    & $t^{2}$ & $t^{5/2}$
    & $t^{3}$ & $t^{7/2}$
    & $t^{4}$ & $t^{9/2}$
    & $t^{5}$ \\
    \hline
 $q^{0}$ &1 & 2 & 3 & 4 & 5 & 6 & 7 & 8 & 9 & 10 & 11 \\
 $q^{1}$ &0 & 2 & 4 & 6 & 8 & 10 & 12 & 14 & 16 & 18 & 20 \\
 $q^{2}$ &0 & 2 & 7 & 12 & 17 & 22 & 27 & 32 & 37 & 42 & 47 \\
 $q^{3}$ &0 & 2 & 8 & 18 & 28 & 38 & 48 & 58 & 68 & 78 & 88 \\
 $q^{4}$ &0 & 2 & 11 & 26 & 46 & 66 & 86 & 106 & 126 & 146 & 166 \\
 $q^{5}$ &0 & 2 & 12 & 34 & 64 & 100 & 136 & 172 & 208 & 244 & 280 \\
 $q^{6}$ &0 & 2 & 15 & 46 & 94 & 152 & 217 & 282 & 347 & 412 & 477 \\
 $q^{7}$ &0 & 2 & 16 & 56 & 124 & 214 & 316 & 426 & 536 & 646 & 756 \\
 $q^{8}$ &0 & 2 & 19 & 70 & 167 & 302 & 464 & 640 & 825 & 1010 & 1195 \\
 $q^{9}$ &0 & 2 & 20 & 84 & 212 & 406 & 648 & 922 & 1212 & 1512 & 1812 \\
 $q^{10}$ &0 & 2 & 23 & 100 & 271 & 542 & 899 & 1314 & 1766 & 2236 & 2717 \\\hline
\end{tabular}
\caption{Table of first few coefficients of the Macdonald index for a free hypermultiplet.}
\label{tab:coeff_mac_FH}
\end{table}
The unflavoured index can also be written as an infinite sum
\begin{equation}
    \CI_{\mr{M}}^{\mr{HM}}(q,t)=\sum_{2 n=0}^{\infty} t^n \sum_{k=0}^{2 n} \frac{1}{(q ; q)_k(q ; q)_{2 n-k}}~.
\end{equation}
The projected free hypermultiplet is obtained from the free hypermultiplet by projecting out the operators with half-integral $\mr{U(1)}_r$-charge\footnote{\label{foot:proj_hyper}More precisely, the free hypermultiplet has a global $\mathds{Z}_2\subset \mr{U(1)}_r$ symmetry. The projected free hypermultiplet that we will use throughout this work is obtained by projecting onto the local operators invariant under this symmetry. However, we note that we only work with the Schur ring of these theories, so for our purposes, this is not a projection (or $\mathds{Z}_2$-gauging) at the level of the full theory, but merely a convenient way to set up the operator map between the Schur rings of the \tsf{UV} and \tsf{IR} theories. We do not set up enough structure to treat the projected free hypermultiplet as a gauged version of the free hypermultiplet.}.
Consequently, we need to remove the columns with $t^{n+\frac{1}{2}},~n\in\Z_{\geq 0}$ from Table \ref{tab:coeff_mac_FH} to obtain the coefficient list of projected free hypermultiplet, we reproduce it here for later reference in Table \ref{tab:coeff_mac_proj_FH}.
\begin{table}[h]
    \centering
    \begin{tabular}{|c|cccccc|}
    \hline
   Variables & $t^0$ & $t^{1}$ & $t^{2}$ & $t^{3}$ & $t^{4}$ & $t^{5}$ \\
    \hline
    $q^{0}$  & 1 & 3 & 5 & 7 & 9 & 11 \\
    $q^{1}$  & 0 & 4 & 8 & 12 & 16 & 20 \\
    $q^{2}$  & 0 & 7 & 17 & 27 & 37 & 47 \\
    $q^{3}$  & 0 & 8 & 28 & 48 & 68 & 88 \\
    $q^{4}$  & 0 & 11 & 46 & 86 & 126 & 166 \\
    $q^{5}$  & 0 & 12 & 64 & 136 & 208 & 280 \\
    $q^{6}$  & 0 & 15 & 94 & 217 & 347 & 477 \\
    $q^{7}$  & 0 & 16 & 124 & 316 & 536 & 756 \\
    $q^{8}$  & 0 & 19 & 167 & 464 & 825 & 1195 \\
    $q^{9}$  & 0 & 20 & 212 & 648 & 1212 & 1812 \\
    $q^{10}$ & 0 & 23 & 271 & 899 & 1766 & 2717 \\
    \hline
    \end{tabular}
    \caption{Table of selected coefficients of the Macdonald index for a projected free hypermultiplet.}
    \label{tab:coeff_mac_proj_FH}
\end{table}

The resulting MacDonald index can be written as
\begin{equation}
    \CI_{\mr{M}}^{\mr{HM}/\mathds{Z}_2}(q,t)=\sum_{\ell=0}^{\infty}\frac{t^\ell}{(q;q)_{2\ell}}\sum_{k=0}^{2\ell}{2\ell\choose k}_q~,
\end{equation}
where 
\begin{equation}
\binom{2m}{k}_q=\frac{(q;q)_{2m}}{(q;q)_{2m-k}(q;q)_k}~,    
\end{equation}
is the $q$-binomial coefficient.

\subsubsection{A Closed Formula For The MacDonald Index}
Let us start with the closed form expression for the MacDonald index for $(A_1,D_3)$ theory obtained from Class $\CS$ construction. The formula for $(A_1,D_{2N-3})$ is recorded in \eqref{eq:Mac_class_S-1} -- \eqref{eq:Mac_class_S-7}. Simplifying it for $N=3$, we obtain Table \ref{tab:coeff_mac_A1D3} of coefficients of the MacDonald index for $(A_1,D_3)$ theory. 
\begin{table}[]
    \centering
    \begin{tabular}{|c|ccccccccc|} 
    \hline
Variables & $t^0$ & $t^1$ & $t^2$ & $t^3$ & $t^4$ & $t^5$ & $t^6$ & $t^7$ & $t^8$ \\\hline
$q^0$ & 1 & 3 & 5 & 7 & 9 & 11 & 13 & 15 & 17 \\
$q^1$ & 0 & 4 & 8 & 12 & 16 & 20 & 24 & 28 & 32 \\
$q^2$ & 0 & 4 & 17 & 27 & 37 & 47 & 57 & 67 & 77 \\
$q^3$ & 0 & 4 & 23 & 48 & 68 & 88 & 108 & 128 & 148 \\
$q^4$ & 0 & 4 & 33 & 79 & 126 & 166 & 206 & 246 & 286 \\
$q^5$ & 0 & 4 & 39 & 117 & 199 & 280 & 352 & 424 & 496 \\
$q^6$ & 0 & 4 & 49 & 171 & 322 & 466 & 607 & 737 & 867
\\
\hline
\end{tabular}
    \caption{Table of selected coefficients of the Macdonald index for $(A_1,D_3)$ theory.}
    \label{tab:coeff_mac_A1D3}
\end{table}

A crucial observation is that the coefficients in Table \ref{tab:coeff_mac_A1D3} and Table \ref{tab:coeff_mac_proj_FH} above the diagonal are in perfect agreement. This suggests that each column in the MacDonald index for $(A_1,D_3)$ can be written as a product of the corresponding column in $\CI_{\mr{M}}^{\mr{HM}/\mathds{Z}_2}(q,t)$ and some other factor. Our goal in this section is to figure out the factor by checking the first few columns. 

The contribution of the second column to the MacDonald index is 
\begin{equation}
t\left(3+4 q+4 q^2+4 q^3+\cdots\right)~,    
\end{equation}
which can be written as 
\begin{equation}
\begin{split}
    &t\left(1-q^2\right)\left(3+4 q+7 q^2+8 q^3+11 q^4+12 q^5+\cdots\right)\\&=t\left(1-q^2\right)\left(\frac{1}{(q ; q)_0(q ; q)_2}+\frac{1}{(q ; q)_1(q ; q)_1}+\frac{1}{(q ; q)_2(q ; q)_0}\right)~.
\end{split}
\end{equation}
This has the desired form as it has the contribution of the second column of the projected free hypermultiplet. At order $t^2$, we can the write the contribution as 
\begin{equation}
\begin{split}
    &t^2\left(5+8 q+17 q^2+23 q^3+33 q^4+39 q^5+49 q^6+\cdots\right)\\=~&t^2\left(1-q^3\right)\left(1-q^4\right)\left(\frac{1}{(q ; q)_0(q ; q)_4}+\frac{1}{(q ; q)_1(q ; q)_3}+\frac{1}{(q ; q)_2(q ; q)_2}+\frac{1}{(q ; q)_3(q ; q)_1}+\frac{1}{(q ; q)_4(q ; q)_0}\right)~,
\end{split}
\end{equation}
which again has the desired form. We further notice that 
\begin{equation}
\begin{aligned}
1&=(q ; q)_0 \\
\left(1-q^2\right)&=\left(q^2 ; q\right)_1 \\
\left(1-q^3\right)\left(1-q^4\right)&=\left(q^3 ; q\right)_2~.
\end{aligned}
\end{equation}
This pattern motivates our formula for the MacDonald index:
\begin{equation}
    \CI_{\mr{M}}^{(A_1,D_3)}(q, t)=\sum_{n=0}^{\infty} t^n\left(\sum_{k=0}^{2 n} \frac{1}{(q ; q)_k(q ; q)_{2 n-k}}\right)\left(q^{n+1} ; q\right)_n~.
\end{equation}
Using the formula
\begin{equation}
(a ; q)_{n+k}=(a ; q)_n\left(a q^n ; q\right)_k~,    
\end{equation}
we obtain our final proposal for the MacDonald index:
\begin{conj}
The MacDonald index for the $(A_1,D_3)$ Argyres-Douglas theory is given by 
\begin{equation}\label{eq:Mac_ind_A1D3}
\CI_{\mr{M}}^{(A_1,D_3)}=\sum_{m=0}^\infty 
    \frac{t^m}{(q;q)_m} \sum_{k=0}^{2m}\binom{2m}{k}_q~.    
\end{equation}
\end{conj}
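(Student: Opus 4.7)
The plan is to apply Theorem \ref{thm:Hil_ser_mac} to reduce the statement to a Hilbert-series computation for the arc space of the Zhu algebra of the Schur \tsf{VOA} $\CV_{(A_1,D_3)}$. Since $(A_1,D_3)$ corresponds to the affine Kac-Moody vertex algebra $V_{-4/3}(\widehat{\mf{sl}}_2)$ at the boundary admissible level, the first step is to work out the Zhu algebra $A(\CV_{(A_1,D_3)})$ as a quotient of $\C[x^+, x^0, x^-]$ by the ideal $I$ generated by the images of the Kac-Kazhdan singular vectors (plus the Sugawara relation that trivializes the stress tensor as a composite). Once this presentation is fixed, the arc space $X_\infty(A(\CV_{(A_1,D_3)}))$ is determined as in Section \ref{sec:comp_Hilb_ser} by introducing the jet variables $x^{(a)}_j$ and the ideals $I_n$.

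Next, I would run the Gr\"{o}bner basis algorithm on the jet ideals $I_n$ using a degree-lexicographic ordering compatible with the conformal-weight grading, then apply the master formula \eqref{eq: hilbert_ser_Rinfty_gen}. Guided by the $(A_1,A_2)$ computation in Section \ref{sec:comp_Hilb_ser}, the hope is to identify a structured family of leading monomials $\mr{LM}(G_\infty)$, indexed by level $m$ and jet depth, that organize the alternating inclusion-exclusion sum in the numerator of \eqref{eq: hilbert_ser_Rinfty_gen} into a closed form. The expected output is that the dimension count of surviving monomials at $(A(\CV)_\infty)_{\ell,m}$ produces the $q$-binomial $\binom{2m}{k}_q$ at each level $m$, while the geometric factors from the denominator telescope (after summing over the free-generator layer) into the factor $1/(q;q)_m$, paralleling the derivation \eqref{eq:Ism} in the $(A_1,A_2)$ case.

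To cross-validate the output, I would check two specializations of \eqref{eq:Mac_ind_A1D3}. The Schur limit $t \to q$ must reproduce the known Schur index of $(A_1,D_3)$, namely $\bigl((q^3;q^3)_\infty/(q;q)_\infty\bigr)^3$; this is precisely the $n=1$ case of the $q$-series identity in item 3 of the introductory summary, which can be established independently by $q$-hypergeometric methods. Separately, the \tsf{RG} flow \eqref{eq:RG_flowD2n+1to2n} predicts a factorization against the MacDonald index of a free hypermultiplet (since the \tsf{IR} theory $(A_1,A_0)$ is trivial), and this should match the shape of the conjectured formula in which the projected free-hypermultiplet character sits inside the sum, dressed by a simple correction factor.

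The hardest step is the Gr\"{o}bner basis analysis itself. The Virasoro case of $(A_1,A_2)$ has a single generator and the single null-vector relation $J^2=0$, so the leading monomials fall into an obvious pattern. Here the Zhu algebra has three generators and null states at higher conformal weight, so the leading monomials of $G_n$ grow in a more intricate combinatorial pattern as $n\to\infty$ and an exact closed form for $\mr{LM}(G_\infty)$ is not manifest. If this direct attack stalls, the fallback is to combine the two consistency checks above with a uniqueness argument: any power series in $q,t$ whose Schur specialization matches $\bigl((q^3;q^3)_\infty/(q;q)_\infty\bigr)^3$, whose low-order coefficients match Table \ref{tab:coeff_mac_A1D3} up to a sufficient cutoff, and whose factorization under \tsf{RG} flow is consistent with a single free hypermultiplet, is pinned down uniquely — thereby promoting the conjecture to a theorem modulo the explicit arc-space identification.
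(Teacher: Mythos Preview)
The statement is labeled a \emph{Conjecture} in the paper and is not proved there. The paper's route to \eqref{eq:Mac_ind_A1D3} is not the arc-space derivation you outline; rather, they start from the \emph{known} Class $\CS$ expression for the MacDonald index (equations \eqref{eq:Mac_class_S-1}--\eqref{eq:Mac_class_S-7}), tabulate its coefficients, observe that above the diagonal these coincide with the projected free hypermultiplet index, and then guess the closed form by recognizing that each column factorizes as $(q^{m+1};q)_m$ times the hypermultiplet column. The arc-space/Hilbert-series computation in Section \ref{sec:Hilb_arc_MacD3} is a \emph{check} (carried out only to order $t^3$), not the source of the formula, and the Schur-limit identity (Theorem \ref{thm:main_id}) is another check.

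Your proposed route has two genuine gaps. First, Theorem \ref{thm:Hil_ser_mac} equates the MacDonald index with the arc-space Hilbert series only \emph{assuming Song's conjecture} \eqref{eq:song_conj}; so even a complete arc-space computation would not yield an unconditional proof of \eqref{eq:Mac_ind_A1D3}. Second, the all-order Gr\"{o}bner-basis pattern you would need (the analogue of your ``structured family of leading monomials'') is itself only conjectured in the paper (equation \eqref{eq:Gr\"{o}bner_basis_Gn}); the authors verify it numerically and use it to match through $t^3$, but do not prove it. Your fallback ``uniqueness'' argument is not valid: a two-variable power series is certainly not determined by its $t\to q$ specialization together with finitely many coefficients and a factorization property under RG flow, so this does not promote the conjecture to a theorem.
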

This formula reproduces Table \ref{tab:coeff_mac_A1D3}. 
\subsection{Checks Of Our Formula}
In this section, we show that the formula \eqref{eq:Mac_ind_A1D3} satisfies several expected properties. 
\subsubsection{Matching The Schur Limit}
The first check is the matching of the Schur index obtained by taking the Schur limit. The Schur index of $(A_1,D_{2n+1})$ is given by (for details, we refer the reader to \cite{Cordova:2015nma, Song:2016yfd}, and to \cite{Buican:2015ina, Buican:2015hsa} for the specific case of the $(A_1,D_3)$ theory)
\begin{equation}
    \CI_{\mathsf{S}}^{(A_1,D_{2n+1})}(q,a)=\mr{PE}\left[\frac{q-q^{2 n+1}}{(1-q)\left(1-q^{2 n+1}\right)} \chi_{\mr{adj}(a)}\right]~,
\end{equation}
where 
\begin{equation}
    \chi_{\mathsf{adj}}(a)=a^2+\frac{1}{a^2}+1~, 
\end{equation}
is the character of the adjoint representation of the flavor symmetry $\mf{su}(2)$. We claim that 
\begin{equation}\label{eq:schur_D2n+1_pochh_form}
\CI_{\mathsf{S}}^{(A_1,D_{2n+1})}(q,1)=\left(\frac{\left(q^{2n+1} ; q^{2n+1}\right)_{\infty}}{(q ; q)_{\infty}}\right)^3~.    
\end{equation}
We have 
\begin{equation}
 \CI_{\mathsf{S}}^{(A_1,D_{2n+1})}(q,1)=\mr{PE}\left[3\frac{q-q^{2 n+1}}{(1-q)\left(1-q^{2 n+1}\right)} \right]   
\end{equation}
Notice that
\begin{equation}
\begin{split}
\frac{3(q-q^{2 n+1})}{(1-q)(1-q^{2 n+1})}&=\left(\frac{3 q}{1-q}-\frac{3 q^{2 n+1}}{1-q^{2 n+1}}\right) ~.
\end{split}
\end{equation}
Using the formula 
\begin{equation}
    \mr{PE}\left[\frac{kq}{1-q}\right]=\frac{1}{(q;q)^k_\infty} ~,\quad \PE[f-g]=\frac{\PE[f]}{\PE[g]}~,
\end{equation}
we get
\begin{equation}
    \CI_{\mathsf{S}}^{(A_1,D_{2n+1})}(q,1)=\mr{PE}\left[3\frac{q-q^{2 n+1}}{(1-q)\left(1-q^{2 n+1}\right)} \right]  =\left(\frac{\left(q^{2n+1} ; q^{2n+1}\right)_{\infty}}{(q ; q)_{\infty}}\right)^3~.
\end{equation}
Taking $t\to q$ in \eqref{eq:Mac_ind_A1D3}, we get 
\begin{equation}
    \CI_{\mr{S}}^{(A_1,D_3)}(q)=\sum_{m=0}^\infty 
    \frac{q^m}{(q;q)_m} \sum_{k=0}^{2m}\binom{2m}{k}_q~.    
\end{equation}
Thus, proving that our formula \eqref{eq:Mac_ind_A1D3} satisfies the correct Schur limit amounts to proving the following product-sum identity:
\begin{equation}\label{eq:new_idn1}
\sum_{m=0}^\infty 
    \frac{q^m}{(q;q)_m}\sum_{k=0}^{2m}\binom{2m}{k}_q=\left(\frac{\left(q^3 ; q^3\right)_{\infty}}{(q ; q)_{\infty}}\right)^3~.    
\end{equation}
We now prove a more general identity which implies \eqref{eq:new_idn1}.
\begin{thm}\label{thm:main_id}
We have the identity

\begin{equation}\label{eq:main_id}
\frac{\left(z ; q^3\right)_{\infty}\left(q^3 / z ; q^3\right)_{\infty}\left(q^3 ; q^3\right)_{\infty}}{(z ; q)_{\infty}(q / z ; q)_{\infty}(q ; q)_{\infty}}=\sum_{n=0}^{\infty} \frac{q^n}{(q ; q)_n} \sum_{k=-n}^n{
2 n \choose
n+k}_q z^k~.
\end{equation}    
\end{thm}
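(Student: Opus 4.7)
The plan is to prove \eqref{eq:main_id} via the Bailey pair apparatus combined with Jacobi's triple product. First, I verify that the pair $(\alpha_r,\beta_n)$ defined by
\[
\alpha_0 = 1, \quad \alpha_r = z^r + z^{-r}\ (r \geq 1), \quad \beta_n = \sum_{k=-n}^{n}\frac{z^k}{(q;q)_{n+k}(q;q)_{n-k}}
\]
is a Bailey pair relative to $a=1$, i.e., $\beta_n = \sum_{r=0}^n \alpha_r / [(q;q)_{n-r}(q;q)_{n+r}]$. This holds by splitting the sum over $k$ into $k = 0$, $k > 0$, and $k < 0$ pieces and combining the paired $\pm r$ terms as $z^r + z^{-r}$. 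Since $(q;q)_{2n}\beta_n = \sum_{k=-n}^n \binom{2n}{n+k}_q z^k$, the right-hand side of \eqref{eq:main_id} becomes $\sum_n \tfrac{q^n(q;q)_{2n}}{(q;q)_n}\beta_n$.

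The central step is to apply a suitable specialization of Bailey's lemma that converts $\sum_n \tfrac{q^n(q;q)_{2n}}{(q;q)_n}\beta_n$ into a sum over $\alpha_r$. The prefactor $q^n(q^{n+1};q)_n = q^n(q;q)_{2n}/(q;q)_n$ does not match the standard $(\rho_1;q)_n(\rho_2;q)_n(q/\rho_1\rho_2)^n$ shape of Bailey's lemma inputs; I plan to generate it either through one iterate of the Bailey chain (which produces a $q^{n^2}$ prefactor that can be reshaped via a $q$-Gauss or Heine transformation) or by applying Watson's $_8\phi_7\to{}_4\phi_3$ transformation with parameters tuned so that the ratio of Pochhammer factors telescopes to $(q;q)_{2n}/(q;q)_n$ in an appropriate limit. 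Once the sum has been brought to the $\alpha_r$-side, the fact that $\alpha_r = z^r + z^{-r}$ promotes the $r\geq 0$ sum to a bilateral sum over $r\in\Z$, and Jacobi's triple product
\[
\sum_{r\in\Z}(-1)^r q^{3r(r-1)/2} z^r = (z;q^3)_\infty(q^3/z;q^3)_\infty(q^3;q^3)_\infty
\]
reproduces the numerator of the LHS. The denominator $(z;q)_\infty(q/z;q)_\infty(q;q)_\infty$ appears naturally as the Bailey-lemma prefactor.

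As a sanity check, specializing at $z = 1$ and cancelling the common $(1-z)$ factors between numerator and denominator collapses \eqref{eq:main_id} to \eqref{eq:new_idn1}: the LHS becomes $\big[(q^3;q^3)_\infty/(q;q)_\infty\big]^3$ and the RHS becomes $\sum_n \tfrac{q^n}{(q;q)_n}\sum_j \binom{2n}{j}_q$, which is the Schur-limit identity already appearing in the paper. This compatibility check confirms that any valid Bailey specialization must degenerate correctly at $z=1$, and matches the overall pole cancellation pattern of the theta-function ratio. A complementary qualitative check is that $F(z) := \theta(z;q^3)/\theta(z;q)$ (with $\theta(z;q) := (z;q)_\infty(q/z;q)_\infty(q;q)_\infty$) satisfies the quasi-periodicity $F(q^3 z) = q^3 z^2 F(z)$, which follows from $\theta(qz;q) = -z^{-1}\theta(z;q)$ iterated three times; both sides of \eqref{eq:main_id} should agree as meromorphic functions on $\mathbb{C}^\star$ with simple poles exactly at $z = q^m$, $m \in \Z\setminus 3\Z$.

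The hard part is Step 2: producing the non-standard prefactor $q^n (q^{n+1};q)_n$ from the Bailey lemma. I anticipate this requires combining the Bailey chain with a subsidiary $q$-series identity (most plausibly a form of $q$-Gauss summation or Heine's $_2\phi_1$ transformation) to reshape the prefactor into the desired form. Once this reshaping is accomplished, the remainder of the proof is a direct assembly of theta products via Jacobi's triple product.
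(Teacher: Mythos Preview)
Your Bailey-pair identification is correct, and the sanity checks are sound, but Step~2 is a genuine gap rather than a routine specialization. The prefactor $q^n(q^{n+1};q)_n = q^n(q;q)_{2n}/(q;q)_n$ is not of the shape $(\rho_1;q)_n(\rho_2;q)_n(aq/\rho_1\rho_2)^n$ produced by the standard Bailey lemma, nor does one iterate of the chain give it (that produces $q^{n^2}$, not $q^n(q^{n+1};q)_n$). You have correctly flagged this as the hard part, but ``combine with a subsidiary $q$-Gauss/Heine identity'' is a hope, not a step: you would need to exhibit the actual transformation, and it is not clear one exists in the form you need.

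The paper sidesteps this entirely with a more elementary coefficient-comparison argument. Multiply both sides of \eqref{eq:main_id} by the denominator $(z;q)_\infty(q/z;q)_\infty(q;q)_\infty$ and apply Jacobi's triple product to both theta functions, reducing the claim to matching the coefficient of $z^N$ in $\sum_N(-1)^Nq^{3\binom{N}{2}}z^N$ against the Cauchy product of $\sum_m(-1)^mq^{\binom{m}{2}}z^m$ with the right-hand side. The inner $k$-sum is then evaluated in closed form via the finite identity $\sum_{k=-n}^n\binom{2n}{n+k}_q(-1)^kq^{\binom{k}{2}+k(1-N)}=(q^N;q)_n(q^{1-N};q)_n$ (a specialization of \cite[p.~49, Ex.~1]{andrews1998theory}), and the remaining $n$-sum is the terminating $_2\phi_1(q^N,q^{1-N};0;q,q)=q^{N(N-1)}$ by $q$-Vandermonde. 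This two-step evaluation replaces your unresolved Bailey specialization with two textbook identities; in effect, the paper's argument \emph{is} the hidden ``subsidiary identity'' you were looking for, applied directly rather than through the Bailey machinery.
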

Before proving this theorem, we note that plugging $z=1$ in \eqref{eq:main_id} gives \eqref{eq:new_idn1}. On the \tsf{LHS} of \eqref{eq:main_id}, we need to use 
\begin{equation}
    \lim_{z\to 1}\frac{\left(z ; q^3\right)_{\infty}}{\left(z ; q\right)_{\infty}}=\frac{\left(q^3 ; q^3\right)_{\infty}}{\left(q ; q\right)_{\infty}}~.
\end{equation}
\begin{proof}
We start with the following form of the  Jacobi's triple product identity
\begin{equation}
\sum_{n \in \mathds{Z}} q^{\frac{n(n+1)}{2}} z^n=(q ; q)_{\infty}\left(-1/z ; q\right)_{\infty}(-z q ; q)_{\infty}
\end{equation}
which can be obtained from \cite[Theorem 2.8, Page 21]{andrews1998theory} up on using the definition of $q$-Pochhammer symbol. Substituting $z \mapsto-z / q$, we obtain 
\begin{align}
\sum_{n \in \mathds{Z}}(-1)^n q^{n\left(\frac{n+1}{2}-1\right)} z^n&=(q ; q)_{\infty},(q / z ; q)_{\infty}(z ; q)_{\infty}~,
\\\implies
\sum_{n \in \mathds{Z}}(-1)^n q^{\binom{n}{2}} z^n&=(z ; q)_{\infty}(q / z ; q)_{\infty}(q ; q)_{\infty} .
\end{align}
So we may equivalently prove
\begin{equation}\label{eq:intermediate_main_id}
\begin{aligned}
& \sum_{N=-\infty}^{\infty}(-1)^N q^{3\binom{N}{2}} z^N=\sum_{m=-\infty}^{\infty}(-1)^m q^{\binom{m}{2}} z^m \sum_{n=0}^{\infty} \frac{q^n}{(q ; q)_n} \sum_{k=-n}^n{
2 n \choose
n+k}_q z^k .
\end{aligned}
\end{equation}
To prove this, we will show that the coefficient of $z^N$ of both sides is equal for $N \geq 0$. Then for $N < 0$, we will use the $z \mapsto z^{-1}$ symmetry on both sides of \eqref{eq:intermediate_main_id}.
The coefficient of $z^N$ on LHS of \eqref{eq:intermediate_main_id} is
\begin{equation}\label{eq:LHS_zN_coeff}
(-1)^N q^{3{N\choose 2}}~.
\end{equation}
The coefficient of $z^N$ on RHS of \eqref{eq:intermediate_main_id} is
\begin{equation}\label{eq:RHS_zN_coeff}
\begin{aligned}
& \sum_{n=0}^{\infty} \frac{q^n}{(q ; q)_n} \sum_{k=-n}^n{
2 n \choose
n+k}_q(-1)^{N-k} q^{\binom{N-k}{2}} \\
& \quad=(-1)^N q^{\binom{N}{2}} \sum_{n=0}^{\infty} \frac{q^n}{(q ; q)_n} \sum_{k=-n}^n{
2 n \choose
n+k}_q(-1)^k q^{\binom{k}{2}+k(1-N)} \\
& \quad=(-1)^N q^{\binom{N}{2}} \sum_{n=0}^{\infty} \frac{q^n}{(q ; q)_n}\left(q^{1-N} ; q\right)_n\left(q^N ; q\right)_n~,
\end{aligned}
\end{equation}
where we used
\begin{equation}
\begin{aligned}
\binom{N}{2}+\binom{k}{2}+k(1-N) & =\frac{N(N-1)}{2}+\frac{k(k-1)}{2}+k(1-N) \\
& =\frac{N(N-1)}{2}+k \frac{(k-1+2-2 N)}{2} \\
& =\frac{N^2-N+k^2+k-2 N k}{2} \\
& =\frac{(N-k)(N-k-1)}{2} \\
& =\binom{N-k}{2}
\end{aligned}
\end{equation}
and \cite[Page 49, Example 1]{andrews1998theory}
\begin{equation}\label{eq:identity_qx}
\sum_{k=-n}^n q^{k^2} x^k{
2 n \choose
n+k}_{q^2}=\left(-x^{-1} q ; q^2\right)_n\left(-x q ; q^2\right)_n~,
\end{equation}
with $q \rightarrow q^{1 / 2}, x=-q^{1-N-\frac{1}{2}}$ so that LHS of \eqref{eq:identity_qx} becomes
\begin{equation}
\sum_{k=-n}^n q^{\frac{k^2}{2}+k\left(1-N-\frac{1}{2}\right)}(-1)^k{
2 n \choose
n+k}_q=\sum_{k=-n}^n q^{\binom{k}{2}+k(1-N)}(-1)^k{
2 n \choose
n+k}_q~,
\end{equation}
and the RHS of \eqref{eq:identity_qx} becomes
\begin{equation}
\left(q^{N-1+\frac{1}{2}+\frac{1}{2}} ; q\right)_n\left(q^{1-N-\frac{1}{2}+\frac{1}{2}} ; q\right)_n=\left(q^N ; q\right)_n\left(q^{1-N} ; q\right)_n~.
\end{equation}
Now for $N=0$, all terms in the sum in the last line of \eqref{eq:RHS_zN_coeff}, except $n=0$, vanish since $(1 ; q)_n=0$. The $n=0$ term is 1 which is same as \eqref{eq:LHS_zN_coeff} for $N=0$.
For $N>0,1-N \leq 0$ and hence we can use the identity \cite[Appendix II, Eq. (II.6)]{gasper2004basic}
\begin{equation}\label{eq:phi_identity}
{ }_2 \phi_1\left(a, q^{-n} ; c ; q, q\right)=\frac{(c / a ; q)_n}{(c ; q)_n} a^n~,\quad n\geq 0~,
\end{equation}
where \cite[Chapter 1, Eq.(1.2.22)]{gasper2004basic}
\begin{equation}
{ }_2 \phi_1\left(a_1, a_2 ; c ; q, z\right)=\sum_{n=0}^{\infty} \frac{\left(a_1 ; q\right)_n\left(a_2 ; q\right)_n}{(q ; q)_n(c ; q)_n} z^n~.
\end{equation}
Using \eqref{eq:phi_identity} for $a=q^N, c=0,z=q$, we get,
\begin{equation}\label{eq:S_1_sum}
{ }_2 \phi_1\left(q^N, q^{1-N} ; 0 ; q , q\right)=\sum_{n=0}^{\infty} \frac{\left(q^N ; q\right)_n\left(q^{1-N} ; q\right)_n}{(q ; q)_n} q^n=q^{N(N-1)}=q^{N^2-N}~.
\end{equation}
Thus the last line of \eqref{eq:RHS_zN_coeff} evaluates to
\begin{align}
(-1)^N q^{\frac{N(N-1)}{2}} q^{N^2-N}=(-1)^N q^{3\binom{N}{2}}~,
\end{align}
which is same as \eqref{eq:LHS_zN_coeff}.
\end{proof}
\paragraph{An interesting identity.} While proving \eqref{eq:new_idn1}, we noticed that we can prove another interesting $q$-series identity which might be of independent interest.
\begin{thm}
The following identity holds:
\begin{equation}
\left(\frac{\left(q^{2} ; q^{2}\right)_{\infty}}{(q ; q)_{\infty}}\right)^3= \frac{1}{(q;q)_\infty(q^2;q^2)_{\infty}}\sum_{n=0}^\infty(-1)^nq^{n^2+n}\left(\frac{1+q^{2n+1}}{1-q^{2n+1}}\right)~.    
\end{equation}
\begin{proof}
Let us start by noting that 
\begin{equation}
    \frac{(z,q^2/z,q^2;q^2)_\infty}{(z,q/z,q;q)_\infty}=\frac{1}{(zq,q/z,q;q^2)_\infty}~,
\end{equation}
where 
\begin{equation}
    (a_1,a_2,\dots,a_n;q)_\infty:=(a_1;q)_\infty(a_2;q)_\infty\cdots (a_n;q)_\infty~.
\end{equation}
We used the fact that 
\begin{equation}
    \frac{(z;q^2)_\infty}{(z;q)_\infty}=\frac{1}{(zq;q^2)_\infty},\quad \frac{(q^2/z;q^2)_\infty}{(q/z;q)_\infty}=\frac{1}{(q/z;q^2)_\infty}~,\quad \frac{(q^2;q^2)_\infty}{(q;q)_\infty}=\frac{1}{(q;q^2)_\infty}~,
\end{equation}
which follows from the definition. Next, we have 
\begin{equation}
\frac{1}{(zq,q/z,q;q^2)_\infty}=\frac{1}{(q;q^2)_\infty}\sum_{m,n=0}^\infty \frac{z^{m-n}q^{m+n}}{(q^2;q^2)_m(q^2;q^2)_n}~,    
\end{equation}
where we used 
\begin{equation}
    \frac{1}{(z;q)_\infty}=\sum_{m=0}^\infty\frac{z^m}{(q;q)_\infty}~.
\end{equation}
Changing variable to $m-n=k$, we get 
\begin{equation}
\begin{split}
\frac{1}{(zq,q/z,q;q^2)_\infty}&=\frac{1}{(q;q^2)_\infty}\sum_{k\in\Z}z^k\sum_{n=0}^\infty \frac{q^{2n+|k|}}{(q^2;q^2)_{n+|k|}(q^2;q^2)_n}
\\&=\frac{1}{(q;q^2)_\infty}\sum_{k\in\Z}z^k\frac{q^{|k|}}{(q^2;q^2)_{|k|}}{ }_2 \phi_1\left[\begin{array}{c}0,0\\ q^{2|k|+2}\end{array} ; q^2, q^2\right]~,
\end{split}
\end{equation}
where we used the definition in \eqref{eq:phirs_def}. Using Heine's transform \eqref{eq:Heine_exp}, we get 
\begin{equation}
\begin{split}
\frac{1}{(zq,q/z,q;q^2)_\infty}&=\frac{1}{(q;q^2)_\infty}\sum_{k\in\Z}z^k\frac{q^{|k|}}{(q^2;q^2)^2_{\infty}}\sum_{n=0}^\infty(-1)^nq^{n^2+n+2|k|n}
\\&=\frac{1}{(q;q)_\infty(q^2;q^2)_{\infty}}\sum_{k\in\Z}z^k\sum_{n=0}^\infty(-1)^nq^{n^2+n+|k|(2n+1)}
\\&=\frac{1}{(q;q)_\infty(q^2;q^2)_{\infty}}\sum_{n=0}^\infty(-1)^nq^{n^2+n}\sum_{k\in\Z}z^kq^{|k|(2n+1)}
\\&=\frac{1}{(q;q)_\infty(q^2;q^2)_{\infty}}\sum_{n=0}^\infty(-1)^nq^{n^2+n}\left(\sum_{k=0}^\infty\left((zq^{(2n+1)})^k+(z^{-1}q^{(2n+1)})^k\right)-1\right)
\\&=\frac{1}{(q;q)_\infty(q^2;q^2)_{\infty}}\sum_{n=0}^\infty(-1)^nq^{n^2+n}\left(\frac{1}{1-zq^{2n+1}}+\frac{1}{1-z^{-1}q^{2n+1}}-1\right)
\\&=\frac{1}{(q;q)_\infty(q^2;q^2)_{\infty}}\sum_{n=0}^\infty(-1)^nq^{n^2+n}\frac{1-q^{4n+2}}{(1-zq^{2n+1})(1-z^{-1}q^{2n+1})}~.
\end{split}    
\end{equation}
We then have 
\begin{equation}
\begin{split}
    \left(\frac{\left(q^{2} ; q^{2}\right)_{\infty}}{(q ; q)_{\infty}}\right)^3&=\lim_{z\to 1}\frac{(z,q^2/z,q^2;q^2)_\infty}{(z,q/z,q;q)_\infty}=\lim_{z\to 1}\frac{1}{(zq,q/z,q;q^2)_\infty}
    \\&=\frac{1}{(q;q)_\infty(q^2;q^2)_{\infty}}\sum_{n=0}^\infty(-1)^nq^{n^2+n}\frac{1-q^{4n+2}}{(1-q^{2n+1})^2}
    \\&=\frac{1}{(q;q)_\infty(q^2;q^2)_{\infty}}\sum_{n=0}^\infty(-1)^nq^{n^2+n}\left(\frac{1+q^{2n+1}}{1-q^{2n+1}}\right)~.
\end{split}    
\end{equation}
\end{proof}
\end{thm}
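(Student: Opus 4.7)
The plan is to mirror the approach used for Theorem \ref{thm:main_id}, deriving the identity as the $z \to 1$ limit of a two-variable Pochhammer expansion. First I would observe that
\begin{equation*}
\left(\frac{(q^2;q^2)_\infty}{(q;q)_\infty}\right)^3 = \lim_{z\to 1}\frac{(z, q^2/z, q^2; q^2)_\infty}{(z, q/z, q; q)_\infty},
\end{equation*}
and then apply the elementary factorization $(w;q)_\infty = (w;q^2)_\infty (wq;q^2)_\infty$ to each of the three factors in the denominator to collapse this ratio to the simpler expression $1/(zq, q/z, q; q^2)_\infty$, which is finite and nonzero at $z=1$.

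Next I would expand each of the reciprocals $1/(zq;q^2)_\infty$ and $1/(q/z;q^2)_\infty$ via Euler's series $1/(w;q^2)_\infty = \sum_{m \geq 0} w^m/(q^2;q^2)_m$, producing a double sum over $(m,n) \in \mathbb{Z}_{\geq 0}^2$. Reindexing with $k = m-n$ organizes the result as a Laurent series in $z$ whose $z^k$-coefficient factors as $q^{|k|}/(q^2;q^2)_{|k|}$ times a basic hypergeometric series
\begin{equation*}
{}_2\phi_1\!\left[\begin{smallmatrix} 0, 0 \\ q^{2|k|+2} \end{smallmatrix}; q^2, q^2\right].
\end{equation*}

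The key computational input is an application of Heine's transformation to this ${}_2\phi_1$, converting it into a partial theta sum $\sum_{n \geq 0}(-1)^n q^{n^2+n+2|k|n}$ with explicit Pochhammer prefactors. The combinatorial identities $(q^2;q^2)_{|k|}(q^{2|k|+2};q^2)_\infty = (q^2;q^2)_\infty$ and $(q;q^2)_\infty (q^2;q^2)_\infty = (q;q)_\infty$ consolidate the surviving prefactors into the clean coefficient $1/\bigl((q;q)_\infty (q^2;q^2)_\infty\bigr)$. Swapping the order of summation and then summing the geometric series $\sum_{k \in \mathbb{Z}} z^k q^{|k|(2n+1)}$ in closed form yields the rational expression $(1 - q^{4n+2})/\bigl((1-zq^{2n+1})(1 - z^{-1} q^{2n+1})\bigr)$.

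Finally, taking $z \to 1$, this rational factor simplifies to $(1+q^{2n+1})/(1 - q^{2n+1})$, which reproduces the right-hand side of the identity. The main obstacle I anticipate is bookkeeping in the Heine step: carefully tracking the $|k|$-dependent Pochhammers and verifying that they telescope as claimed across both the positive and negative ranges of $k$. No single step requires a clever leap; the difficulty lies in choosing the \emph{right} two-variable lift to take the $z \to 1$ limit of, and in managing the cancellations cleanly.
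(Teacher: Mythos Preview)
Your proposal is correct and follows essentially the same route as the paper's proof: the same two-variable lift to $1/(zq,q/z,q;q^2)_\infty$, the Euler expansion and $k=m-n$ reindexing, the Heine transformation of the resulting ${}_2\phi_1$, the geometric summation over $k$, and the final $z\to 1$ limit. Even your anticipated bookkeeping concern in the Heine step matches where the paper's computation is most delicate.
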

\subsubsection{MacDonald Index From Hilbert Series Of Arc Space}\label{sec:Hilb_arc_MacD3}
In this section, we check our formula for the MacDonald index by computing the first three terms in the $t$-expansion of the Hilbert series of the arc space of the Zhu algebra of the chiral algebra associated to the $(A_1,D_3)$ Argyres-Douglas theory. This supports our formula following the discussion in Section \ref{sec:Mac_Hil_ser_arc}. 

The associated chiral algebra of $(A_1,D_{2n+1})$ theory is \cite{beem2014infinite,Beem:2017ooy} 
\begin{equation}
    \CV_{(A_1,D_{2n+1})}=\mf{su}(2)_{-\frac{4n}{2n+1}}~.
\end{equation}
To get the Zhu algebra of $\mf{su}(2)_{-\frac{4n}{2n+1}}$, we note that this \tsf{VOA}ahs three strong generators, namely, the $\mf{su}(2)$ currents $j^0,j^\pm$. The stress tensor is given by the Sugawara construction 
\begin{equation}
    T\sim j^+j^++j^-j^-+j^0j^0~.
\end{equation}
For $n=1,2$, there are four null vectors \cite{Beem:2017ooy} given by 
\begin{align}
j^aT&=0,\quad n=1~,
\\
j^aT^2&=0,\quad n=2~,
\end{align}
and 
\begin{align}
T^2&=0~,\quad n=1
\\
T^3&=0~,\quad n=2~.
\end{align}
We will change the basis for $\mf{su}(2)$ and take the generators to be 
\begin{equation}\label{eq:basis_change_su(2)}
    j^x:=j^++ij^-,\quad j^z:=j^+-ij^-,\quad j^y:=ij^0~.
\end{equation}
Mapping 
\begin{equation}\label{eq:var_su(2)_zhu}
    j^x\mapsto X,\quad j^y\mapsto Y,\quad j^z\mapsto Z,\quad T\mapsto J~,
\end{equation}
we see that the Zhu algebra of $\mf{su}(2)_{-\frac{4}{3}}$ is given by 
\begin{equation}
    R:=A\left(\CV_{(A_1,D_{3})}\right)\cong\frac{\C[X,Y,Z,J]}{\langle f_1,f_2,f_3,f_4,f_5\rangle}~,
\end{equation}
where 
\begin{equation}
    f_1:=XZ-Y^2-J,\quad f_2:=JX,\quad f_3:=JY,\quad f_4:=JZ,\quad f_5:=J^2~.
\end{equation}
The grading on the Zhu algebra is given by 
\begin{equation}
    \mr{wt}(X,Y,Z)=1,\quad \mr{wt}(J)=2~.
\end{equation}
To compute the Hilbert series of the arc space of $R$, we start with power series 
\begin{equation}
    X(x)=\sum_{n=0}^\infty X_nx^n~,\quad Y(x)=\sum_{n=0}^\infty Y_nx^n~,\quad Z(x)=\sum_{n=0}^\infty Z_nx^n~,\quad J(x)=\sum_{n=1}^\infty J_nx^n~.
\end{equation}
Note that the power series for $J$ starts from $n=1$ owing to the fact that the $J$ represents the stress tensor in the physical theory. 
Define 
\begin{equation}
    R_n:=\C[X_0,\dots,X_n,Y_0,\dots,Y_n,Z_0,\dots,Z_n,J_1,\dots,J_n]/I_n~,
\end{equation}
where 
\begin{equation}
    I_n:=\left\langle f_1(x),\dots,f_5(x)\equiv 0 \bmod x^{n+1}\right\rangle~,
\end{equation}
and
\begin{equation}
    f_i(x):=f_i(X(x),Y(x),Z(x),J(x))~,\quad i=1,\dots,5~.
\end{equation}
For $n=0,1$, we obtain
\begin{equation}
\begin{aligned}
R_0 & =\mathds{C}\left[X_0, Y_0, Z_0\right] /\left\langle X_0 Z_0-Y_0^2\right\rangle~, \\
R_1 & =\mathds{C}\left[X_0, Y_0, Z_0, J_1, X_1, Y_1, Z_1\right] / I_1~,
\end{aligned}
\end{equation}
where 
\begin{equation}
    I_1 = \langle X_0 Z_0 - Y_0^2, X_0 Z_1-2 Y_0 Y_1+Z_0 X_1-J_1, X_0 J_1,Y_0 J_1,Z_0J_1 \rangle~.
\end{equation}
The arc space of $R$ is defined by
\begin{equation}
    R_{\infty}:=\lim _{ \leftarrow } R_n~.
\end{equation}
We assign bigrading to the $n$-jet space $R_n$ by defining
\begin{equation}
\begin{split}
\mathsf{wt}^{(2)}\left(X_i\right)&=\left(\mathsf{wt}_1\left(X_i\right),\mathsf{wt}_2\left(X_i\right)\right)=(i, 1)~,\quad i\geq 0~, 
\\
\mathsf{wt}^{(2)}\left(Y_i\right)&=\left(\mathsf{wt}_1\left(Y_i\right),\mathsf{wt}_2\left(Y_i\right)\right)=(i, 1)~,\quad i\geq 0~, 
\\
\mathsf{wt}^{(2)}\left(Z_i\right)&=\left(\mathsf{wt}_1\left(Z_i\right),\mathsf{wt}_2\left(Z_i\right)\right)=(i, 1)~,\quad i\geq 0~, 
\\
\mathsf{wt}^{(2)}\left(J_i\right)&=\left(\mathsf{wt}_1\left(J_i\right),\mathsf{wt}_2\left(J_i\right)\right)=(i, 1)~,\quad i\geq 1~, 
\\
\mathsf{wt}^{(2)}\left(1\right)&=\left(\mathsf{wt}_1\left(1\right),\mathsf{wt}_2\left(1\right)\right)=(0, 0)~.
\end{split}
\end{equation}
We denote $(R_n)_{\ell,m}$, the homogeneous subspace of weight $(\ell,m)$.
The Hilbert series of $R_n$ is defined as
\begin{equation}
H_{R_n}(q, t)=\sum_{\ell=0}^n \sum_{m \geq 0} \mr{dim}\left(R_n\right)_{\ell, m} q^{\ell} t^m~.
\end{equation}
Taking $n \rightarrow \infty$, we get the Hilbert series of  $R_{\infty}$
\begin{equation}
H_{R_{\infty}}(q, t)=\sum_{\ell, m \geq 0} \mr{dim}\left(R_{\infty}\right)_{\ell, m} q^{\ell} t^m~.
\end{equation}
From the general formula \eqref{eq: hilbert_ser_Rn_gen} (proved in Appendix \ref{app:Hilb_ser_count}), the Hilbert series of $R_n$ for the case at hand is given by
\begin{equation}
\label{eq: hilbert_ser_Rn}
H_{R_n}(q,t) = \mr{Ser}_{q=0,t=0}\frac{1+\sum_{S\subseteq \mr{LM}(G_n)}(-1)^{|S|}(q,t)^{\mr{wt}^{(2)}(\mr{LCM}(S))}}{(1-t)^3 \prod_{i=1}^{n}\left(1-q^i t\right)^4} \quad (\mr{mod} \;q^{n+1})~,
\end{equation}
where $\mr{Ser}_{q=0,t=0}(f(q,t))$ denotes the power series of $f(q,t)$ around $q=t=0$, the summation in the numerator is over all subsets of monomials in $\mr{LM}(G_n)$, $|S|$ denotes the number of monomials in $S$, $\mr{LCM}(S)$ is least common multiple of the monomials in $S$, and we have defined
\begin{equation}
(q,t)^{(\ell,m)} := q^{\ell} t^m ~.   
\end{equation}
The Hilbert series of $R_\infty$ can then be given by taking $n\to \infty$
\begin{equation}
\label{eq: hilbert_ser_Rinf}
H_{R_\infty}(q,t) = \mr{Ser}_{q=0,t=0}\frac{1+\sum_{S\subseteq \mr{LM}(G_\infty)}(-1)^{|S|}(q,t)^{\mr{wt}^{(2)}(\mr{LCM}(S))}}{(1-t)^3 \prod_{i=1}^{\infty}\left(1-q^i t\right)^4}~.
\end{equation}
To apply \eqref{eq: hilbert_ser_Rn}, we need a general form for Gr\"{o}bner basis.
Through Mathematica computations, we have the following conjecture for $\mr{LM}(G_n)$:
\begin{conj}
Choose the ordering of variables to be
\begin{equation}\label{eq:var_ord}
J_1>J_2>\cdots>X_0>X_1>\cdots>Y_0>Y_1>\cdots>Z_0>Z_1>\ldots~.
\end{equation}
Then with the degree lexicographic (deglex) ordering of monomials (see Appendix \ref{app:Grob}), the leading terms of the Gr\"{o}bner basis of  is given by 
\begin{equation}
\label{eq:Gr\"{o}bner_basis_Gn}
\mr{LM}\left(G_n\right)=\bigcup_{\substack{0 \leq \ell \leq n \\ 2 \leq m \leq \ell+2}} I_{\ell, m}~,
\end{equation}
where $I_{\ell,m}$ is a homogeneous set of weight $(\ell,m)$ given by
\begin{equation}
I_{\ell,m} := \left(\{X_{m-2}\}\cdot F_z(\ell-m+2, m-1)\right)\bigcup \left(\{J_{m-1}\}\cdot F(\ell-m+1, m-1)\right)~,
\end{equation}
where 
\begin{enumerate}
\item for sets $A,B$, we have introduced the notation
\begin{equation}
    A\cdot B:=\{ab:a\in A,b\in B\}~,
\end{equation}
\item for $\ell,m\in\IZ_{\geq 0}$, we have defined
\begin{equation}
F_z(\ell,m) := \{Z_{i_1}\dots Z_{i_m}: i_1+\dots +i_m = \ell\}~,
\end{equation}
\item for $\ell,m\in\IZ_{\geq 0}$, $F(\ell,m)$ is the set of all degree $m$ monomials that can be formed using the variables $X_0,\dots,X_\ell,Y_0\dots,Y_\ell,Z_0,\dots,Z_\ell,J_1,\dots,J_\ell$. More explicitly, 
\begin{equation}
\begin{aligned}
F(0,0) &= \{1\}~, \\
F(\ell,0) &= \emptyset, \quad \text{for}\quad \ell\neq 0~,\\
\end{aligned}
\end{equation}
and for $m>0$,
\begin{equation}
F(\ell,m):=\left\{X_0^{n_{X_0}}Y_0^{n_{Y_0}}Z_0^{n_{Z_0}} \prod_{i=1}^\ell X_i^{n_{X_i}}Y_i^{n_{Y_i}}Z_i^{n_{Z_i}}J_i^{n_{J_i}}: \sum_{i=0}^\ell (n_{X_i}+n_{Y_i}+n_{Z_i})+\sum_{i=1}^\ell n_{J_i}=m\right\}~.
\end{equation} 
\end{enumerate}
\end{conj}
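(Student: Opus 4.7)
The plan is to establish the conjecture via Buchberger's criterion, combined with an explicit inductive construction of Gr\"{o}bner basis elements stratified by the bigrade $(\ell, m)$ of their leading monomials, with the induction running on the total degree $m$.

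For the base case $m=2$, each claimed leading monomial is realized directly by a single coefficient of one of the defining relations. Under the deglex order with $J_i > X_j > Y_k > Z_l$, the polynomial $[f_1(x)]_{x^\ell} = \sum_{i+j=\ell} X_i Z_j - \sum_{i+j=\ell} Y_i Y_j - J_\ell$ has leading term $X_0 Z_\ell$, covering the $X_0$-type monomials in $I_{\ell, 2}$; similarly, $[f_2(x)]_{x^\ell}$, $[f_3(x)]_{x^\ell}$, $[f_4(x)]_{x^\ell}$, $[f_5(x)]_{x^\ell}$ contribute leading terms $J_1 X_{\ell-1}$, $J_1 Y_{\ell-1}$, $J_1 Z_{\ell-1}$, $J_1 J_{\ell-1}$, covering $\{J_1\} \cdot F(\ell - 1, 1)$.

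For the inductive step at bigrade $(\ell, m)$, I would produce the claimed higher-$m$ leading monomials by iterated S-polynomial reductions. Two prototypical computations illustrate the mechanism: first, $S([f_1]_{x^\ell}, [f_1]_{x^{\ell'}})$, after absorbing the resulting $-Z_{\ell'} J_\ell + Z_\ell J_{\ell'}$ tail via the relations derived from $f_4$, yields an element whose leading term is $X_1$ times a product of two $Z$'s, and iterating this process produces the $X_{m-2}$-type pieces of each $I_{\ell, m}$; second, $S([f_2]_{x^2}, [f_2]_{x^1}) = J_2 X_0^2$ (and its analogues with $f_3, f_4, f_5$) yields the $J_2$-leading terms of degree $3$, and iterating generates all $J_{m-1}$-type pieces with $m-1 \geq 2$. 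Having exhibited elements with every claimed leading monomial, I would verify Buchberger's criterion by showing that every S-polynomial of two basis elements reduces to zero modulo the full basis. The S-polynomials partition naturally into three classes according to whether the pair of leading monomials is both $X$-type, both $J$-type, or mixed; the multiplicative structure of $F_z(\cdot, \cdot)$ and $F(\cdot, \cdot)$ allows a uniform treatment within each class, with the reductions themselves controlled by the inductive hypothesis.

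The main obstacle will be the Buchberger verification. The interplay between the single ``$J$-producing'' relation $f_1$ (whose coefficient of $x^\ell$ contains the isolated term $-J_\ell$) and the four ``$J$-annihilating'' relations $f_2, f_3, f_4, f_5$ produces subtle S-polynomial cancellations, and organizing the bookkeeping so that each tail term reduces predictably is nontrivial. A clean strategy would introduce a canonical family of lifts $\widetilde{g}_u \in I_n$ parametrized polynomially in the bigrade $(\ell, m)$ and the claimed leading monomial $u$, together with an explicit normal-form recipe; one would then prove by induction on $m$ that S-polynomials between pairs of $\widetilde{g}$'s reduce to $\mathbb{C}$-linear combinations of $\widetilde{g}$'s of strictly higher $m$. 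As a complementary sanity check, one can match the resulting count of standard monomials against an independent computation of $\dim(R_n)_{\ell, m}$ for small $n$ by computer algebra, as the authors have already done; discrepancies, if any, would immediately localize the stratum where the combinatorial pattern fails.
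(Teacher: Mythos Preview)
The paper does not prove this statement; it is explicitly presented as a conjecture supported only by Mathematica computations of Gr\"{o}bner bases for small $n$ (see the sentence immediately preceding the statement: ``Through Mathematica computations, we have the following conjecture for $\mr{LM}(G_n)$''). There is therefore no ``paper's own proof'' to compare against.

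Your proposal goes well beyond what the paper attempts: you outline a genuine proof strategy via Buchberger's criterion with an induction on the degree $m$. The base case and the two prototype S-polynomial computations you give are correct (e.g.\ $S([f_2]_{x^2},[f_2]_{x^1})=J_2X_0^2$ is right, and $S([f_1]_{x^0},[f_1]_{x^1})$ does have leading term $-X_1Z_0^2$; note that under deglex the degree-$2$ tail $J_1Z_0$ is already subleading and need not be absorbed first). The overall architecture---produce explicit lifts for each claimed leading monomial, then check that all S-polynomials reduce to zero---is the standard and correct way to certify a Gr\"{o}bner basis.

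That said, your proposal is honest that the hard part is the full Buchberger verification, and at present this remains only a plan. The combinatorics of S-polynomials between an $X_{m-2}$-type element and a $J_{m'-1}$-type element, and between two $X$-type elements at different $m$, will generate many tail terms mixing $Y$'s and $J$'s; showing these all reduce to zero using only the claimed set of leading monomials requires either a carefully engineered family of explicit lifts or a structural argument (e.g.\ exhibiting the ideal as coming from a known quadratic Gr\"{o}bner basis via a change of variables). Until that bookkeeping is carried out, this remains a reasonable but incomplete attack on an open conjecture.
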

\par
We now work order by order in $t$. At order $t^m$, the only terms in the numerator that contribute are the constraints $S\subset\mr{LM}(G_\infty)$ which have weight $(\ell,m)$.
\paragraph{Order $t^0$.} At order $t^0$, we only have one generator 1. Thus, the series starts with 1. 
\paragraph{Order $t^1$.} From \eqref{eq:Gr\"{o}bner_basis_Gn}, we see there is no constraint $S\subset\mr{LM}(G_\infty)$  weight $(\ell,1)$. Then from \eqref{eq: hilbert_ser_Rinf}, the contribution for the first order in $t$ can be given by
\begin{equation}\label{eq:t1_comp_D3}
    \begin{aligned}
H_{R_\infty}(q,t) &= \mr{Ser}_{q=0,t=0}\frac{1}{(1-t)^3 \prod_{i=1}^{\infty}\left(1-q^i t\right)^4}\quad (\mr{mod} \;t^2)\\
& = 1 + t\bigg(3+4\sum_{i=1}^\infty q^i\bigg)\\
& = 1+t\left(\frac{3+q}{1-q}\right) \\
& = 1+\frac{t}{(q ; q)_1} \sum_{k=0}^2\binom{2}{k}_q~,
\end{aligned}
\end{equation}
where the last equality follows from a straightforward Mathematica calculation. 
\paragraph{Order $t^2$.} The constraints $S\subset\mr{LM}(G_\infty)$ of weight $(\ell,m=2)$ are given by
\begin{equation}
I_{\ell,m=2} = \{X_{0}\}\times F_z(\ell, 1)\bigcup \{J_{1}\}\times F(\ell-1, 1)~.
\end{equation}
More explicitly, these are 
\begin{equation}
\begin{aligned}
I_{\ell=0, m = 2} &=\{X_0Z_0\}~,\\
I_{\ell=1, m= 2}  &= \{X_0Z_1, J_1 X_0,J_1Y_0,J_1Z_0\}~,\\
I_{\ell, m=2}  &= \{X_0Z_\ell, J_1 X_\ell,J_1Y_\ell,J_1Z_\ell, J_1J_\ell\}, \quad \ell>1~.
\end{aligned}
\end{equation}
The Hilbert series to the second order in $t$ can thus be obtained from \eqref{eq: hilbert_ser_Rn}:
\begin{equation}
\label{eq:Hil_ser_t2}
\begin{aligned}
H_{R_\infty}(q,t) &= \mr{Ser}_{q=0,t=0}\frac{1-t^2(1+4q + 5\sum_{i=2}^\infty q^i)}{(1-t)^3 \prod_{i=1}^{\infty}\left(1-q^i t\right)^4}\quad (\mr{mod} \;t^3).
\end{aligned}
\end{equation}
Focusing on the $t^2$ term,
\begin{equation}
H_{R_{\infty}}(q, t)|_{t^2}=\sum_{\ell \geq 0} \mr{dim}\left(R_{\infty}\right)_{\ell, 2} q^\ell~,
\end{equation}
since the denominator is the generating function for all monomials, the coefficients of $t^2$ term in \eqref{eq:Hil_ser_t2} is given by
\begin{equation}\label{eq:dimRl2}
\mr{dim}\left(R_{\infty}\right)_{\ell, 2} = |F(\ell,2)|-|F(\ell,0)|-4|F(\ell-1,0)|-5\sum_{i=2}^\infty|F(\ell-i,0)|~.
\end{equation}
Note that 
\begin{equation}
\begin{aligned}
|F(\ell,0)| = \delta_{\ell,0}~.
\end{aligned}
\end{equation}
Our goal is to compute $|F(\ell,2)|$. 
The first few set of free monomials is given  by
\begin{equation}
\begin{aligned}
F(0,2) &= \{X_0^2, Y_0^2,Z_0^2,X_0Y_0,X_0Z_0,Y_0Z_0\}~,\\
F(1,2) &= \{X_0, Y_0, Z_0\}\cdot \{X_1,Y_1,Z_1,J_1\}~,\\
F(2,2) & = \{X_0,Y_0,Z_0\}\cdot \{X_2,Y_2,Z_2,J_2\}\cup\{X_1^2, Y_1^2,Z_1^2,J_1^2,X_1Y_1,\dots\}~,\\
F(3,2) & = \{X_0,Y_0,Z_0\}\cdot \{X_3,Y_3,Z_3,J_3\}\cup\{X_1,Y_1,Z_1,J_1\}\cdot \{X_2,Y_2,Z_2,J_2\}~,\\
&\dots
\end{aligned}
\end{equation}
We anticipate the formula
\begin{equation}
\label{eq:num_Fl2}
|F(\ell,2)|=\begin{cases}
8\ell + 6, & \text{if $\ell$ is even}\\
8\ell + 4, & \text{if $\ell$ is odd}~.
\end{cases}
\end{equation}
We combine two cases 
\begin{equation}\label{eq:free_words_t2}
|F(\ell,2)| = 8\ell + 5 + (-1)^\ell~.
\end{equation}
Let us prove this. Note that
a monomial of total weight $(\ell, 2)$ must satisfy:
\begin{equation}
\sum (\text{indices} \times \text{exponents}) = \ell \quad \text{and} \quad \sum (\text{exponents}) = 2.
\end{equation}
We analyze two cases:
\\
\textbf{Case 1: Squared Variables.}
A monomial of the form $V_i^2$ contributes weight $(2i, 2)$. For this to equal $(\ell, 2)$, we require $2i = \ell$, i.e., $i = \ell/2$. 
\begin{itemize}
    \item If $\ell$ is even, $i = \ell/2$ is an integer. The variables available are $X_{\ell/2}, Y_{\ell/2}, Z_{\ell/2}$ (3 options) and $J_{\ell/2}$ (if $\ell/2 \geq 1$, since $J_0$ doesn't exist). 
    \begin{equation}
    \text{Count} = 
    \begin{cases} 
    4 & \text{if } \ell \geq 2 \text{ even}, \\
    3 & \text{if } \ell = 0.
    \end{cases}
    \end{equation}
    \item If $\ell$ is odd, no such $i$ exists. Count $= 0$.
\end{itemize}
\textbf{Case 2: Product of Two Distinct Variables.}
A monomial $V_i W_j$ with $i \neq j$ contributes weight $(i + j, 2)$. For $i + j = \ell$, we count pairs $(V_i, W_j)$:
\begin{itemize}
    \item For $i = 0$ or $j=0$: Variables $X_0, Y_0, Z_0$ (3 choices). Partner $W_j$ must satisfy $j = \ell$. For $j \geq 1$, there are 4 variables ($X_j, Y_j, Z_j, J_j$).
    \item For $i \geq 1$: Both $V_i$ and $W_j$ have 4 choices each ($X, Y, Z, J$).
\end{itemize}
Total count for this case:
\begin{equation}
\text{Count} = 
\begin{cases} 
\frac{1}{2}(3 \times 4) + \frac{n}{2}(4 \times 4) & \text{if } \ell \text{ even}, \\
(3 \times 4) + (4 \times 4) \times \left(\frac{\ell - 1}{2}\right) & \text{if } \ell \text{ odd}.
\end{cases}
\end{equation}
Simplifying, we get
\begin{equation}
\text{Count} = 
\begin{cases} 
8\ell + 6 & \text{if } \ell \text{ even}, \\
8\ell + 4 & \text{if } \ell \text{ odd}~.
\end{cases}
\end{equation}
This completes the proof. 
Using \eqref{eq:num_Fl2} and \eqref{eq:dimRl2}, we have 
\begin{equation}
H_{R_\infty}(q,t)|_{t^2} = t^2\left(5+8 q+\sum_{n \geq 2}\left(8 n+(-1)^n\right) q^n\right)~.
\end{equation}
Noting that
\begin{equation}
\sum_{n \geq 1} n q^n=\frac{q}{(1-q)^2}, \quad \sum_{n \geq 1}(-1)^n q^n=-\frac{q}{1+q}~.
\end{equation}
We have
\begin{equation}\label{eq:I_2-1_calc}
\begin{aligned}
5+8 q+\sum_{n \geq 2}\left(8 n+(-1)^n\right) q^n & =5+8 q+\left(\frac{q}{(1-q)^2}-q\right)+\left(-\frac{q}{1+q}+q\right) \\
& =4+q+\frac{8 q}{(1-q)^2}+\frac{1}{1+q}~.
\end{aligned}
\end{equation}
Using straightforward Mathematica calculation, we find that 
\begin{equation}\label{eq:I_2-2_calc}
H_{R_\infty}(q,t)|_{t^2} = \frac{1}{(q ; q)_2} \sum_{k=0}^4\binom{4}{k}_q.
\end{equation}
Because of the techincality of the discussion, we present the evaluation of the Hilbert series at order $t^3$ in Appendix \ref{app:Hilb_ser_t3}. 

\section{MacDonald Index For $(A_1,D_{2n+1})$ Theory}\label{sec:A1D2n+1_Mac_conj}
In this section, we conjecture a generalization of formula \eqref{eq:Mac_ind_A1D3} for the MacDonald index for the general $(A_1,D_{2n+1})$ theory. We motivate our formula using some observations from the Hilbert series of arc space for the Zhu algebra of chiral algebra of $(A_1,D_{5})$ theory and its relation to the MacDonald index of projected free hypermultiplet. We then check the consistency of our conjecture with the Schur limit. This requires us to prove a generalization of the product-sum formula \eqref{eq:new_idn1}, see \eqref{eq:new_id_gen_n} below. 
Let us first state the conjecture.
\begin{conj}
The MacDonald index for the $(A_1, D_{2n+1})$ Argyres-Douglas theory is given by
\begin{equation}\label{eq:Mac_ind_gen_n}
    \CI^{(A_1,D_{2n+1})}_{\mr{M}} = 1 + \sum_{m=1}^\infty 
    t^m \sum_{l=1}^m
    \frac{f^n_{m,l}(q)}{(q;q)_l}\sum_{k=0}^{2l}\binom{2l}{k}_q~,
\end{equation}
where $f^n_{m,l}(q)$ is a class of polynomial functions in $q$ which satisfy the hypergeometrical relation 
\begin{equation}\label{eq:hyper_fmnl_rel}
    1 + \sum_{m=1}^\infty q^m \sum_{l=1}^m 
    \frac{f^n_{m,l}(q)}{(q;q)_l} (q^N; q)_l 
    (q^{1-N};q)_l = q^{nN(N-1)}~.
\end{equation}
\end{conj}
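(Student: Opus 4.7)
The conjecture couples three pieces: the MacDonald formula \eqref{eq:Mac_ind_gen_n}, the hypergeometric relation \eqref{eq:hyper_fmnl_rel} that defines $f^n_{m,l}(q)$, and the explicit closed form for $f^n_{m,l}(q)$. My plan is to decouple the combinatorial content from the physical identification. First I verify that the closed form actually satisfies \eqref{eq:hyper_fmnl_rel}, so that both characterizations of $f^n_{m,l}$ coincide; then I attack the physical identification through three independent checks (Schur limit, Hilbert series of the arc space of the Zhu algebra, and RG factorization) that together overdetermine the proposed index.

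For the internal consistency step, I would substitute the closed form for $f^n_{m,l}(q)$ into the LHS of \eqref{eq:hyper_fmnl_rel}, interchange the $m$, $l$, and $i_3,\dots,i_n$ summations, and recognize each inner sum as a truncated $q$-binomial theorem or as an instance of the $_2\phi_1$ summation \eqref{eq:phi_identity}. I would induct on $n$: the base $n=1$ reduces the relation to a one-line calculation using \eqref{eq:S_1_sum}, and at each step the quadratic form $Q(i_2,\dots,i_k)$ provides the extra Gaussian factor needed to telescope one auxiliary variable, matching the increase $q^{(n-1)N(N-1)} \to q^{nN(N-1)}$ on the RHS.

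For the Schur limit, setting $t=q$ in \eqref{eq:Mac_ind_gen_n} and comparing with \eqref{eq:schur_D2n+1_pochh_form} reduces to a direct generalization of Theorem~\ref{thm:main_id}:
\begin{equation}
\frac{(z;q^{2n+1})_\infty(q^{2n+1}/z;q^{2n+1})_\infty(q^{2n+1};q^{2n+1})_\infty}{(z;q)_\infty(q/z;q)_\infty(q;q)_\infty} = 1 + \sum_{m=1}^\infty q^m \sum_{l=1}^m \frac{f^n_{m,l}(q)}{(q;q)_l}\sum_{k=-l}^l \binom{2l}{l+k}_q z^k,
\end{equation}
evaluated at $z=1$. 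Following the $n=1$ proof verbatim, I would multiply both sides by $(z;q)_\infty(q/z;q)_\infty(q;q)_\infty$, apply Jacobi's triple product to get $\sum_N (-1)^N q^{(2n+1)\binom{N}{2}} z^N$ on the left, and on the right combine with $\sum_m (-1)^m q^{\binom{m}{2}} z^m$ to extract the coefficient of $z^N$. Using the Andrews identity \eqref{eq:identity_qx} to collapse the $k$-sum into $(q^N;q)_l(q^{1-N};q)_l$, the coefficient-matching identity is precisely the hypergeometric relation \eqref{eq:hyper_fmnl_rel}, established in the previous step. The Schur limit then follows by setting $z=1$ and using $\lim_{z\to 1}(z;q^{2n+1})_\infty/(z;q)_\infty=(q^{2n+1};q^{2n+1})_\infty/(q;q)_\infty$. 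In parallel I would run two physical checks: compute the Hilbert series of the arc space of the Zhu algebra of $\CV_{(A_1,D_{2n+1})}=\mf{su}(2)_{-4n/(2n+1)}$ via Theorem~\ref{thm:Hil_ser_mac} and formula \eqref{eq: hilbert_ser_Rinfty_gen}, matching order by order in $t$ against \eqref{eq:Mac_ind_gen_n} as done for $n=1$ in Section~\ref{sec:Hilb_arc_MacD3} and Appendix~\ref{app:Hilb_ser_t3}; and check that the product of $\CI_M^{(A_1,A_{2(n-1)})}$ in \eqref{eq:Mac_A_1A2n} and the projected free hypermultiplet index equals \eqref{eq:Mac_ind_gen_n} up to the correction factor expected from the quantum Drinfeld--Sokolov reduction along the RG flow \eqref{eq:RG_flowD2n+1to2n}.

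The principal obstacle is the hypergeometric identity \eqref{eq:hyper_fmnl_rel} itself. For $n=1$ there is a single free summation index and the identity reduces to a standard $_2\phi_1$ evaluation, but for $n\geq 2$ the closed form carries $n-2$ auxiliary indices with a quadratic form in the exponent of $q$, and the straightforward iterated application of basic hypergeometric summation may fail to close. A cleaner route may be an inductive construction of $f^n_{m,l}(q)$ directly from $f^{n-1}_{m,l}(q)$ mirroring the step $(A_1,D_{2n-1})\to(A_1,D_{2n+1})$, physically corresponding to a modification of the irregular puncture; a secondary obstacle is that the Hilbert series check requires a closed-form Gröbner basis for the arc space ideal of $\mf{su}(2)_{-4n/(2n+1)}$, which is not currently available for general $n$, so that step will proceed as numerical verification at fixed $n$ rather than as a rigorous proof.
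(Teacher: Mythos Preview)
Your proposal is correct and tracks the paper's own treatment essentially step for step: the hypergeometric relation \eqref{eq:hyper_fmnl_rel} is established in Appendix~\ref{app:new_id} by exactly the inductive telescoping you describe (Lemma~\ref{lemma:S_kqN_rel}, peeling off one summation index at a time via the $_2\phi_1$ evaluation \eqref{eq:phi_identity}, with the quadratic form $Q$ supplying the Gaussian shift), the Schur limit is handled in Section~\ref{sec:schur_gen_n} via precisely the Jacobi triple product plus Andrews identity argument you outline, and the Hilbert-series and RG-flow checks are carried out as you say. Your anticipated obstacle---that the iterated hypergeometric summation might fail to close for $n\ge 2$---does not materialize: the induction in Lemma~\ref{lemma:S_kqN_rel} closes cleanly, so that part of your plan goes through without the workaround you sketch.
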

As we explain in Section \ref{sec:schur_gen_n}, the condition \eqref{eq:hyper_fmnl_rel} on $f^n_{m,l}$ makes sure that our proposed formula satisfy the Schur limit. 
In Appendix \ref{app:new_id}, we prove that the functions
\begin{equation}\label{eq:fnml_closed_form_sec5}
f^k_{m,l}(q)=\sum_{i_k,i_{k-1},\dots,i_3\geq 0}\frac{q^{Q(m-(k-1)i_k-(k-2)i_{k-1}-\dots-l,i_3,\dots,i_k)}(q;q)_{l}}{(q;q)_{i_k}\dots (q;q)_{i_3}(q;q)_{m-(k-1)i_k-(k-2)i_{k-1}-\dots-l}(q;q)_{2l+i_3+2i_4+\dots+(k-2)i_k-m}}~,    
\end{equation}
where 
\begin{equation}\label{eq:Q_def_sec5}
    Q(i_2,i_3,\dots,i_k):=i_k^2+(i_k+i_{k-1})^2+\dots+(i_k+\dots+i_3+i_2)^2~,
\end{equation}
satisfy \eqref{eq:hyper_fmnl_rel}. It was checked on Mathematica that with $f^k_{m,l}$ given above, the proposed MacDonald index \eqref{eq:Mac_A1D2n+1_conj} reproduces the known ccoefficients in Tables \ref{tab:coeff_mac_A1D5} -- \ref{tab:coeff_mac_A1D9}.
\subsection{Motivating The Formula Using Hilbert Series}
As discussed in Section \ref{sec:RG_flow} and illustrated in Section \ref{sec:D3_free_hyper_rel}, the MacDonald index for $(A_1,D_{2n+1})$ theory must be related to that of the (projected) free hypermultiplet. 
\begin{table}[]
    \centering
    \begin{tabular}{|l|lllllllll|}
\hline Variables & $t^0$ & $t^1$ & $t^2$ & $t^3$ & $t^4$ & $t^5$ & $t^6$ & $t^7$ & $t^8$ \\
\hline
$q^0$ & 1 & 3 & 5 & 7 & 9 & 11 & 13 & 15 & 17 \\
$q^1$ & 0 & 4 & 11 & 17 & 23 & 29 & 35 & 41 & 47 \\
$q^2$ & 0 & 4 & 21 & 40 & 56 & 72 & 88 & 104 & 120 \\
$q^3$ & 0 & 4 & 27 & 73 & 114 & 150 & 186 & 222 & 258 \\
$q^4$ & 0 & 4 & 37 & 119 & 218 & 303 & 381 & 459 & 537 \\
$q^5$ & 0 & 4 & 43 & 173 & 361 & 546 & 709 & 863 & 1017 \\
$q^6$ & 0 & 4 & 53 & 243 & 583 & 942 & 1279 & 1586 & 1882 \\
\hline
\end{tabular}
    \caption{Table of selected coefficients of the Macdonald index for $(A_1,D_5)$ theory.}
    \label{tab:coeff_mac_A1D5}
\end{table}
\begin{table}[]
    \centering
    \begin{tabular}{|l|lllllllll|}
\hline Variables & $t^0$ & $t^1$ & $t^2$ & $t^3$ & $t^4$ & $t^5$ & $t^6$ & $t^7$ & $t^8$ \\
\hline $q^0$ & 1 & 3 & 5 & 7 & 9 & 11 & 13 & 15 & 17 \\
$q^1$ & 0 & 4 & 11 & 17 & 23 & 29 & 35 & 41 & 47 \\
$q^2$ & 0 & 4 & 21 & 43 & 61 & 79 & 97 & 115 & 133 \\
$q^3$ & 0 & 4 & 27 & 77 & 127 & 169 & 211 & 253 & 295 \\
$q^4$ & 0 & 4 & 37 & 123 & 243 & 349 & 443 & 537 & 631 \\
$q^5$ & 0 & 4 & 43 & 177 & 401 & 638 & 846 & 1038 & 1230 \\
$q^6$ & 0 & 4 & 53 & 247 & 639 & 1109 & 1552 & 1952 & 2332 \\
\hline
\end{tabular}
    \caption{Table of selected coefficients of the Macdonald index for $(A_1,D_7)$ theory.}
    \label{tab:coeff_mac_A1D7}
\end{table}
\begin{table}[]
    \centering
    \begin{tabular}{|l|lllllllll|}
\hline Variables & $t^0$ & $t^1$ & $t^2$ & $t^3$ & $t^4$ & $t^5$ & $t^6$ & $t^7$ & $t^8$ \\
\hline $q^0$ & 1 & 3 & 5 & 7 & 9 & 11 & 13 & 15 & 17 \\
$q^1$ & 0 & 4 & 11 & 17 & 23 & 29 & 35 & 41 & 47 \\
$q^2$ & 0 & 4 & 21 & 43 & 61 & 79 & 97 & 115 & 133 \\
$q^3$ & 0 & 4 & 27 & 77 & 130 & 174 & 218 & 262 & 306 \\
$q^4$ & 0 & 4 & 37 & 123 & 247 & 362 & 462 & 562 & 662 \\
$q^5$ & 0 & 4 & 43 & 177 & 405 & 663 & 892 & 1100 & 1308 \\
$q^6$ & 0 & 4 & 53 & 247 & 643 & 1149 & 1644 & 2089 & 2507 \\
\hline
\end{tabular}
    \caption{Table of selected coefficients of the Macdonald index for $(A_1,D_9)$ theory.}
    \label{tab:coeff_mac_A1D9}
\end{table}
Let us write down the Macdonald index for $(A_1, D_3)$ as 
\begin{equation}
\mathcal{I}_{\mathrm{M}}^{\left(A_1, D_3\right)}=\sum_{m=0}^{\infty} \frac{t^m}{(q ; q)_m} \sum_{k=0}^{2 m}\binom{2 m}{k}_q =: \sum_{m=0}^\infty t^m \mathcal{I}_m(q)~,
\end{equation}
where 
\begin{equation}
\mathcal{I}_m(q):= \frac{1}{(q ; q)_m} \sum_{k=0}^{2 m}\binom{2 m}{k}_q~.   
\end{equation}
We anticipate that the relation of the MacDonald index of $(A_1,D_{2n+1})$ theory to that of the (projected) free hypermultiplet is through the functions $\CI_m$. So, we want to express the Macdonald index of $(A_1, D_{2n+1})$ in terms of $\mathcal{I}_m$.

We start from $(A_1, D_5)$ theory, where the Zhu algebra is known for its corresponding chiral algebra $\mathcal{V}_{(A_1, D_5)}=\mathfrak{s u}(2)_{-\frac{8}{5}}$ \cite{Beem:2017ooy}\footnote{Again, we are using the basis of $\mf{su}(2)$ given in \eqref{eq:basis_change_su(2)} and \eqref{eq:var_su(2)_zhu}.}:
\begin{equation}
R^{(A_1, D_5)}:=A\left(\mathcal{V}_{\left(A_1, D_5\right)}\right) \cong \frac{\mathbb{C}[X, Y, Z, J]}{\left\langle f_1, f_2, f_3, f_4, f_5\right\rangle}~,
\end{equation}
where 
\begin{equation}\label{eq:const_D4}
f_1:= XZ - Y^2 - J, \quad  f_2:= J^2 X, \quad f_3:= J^2Y, \quad f_4:= J^2 Z, \quad f_5:= J^3~.
\end{equation}
By Theorem \ref{thm:Hil_ser_mac}, the Macdonald index for $(A_1,D_5)$ theory is equal to the Hilbert series of the arc space of the associated Zhu algebra:
\begin{equation}
\mathcal{I}^{(A_1, D_5)}_{\mr{M}}(q,t) = H_{R^{(A_1, D_5)}_{\infty}}(q, t)~.
\end{equation}
Let us compute the Hilbert series of $R^{(A_1, D_5)}_\infty$ till $O(t^3)$. We will again use the deglex monomial ordering with the ordering on variables given by \eqref{eq:var_ord} along with the general formula \eqref{eq: hilbert_ser_Rinfty_gen}.

\paragraph{Order $t^0$.} At order $t^0$, we only have one generator 1. Thus, the series starts with 1. 
\paragraph{Order $t^1$.} From \eqref{eq:const_D4}, we see that there are no elements of the Gr\"{o}bner basis of $I_n$ for any $n$, which has weight $(\ell,1)$. Thus we have 
\begin{equation}
\begin{aligned}
H_{R^{(A_1, D_5)}_{\infty}}(q, t)&=\mr{Ser}_{q=0, t=0} \frac{1}{(1-t)^3 \prod_{i=1}^{\infty}\left(1-q^i t\right)^4} \quad\left(\mr{mod} ~t^2\right)\\
& = 1+ t \mathcal{I}_1~,
\end{aligned}
\end{equation}
where we used the computations of \eqref{eq:t1_comp_D3}.
\paragraph{Order $t^2$.} From Mathematica computations, we find that the leading monomials of a Gr\"{o}bner basis $G_\infty$ with weight  $(\ell,m=2)$ are given by 
\begin{equation}\label{eq:Grob_D5t2}
\{X_0 Z_\ell:\ell\geq 0\}~.    
\end{equation}
Then using \eqref{eq: hilbert_ser_Rn_gen}, we get 
\begin{equation}
\begin{aligned}
\left.H_{R^{(A_1, D_5)}_{\infty}}(q, t)\right|_{t^2}&=\left.\mr{Ser}_{q=0, t=0} \frac{1-t^2 \sum_{\ell=0}^{\infty}q^\ell}{(1-t)^3 \prod_{i=1}^{\infty}\left(1-q^i t\right)^4}\right|_{t^2}~.
\end{aligned}
\end{equation}
Note that 
\begin{equation}
\sum_{\ell=0}^{\infty}q^\ell=1+q+\sum_{\ell=2}^\infty q^\ell=1+4q+5\sum_{\ell=2}^{\infty}q^\ell-q\left(3+4\sum_{\ell=1}^{\infty}q^\ell\right)=1+4q+5\sum_{\ell=2}^{\infty}q^\ell-q\CI_1~.    
\end{equation}
Thus using \eqref{eq:free_words_t2}, we get 
\begin{equation}
\begin{split}
\left.H_{R^{(A_1, D_5)}_{\infty}}(q, t)\right|_{t^2}&=\left. \mr{Ser}_{q=0, t=0} \frac{1-t^2 (1+ 4 q + 5\sum_{\ell=2}^{\infty}q^\ell)+t^2q\mathcal{I}_1}{(1-t)^3 \prod_{i=1}^{\infty}\left(1-q^i t\right)^4}\right|_{t^2}
\\
&=t^2\left(\sum_{\ell=0}^\infty (8\ell+5+(-1)^\ell)q^\ell-1- 4 q - 5\sum_{\ell=2}^{\infty}q^\ell+q\CI_1\right)
\\
&=t^2\left(5+8q+\sum_{\ell=2}^\infty (8\ell+(-1)^\ell)q^\ell+q\CI_1\right)
\\
& = t^2\left(\mathcal{I}_2+q\mathcal{I}_1\right)~,
\end{split}    
\end{equation}
where we used the calculation from \eqref{eq:I_2-1_calc} -- \eqref{eq:I_2-2_calc}.
\paragraph{Order $t^3$.} From Mathematica computations, we find that the leading monomials of a Gr\"{o}bner basis $G_\infty$ with weight  $(\ell,m=3)$ are given by 
\begin{align}
&\{J_1^2X_0,J_1^2 Y_0,J^2_1 Z_0\},\quad \ell=2~,
\\
&\{J_1^2 X_{\ell-2},J_1^2 Y_{\ell-2},J^2_1 Z_{\ell-2}, J_1^2 J_{\ell-2}\},\quad \ell>2~.
\end{align}
Combined with \eqref{eq:Grob_D5t2} and using \eqref{eq: hilbert_ser_Rinfty_gen}, we have
\begin{equation}
\begin{aligned}
H_{R^{(A_1, D_5)}_{\infty}}(q, t)|_{t^3} & = \mr{Ser}_{q=0, t=0} \frac{1-t^2 \sum_{\ell=0}^{\infty}q^\ell -t^3(3 q^2 +4\sum_{\ell=3}^\infty q^\ell)}{(1-t)^3 \prod_{i=1}^{\infty}\left(1-q^i t\right)^4}\bigg|_{t^3}~.
\end{aligned}
\end{equation}
From \eqref{eq:Hil_ser_modt4} and \eqref{eq:I-3_ser} we see that 
\begin{equation}
    \CI_3=\left.\mr{Ser}_{q=0,t=0}\frac{1-t^2(1+4q + 5\sum_{\ell=2}^\infty q^\ell)+t^3(4q + 8 q^2 + 11 q^3 + 12 \sum_{\ell\geq 4}q^\ell)}{(1-t)^3 \prod_{i=1}^{\infty}\left(1-q^i t\right)^4}\right|_{t^3}~.
\end{equation}
Noting that 
\begin{equation}
\begin{split}
    1-t^2 \sum_{\ell=0}^{\infty}q^\ell -t^3\left(3 q^2 +4\sum_{\ell=3}^\infty q^\ell\right)&=t^2 q \mathcal{I}_1 - t^3(3q \mathcal{I}_1 +q^3 \mathcal{I}_1-5 q -q^2)+1-t^2\left(1+4q + 5\sum_{\ell=2}^\infty q^\ell\right)\\&+t^3\left(4q + 8 q^2 + 11 q^3 + 12 \sum_{\ell\geq 4}q^\ell\right)~,
\end{split}    
\end{equation}
we get 
\begin{equation}
\begin{split}
 H_{R^{(A_1, D_5)}_{\infty}}(q, t)|_{t^3} & = \mathcal{I}_3 +\mr{Ser}_{q=0, t=0} \frac{t^2 q \mathcal{I}_1 - t^3(3q \mathcal{I}_1 +q^3 \mathcal{I}_1-5 q -q^2)}{(1-t)^3 \prod_{i=1}^{\infty}\left(1-q^i t\right)^4}\bigg|_{t^3}
 \\
 &=\mathcal{I}_3 + q \mathcal{I}_1\mr{Ser}_{q=0, t=0} \frac{1}{(1-t)^3 \prod_{i=1}^{\infty}\left(1-q^i t\right)^4}\bigg|_{t} -\mr{Ser}_{q=0, t=0} \frac{t^3(3q \mathcal{I}_1 +q^3 \mathcal{I}_1-5 q -q^2)}{(1-t)^3 \prod_{i=1}^{\infty}\left(1-q^i t\right)^4}\bigg|_{t^3}
 \\
& = \mathcal{I}_3 + q \mathcal{I}_1^2 -3q \mathcal{I}_1 -q^3 \mathcal{I}_1+5 q +q^2~.   \end{split}    
\end{equation}
We can express $\mathcal{I}_1, \mathcal{I}_1^2$ by using $\mathcal{I}_2$ by
\begin{equation}
\mathcal{I}_1 = \frac{(1-q^2) \mathcal{I}_2 - 5+ q}{3 q + q^3}~, \quad 
\mathcal{I}_1^2 = \frac{1}{q}\left((1+q)\mathcal{I}_2 -5 - 4q - q^2\right)~. 
\end{equation}
Then
\begin{equation}
H_{R^{(A_1, D_5)}_{\infty}}(q, t)|_{t^3} = \mathcal{I}_3 + (q + q^2)\mathcal{I}_2~.
\end{equation}
The calculation above suggests the following form for the MacDonald index:
\begin{equation}\label{eq:Mac_A1D5_conj}
    \CI_{\mr{M}}^{(A_1,D_5)}(q,t)=1+\sum_{m=1}^\infty 
    q^m \sum_{l=1}^m
    \frac{f^2_{m,l}(q)}{(q;q)_l}\sum_{k=0}^{2l}\binom{2l}{k}_q~,
\end{equation}
where $f^2_{m,l}(q)$ is a polynomial which we found for some special values of $m,l$ above:
\begin{equation}
\begin{split}
    f^2_{1,1}&=1,
    \\
    f^2_{2,1}&=q, \quad f^2_{2,2}=1~,
    \\
    f^2_{3,1}&=0, \quad f^2_{3,2}=q+q^2, \quad f^2_{3,3}=1~. 
\end{split}    
\end{equation}
Using Mathematica computations, and matching the results to the MacDonald index \eqref{eq:Mac_class_S-1} -- \eqref{eq:Mac_class_S-7} obtained from class $\CS$ description, we were able to compute the functions $f^2_{m,l}(q)$ for high values of $m,l$. The result is recorded in Appendix \ref{app:data_fmnl} and are consistent with the closed formula \eqref{eq:fnml_closed_form}.
\par

Numerical evidence suggests the general form of MacDonald index for $(A_1,D_{2n+1})$ theory to be
\begin{equation}\label{eq:Mac_A1D2n+1_conj}
\CI_{\mr{M}}^{(A_1,D_{2n+1})}(q,t)=1+\sum_{m=1}^\infty 
    t^m \sum_{l=1}^m
    \frac{f^n_{m,l}(q)}{(q;q)_l}\sum_{k=0}^{2l}\binom{2l}{k}_q~,    
\end{equation}
for some polynomials $f^n_{m,l}(q)$. Again, matching with the MacDonald index \eqref{eq:Mac_class_S-1} -- \eqref{eq:Mac_class_S-7}, we were able to compute the polynomials $f_{m,l}^n(q)$, the results are recorded in Appendix \ref{app:data_fmnl} for $n=2,3$. 
\par 
In addition, we checked on Mathematica that the Hilbert series of the arc space of the variety 
\begin{equation}\label{eq:variety_gen_n}
    R:=\frac{\C[X,Y,Z,J]}{\langle g_1,g_2,g_3,g_4,g_5\rangle}~,
\end{equation}
where 
\begin{equation}\label{eq:const_gen_n}
    g_1:=XZ-Y^2-J\quad g_2:=J^{n+1}X,\quad g_3:=J^{n+1}Y,\quad g_4:=J^{n+1}Z,\quad g_5:=J^{n+1}~,
\end{equation}
reproduces the MacDonald index \eqref{eq:Mac_ind_gen_n} to high orders in $t$. This suggests the following conjecture:
\begin{conj}
The Zhu algebra of $\mf{su}(2)_{-\frac{4n}{2n+1}}$ is given by 
\begin{equation}
    A\left(\mathfrak{su}(2)_{-\frac{4n}{2n+1}}\right)\cong R~,
\end{equation}
with $R$ given by \eqref{eq:variety_gen_n}.
\end{conj}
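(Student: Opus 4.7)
The plan is to combine the polynomial presentation of the Zhu algebra established in Section \ref{sec:Mac_Hil_ser_arc} with the structure theory of singular vectors of affine $\mf{su}(2)$ at boundary admissible level, and then to close the argument by comparing graded Hilbert series. First I would identify the strong generators. The affine VOA $\mf{su}(2)_{-4n/(2n+1)}$ is strongly generated by the three currents $j^+, j^-, j^0$, and for the purposes of the polynomial presentation it is convenient to enlarge the generating set by the Sugawara stress tensor $T$, treating the Sugawara construction as an additional defining relation. Under the basis change \eqref{eq:basis_change_su(2)} and the identifications \eqref{eq:var_su(2)_zhu}, the projection rule established in Section \ref{sec:Mac_Hil_ser_arc}, which drops all derivative terms from the normal-ordered polynomial representative of a singular vector, sends the Sugawara relation to the polynomial $g_1 = XZ - Y^2 - J$, accounting for the first generator of the ideal.

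Next I would identify the remaining defining relations as Zhu-algebra images of the singular vectors generating the maximal ideal of $V^k(\mf{su}(2))$ at $k = -4n/(2n+1)$. At this boundary admissible level the Kac--Wakimoto and Adamovi\'c--Milas analysis identifies a distinguished singular vector whose zero-mode descendants produce the null states $T^{n+1}=0$ and $j^a T^n = 0$ tabulated in the main text for $n=1,2$. The aforementioned projection rule applied to the normal-ordered polynomial representatives of these null states yields $J^{n+1}, J^n X, J^n Y, J^n Z$, matching the stated generators $g_2, \dots, g_5$ up to the redundancy $J^{n+1}X = X \cdot g_5$ which makes the two forms equivalent as generators of the ideal.

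At this point one has a surjective algebra map from $\C[X,Y,Z,J]/\langle g_1, \dots, g_5\rangle$ onto $A\bigl(\mf{su}(2)_{-4n/(2n+1)}\bigr)$, and to upgrade it to an isomorphism I would compare graded Hilbert series. The Hilbert series of the source can be computed directly by the Gr\"obner basis algorithm of Section \ref{sec:comp_Hilb_ser}; after eliminating $J$ via $g_1$, the source reduces to $\C[X,Y,Z]$ modulo $\langle X(XZ-Y^2)^n, Y(XZ-Y^2)^n, Z(XZ-Y^2)^n, (XZ-Y^2)^{n+1}\rangle$, whose graded dimensions are elementary to evaluate. The Hilbert series of the target is on the other hand determined by Arakawa's theorem on associated varieties of admissible-level affine VOAs, which identifies $\mr{Spec}\,A\bigl(L_k(\mf{su}(2))\bigr)$ at boundary admissible level with the nilpotent cone of $\mf{su}(2)$ thickened with the appropriate scheme-theoretic multiplicity. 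Agreement of the two graded Hilbert series then upgrades the surjection to an isomorphism.

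The main obstacle is ruling out further independent singular vectors whose Zhu-algebra projections might impose relations outside $\langle g_1, \dots, g_5\rangle$. For $n = 1, 2$ the explicit null-state list of \cite{Beem:2017ooy} suffices, but at general $n$ one must either carry out a systematic bookkeeping of descendants of the distinguished singular vector or, conceptually more cleanly, invoke the Hilbert-series comparison above to force equality of the ideals. The latter route is preferable, since Arakawa's associated variety provides a clean match on the target side and the Gr\"obner computation of Section \ref{sec:comp_Hilb_ser} does the same on the source side, so that equality of Hilbert series automatically rules out the existence of extra relations.
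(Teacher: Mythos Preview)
The paper does not prove this statement: it is explicitly stated as a conjecture, supported only by the Mathematica check that the Hilbert series of the arc space of $R$ reproduces the MacDonald index \eqref{eq:Mac_ind_gen_n} to high order in $t$. So there is no ``paper's own proof'' to compare against; you are attempting to prove something the authors left open.

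Your outline has the right shape---build a surjection $R\twoheadrightarrow A(\CV)$ from known null states, then rule out extra relations by a dimension count---but the second step does not close. Arakawa's theorem identifies only the \emph{reduced} associated variety $X_\CV=\operatorname{Spec}\bigl(A(\CV)/\sqrt{0}\bigr)$ with the nilpotent cone; it says nothing about the scheme-theoretic thickening, and it is precisely that thickening (the nilpotent multiplicity carried by $J$) that distinguishes the different values of $n$. Without the scheme structure you cannot extract the graded Hilbert series of $A(\CV)$ on the target side, so the comparison you propose is circular: you would need to already know $A(\CV)$ as a scheme to compute the quantity that is supposed to pin it down. The phrase ``thickened with the appropriate scheme-theoretic multiplicity'' is doing all the work here and is not supplied by the cited result.

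There is also a smaller gap on the surjection side. The null states $j^aT^n=0$ and $T^{n+1}=0$ are stated in the paper (via \cite{Beem:2017ooy}) only for $n=1,2$; for general $n$ you would need to exhibit them explicitly from the Adamovi\'c--Milas singular vector at level $-2+2/(2n+1)$ and verify that their $C_2$-images are exactly $J^nX,J^nY,J^nZ,J^{n+1}$ after dropping derivative terms. This is plausible but is a genuine computation, not a citation. (You correctly noticed that the paper's $g_2,\ldots,g_4$ as written are redundant with $g_5$; the intended generators, consistent with the $n=1,2$ cases, are $J^nX,J^nY,J^nZ$.) If you can carry out that computation and show that the maximal ideal of $V^k(\mf{su}(2))$ is generated by a single singular vector whose $C_2$-image lands in $\langle g_1,\ldots,g_5\rangle$, then the paper's own result that $A(\CN)$ is generated by singular-vector images would give the reverse inclusion directly, bypassing the Hilbert-series argument altogether. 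That is the route most likely to succeed.
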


\subsection{Matching The Schur Limit}\label{sec:schur_gen_n}
Matching the Schur limit of the formula \eqref{eq:Mac_A1D2n+1_conj} amounts to proving the following product-sum identity:
\begin{equation}\label{eq:new_id_gen_n}
   \left(\frac{\left(q^{2n+1} ; q^{2n+1}\right)_{\infty}}{(q ; q)_{\infty}}\right)^3= 1 + \sum_{m=1}^\infty 
    q^m \sum_{l=1}^m
    \frac{f^n_{m,l}(q)}{(q;q)_l}\sum_{k=0}^{2l}\binom{2l}{k}_q~.
\end{equation}
Closely examining the proof of Theorem \ref{thm:main_id}, we notice that \eqref{eq:new_id_gen_n} holds if
$f^n_{m,l}$ satisfies the hypergeometric relation:
\begin{equation}\label{eq:fmnl_def}
    1 + \sum_{m=1}^\infty q^m \sum_{l=1}^m 
    \frac{f^n_{m,l}(q)}{(q;q)_l} (q^N; q)_l 
    (q^{1-N};q)_l = q^{nN(N-1)}~.
\end{equation}
Indeed, we can prove the more general identity 
\begin{equation}\label{eq:main_id_gen}
\frac{\left(z ; q^{2n+1}\right)_{\infty}\left(q^{2n+1} / z ; q^{2n+1}\right)_{\infty}\left(q^{2n+1} ; q^{2n+1}\right)_{\infty}}{(z ; q)_{\infty}(q / z ; q)_{\infty}(q ; q)_{\infty}}=\sum_{m=0}^{\infty}q^m\sum_{l=1}^m \frac{f^n_{m,l}(q)}{(q ; q)_l} \sum_{k=-l}^l{
2 l \choose
l+k}_q z^k~.    
\end{equation}
The idea is to again compare the powers of $z^N$ on both sides of \eqref{eq:main_id_gen}:
\begin{equation}\label{eq:genid_intermediate}
\sum_{N=-\infty}^{\infty}(-1)^N q^{(2n+1)\binom{N}{2}} z^N=\sum_{r=-\infty}^{\infty}(-1)^r q^{\binom{r}{2}} z^r \sum_{m=0}^{\infty}q^m \sum_{l=1}^m \frac{f^n_{m,l}(q)}{(q ; q)_l} \sum_{k=-l}^l{
2 l \choose
l+k}_q z^k~.     
\end{equation}
Again, we will only show that coefficients of $z^N$ match on both sides for $N\geq 0$ and use the $z\to z^{-1}$ symmetry on both sides for $N<0$. The coefficient of $z^N$ in the \tsf{LHS} of \eqref{eq:genid_intermediate} is 
\begin{equation}
    (-1)^Nq^{(2n+1){N\choose 2}}
\end{equation}
The coefficient of $z^N$ on the \tsf{RHS} of \eqref{eq:genid_intermediate} is
\begin{equation}\label{eq:RHS_zN_coeff}
\begin{aligned}
& \sum_{m=0}^{\infty}q^m\sum_{l=1}^m \frac{f^n_{m,l}(q)}{(q ; q)_l} \sum_{k=-l}^l{
2 l \choose
l+k}_q(-1)^{N-k} q^{\binom{N-k}{2}} \\
& \quad=(-1)^N q^{\binom{N}{2}} \sum_{m=0}^{\infty} \sum_{l=1}^m \frac{f^n_{m,l}(q)}{(q ; q)_l} \sum_{k=-l}^l{
2 l \choose
l+k}_q(-1)^k q^{\binom{k}{2}+k(1-N)} \\
& \quad=(-1)^N q^{\binom{N}{2}} \sum_{m=0}^\infty q^m \sum_{l=1}^m 
    \frac{f^n_{m,l}(q)}{(q;q)_l} (q^N; q)_l 
    (q^{1-N};q)_l ~.
\end{aligned}
\end{equation}
Now using \eqref{eq:fmnl_def} we get 
\begin{equation}
\sum_{m=0}^{\infty}q^m\sum_{l=1}^m \frac{f^n_{m,l}(q)}{(q ; q)_l} \sum_{k=-l}^l{
2 l \choose
l+k}_q(-1)^{N-k} q^{\binom{N-k}{2}} =(-1)^N q^{\binom{N}{2}}q^{nN(N-1)}=(-1)^Nq^{(2n+1){N\choose 2}}~.    
\end{equation}
This proves the identity \eqref{eq:main_id_gen}. Taking $z\to 1$ in \eqref{eq:main_id_gen} gives \eqref{eq:new_id_gen_n}. 
\subsection{Consistency With \tsf{RG} Flow}\label{sec:prod_A2n_fH_Mac}
From the discussion in Section \ref{sec:RG_flow}, since $(A_{1},D_{2n+1})$ flows to free hypermultiplet and $(A_1,A_{2(n-1)})$ theory, we expect that the MacDonald index satisfy
\begin{equation}
    \CI^{(A_1,D_{2n+1})}_{\mr{M}}\sim \CI^{(A_1,A_{2(n-1)})}_{\mr{M}}\cdot \CI^{\mr{HM}/\Z_2}_{\mr{M}}~.
\end{equation}
As an example, we compute the coefficients of the product of MacDonald indices of $(A_1,A_2)$ theory and the projected free hyper in Table \ref{tab:coeff_mac_A1A2_fH}. Comparing it with the coefficients of the MacDonald index for $(A_1,D_5)$ theory in Table \ref{tab:coeff_mac_A1D5}, we find that the coefficients above the diagonal match. 
\par
We now show that the product of the MacDonald indices of $(A_1,A_{2(n-1)})$ theory and the projected free hypermultiplet matches the MacDonald index of $(A_1,D_{2n+1})$ up to a correction factor. We begin by showing this for $n=2$ where the analysis is cleaner. 
\begin{table}[]
    \centering
    \begin{tabular}{|l|lllllllll|}
\hline Variables & $t^0$ & $t^1$ & $t^2$ & $t^3$ & $t^4$ & $t^5$ & $t^6$ & $t^7$ & $t^8$ \\
\hline
$q^0$ & 1 & 3 & 5 & 7 & 9 & 11 & 13 & 15 & 17 \\
$q^1$ & 0 & 5 & 11 & 17 & 23 & 29 & 35 & 41 & 47 \\
$q^2$ & 0 & 8 & 24 & 40 & 56 & 72 & 88 & 104 & 120 \\
$q^3$ & 0 & 9 & 42 & 78 & 114 & 150 & 186 & 222 & 258 \\
$q^4$ & 0 & 12 & 69 & 147 & 225 & 303 & 381 & 459 & 537 \\
$q^5$ & 0 & 13 & 98 & 247 & 401 & 555 & 709 & 863 & 1017 \\
$q^6$ & 0 & 16 & 141 & 402 & 698 & 994 & 1290 & 1586 & 1882 \\
\hline
\end{tabular}
    \caption{Table of selected coefficients of the Macdonald index for $(A_1,A_2)\otimes\mr{HM}/\mathds{Z}_2$ theory. The coefficients above the diagonal matches with the index for $(A_1,D_5)$ in Table \ref{tab:coeff_mac_A1D5}.}
    \label{tab:coeff_mac_A1A2_fH}
\end{table}
\subsubsection{$(A_1,D_5)\to(A_1,A_2)\otimes\mr{HM}/\mathds{Z}_2$}
The MacDonald index of the $(A_1,A_2)$ theory is given by \cite{Bhargava:2023hsc}\footnote{Note again that because of our convention, we do no have a factor of $q^n$, compared to \cite{Bhargava:2023hsc}.}:
\begin{equation}
    \CI_{\mr{M}}^{(A_1,A_2)}(q,t)=\sum_{n=0}^{\infty}t^n\frac{q^{n^2}}{(q;q)_n}~.
\end{equation}
The MacDonald index of the $\mathds{Z}_2$-projected free hypermultiplet is,
\begin{equation}
    \CI_{\mr{M}}^{\mr{HM}/\mathds{Z}_2}(q,t)=\sum_{l=0}^{\infty}\frac{t^l}{(q;q)_{2l}}\sum_{k=0}^{2l}{2l\choose k}_q
\end{equation}
Taking a product of these two indices, we obtain,
\begin{equation}
    \sum_{n,l=0}^{\infty}t^{n+l}\frac{q^{n^2}}{(q;q)_{2l}(q;q)_n}\sum_{k=0}^{2l}{2l\choose k}_q
\end{equation}
Substituting $m=n+l$,
\begin{equation}
    \sum_{m=0}^{\infty}t^m\sum_{l=0}^m\frac{q^{(m-l)^2}}{(q;q)_{m-l}}\frac{\textcolor{red}{1}}{\textcolor{red}{(q;q)_{2l}}}\sum_{k=0}^{2l}{2l\choose k}_q
\end{equation}
Now, we turn to the $(A_1,D_5)$ index. Using the fact that 
\begin{equation}
    f^2_{m,l}(q)=q^{(m-l)^2}{l\choose m-l}_q=q^{(m-l)^2}\frac{(q;q)_{l}}{(q;q)_{m-l}(q;q)_{2l-m}}~,
\end{equation}
we have 
\begin{equation}
\begin{split}
    \CI_{\mr{M}}^{(A_1,D5)}(q,t)&=\sum_{m=0}^{\infty}t^m\sum_{l=0}^{m}\frac{q^{(m-l)^2}}{(q;q)_l}\frac{(q;q)_{l}}{(q;q)_{m-l}(q;q)_{2l-m}}\sum_{k=0}^{2l}{2l\choose k}_q
    \\
    &=\sum_{m=0}^{\infty}t^m\sum_{l=0}^{m}\frac{q^{(m-l)^2}}{(q;q)_{m-l}}\frac{1}{\textcolor{red}{(q;q)_{2l-m}}}\sum_{k=0}^{2l}{2l\choose k}_q
\end{split}
\end{equation}
The correction factor is given by 
\begin{equation}
    \textcolor{red}{\frac{(q;q)_{2l}}{(q;q)_{2l-m}}}~.
\end{equation}
\subsubsection{The General $n$ Case}
We now show this for general $n$. The MacDonald index of the $(A_1,A_{2(n-1)})$ theory is given by \cite{Buican:2015tda}
\begin{equation}
    \CI_{\mr{M}}^{(A_1,A_{2(n-1)})}(q,t)=\sum_{N_1\geq N_2\geq\dots\geq N_{n-1}\geq 0}t^{N_1+N_2+\dots+N_{n-1}}\frac{q^{N_1^2+\dots+N_{n-1}^2}}{(q;q)_{N_1-N_2}\dots (q;q)_{N_{n-2}-N_{n-1}}(q;q)_{N_{n-1}}}~.
\end{equation}
The MacDonald index of the $\mathds{Z}_2$-projected free hypermultiplet is,
\begin{equation}
    \CI_{\mr{M}}^{\mr{HM}_2/\mathds{Z}_2}(q,t)=\sum_{l=0}^{\infty}\frac{t^l}{(q;q)_{2l}}\sum_{k=0}^{2l}{2l\choose k}_q~.
\end{equation}
Taking a product of these two indices, we obtain,
\begin{equation}\label{eq:prod_A1D2n+1HMZ2}
    \sum_{\substack{N_1\geq N_2\geq\dots\geq N_{n-1}\geq 0\\l\geq 0}}t^{N_1+\dots+N_{n-1}+l}\frac{q^{N_1^2+\dots+N_{n-1}^2}}{(q;q)_{N_1-N_2}\dots (q;q)_{N_{n-2}-N_{n-1}}(q;q)_{N_{n-1}}(q;q)_{2l}}\sum_{k=0}^{2l}{2l\choose k}_q~.
\end{equation} 
We claim that the correction factor is 
\begin{equation}
    \textcolor{red}{q^{(n-1)^2N_{n-1}^2-(n-1)N_{n-1}^2-2N_{n-1}(N_1+\dots+N_{n-2})}\frac{(q;q)_{N_{n-1}}(q;q)_{2l}}{(q;q)_{(n-2)N_{n-1}-N_1}(q;q)_{l-(n-1)N_{n-1}}}}~.
\end{equation}
To prove this, let us insert the correction factor in the expression \eqref{eq:prod_A1D2n+1HMZ2}. We get 
\begin{equation}
\begin{split}
\sum_{\substack{N_1\geq N_2\geq\dots\geq N_{n-1}\geq 0\\l\geq 0}}t^{N_1+\dots+N_{n-1}+l}\frac{q^{N_1^2+\dots+N_{n-1}^2+(n-1)^2N_{n-1}^2-(n-1)N_{n-1}-2N_{n-1}(N_1+\dots+N_{n-2})}}{(q;q)_{N_1-N_2}\dots (q;q)_{N_{n-2}-N_{n-1}}(q;q)_{(n-2)N_{n-1}-N_1}(q;q)_{l-(n-1)N_{n-1}}}\\\times\sum_{k=0}^{2l}{2l\choose k}_q~. 
\end{split}
\end{equation}
Note that 
\begin{equation}
\begin{split}
    N_1^2+\dots+N_{n-1}^2+(n-1)^2N_{n-1}^2-(n-1)N_{n-1}^2-2N_{n-1}(N_1+\dots+N_{n-2})\\=(N_1-N_{n-1})^2+\dots+(N_{n-2}-N_{n-1})^2+(n-1)^2N_{n-1}^2~.
\end{split}
\end{equation}
Put 
\begin{equation}
    m:=N_1+\dots+N_{n-1}+l,\quad i_{j+2}:=N_{j}-N_{j+1}~,\quad j=1,\dots,n-2~.
\end{equation}
Then we have 
\begin{equation}
    N_1+\dots N_{n-2}-(n-2)N_{n-1}=i_3+2i_4+\dots+(n-2)i_{n}~.
\end{equation}
Using this, we have 
\begin{equation}
    (n-1)N_{n-1}=m-i_3-2i_4-\dots-(n-2)i_{n}~.
\end{equation}
Using the fact that 
\begin{equation}
    N_1-N_{n-1}=i_3+\dots+i_n~,
\end{equation}
we get 
\begin{equation}
    (n-2)N_{n-1}-N_1=m-2i_3-3i_4-\dots-(n-1)i_n~.
\end{equation}
Using all these relations, the sum \eqref{eq:prod_A1D2n+1HMZ2} becomes 
\begin{equation}
\begin{split}
\sum_{m=0}^\infty t^m\sum_{l=0}^m\frac{1}{(q;q)_l}\sum_{\substack{i_3, i_4,\dots,i_n\geq 0}}&\frac{q^{Q(m-(n-1)i_n-(n-2)i_{n-1}-\dots-l,i_3,\dots,i_n)}(q;q)_{l}}{(q;q)_{i_n}\dots (q;q)_{i_3}(q;q)_{m-(n-1)i_n-(n-2)i_{n-1}-\dots-l}(q;q)_{2l+i_3+2i_4+\dots+(n-2)i_n-m}}\\&\times\sum_{k=0}^{2l}{2l\choose k}_q
\\&=\sum_{m=0}^\infty t^m\sum_{l=0}^m\frac{f^n_{m,l}(q)}{(q;q)_l}\sum_{k=0}^{2l}{2l\choose k}_q
\\&=\mathcal{I}^{(A_1,D_{2n+1})}_{\mr{M}}(q,t)~,
\end{split}
\end{equation}
where 
\begin{equation}
    Q(i_2,i_3,\dots,i_k):=i_k^2+(i_k+i_{k-1})^2+\dots+(i_k+\dots+i_3+i_2)^2~,
\end{equation}
as defined in \eqref{eq:Q_def}.
\\\\
\textbf{Acknowledgments.} We thank Matthew Buican for collaboration at the initial stages of the project. A.B. would like to thank Philip Argyres for helpful discussions. R.K.S. would like to thank Ken Ono for some useful correspondence and Anirudh Deb and Spencer Stubbs for discussions. R.T. would like to thank Heeyeon Kim for pointing out some interesting directions for future work. R.K.S. would like to thank ICTS, Bangalore for hospitality where this work was completed. The work of A.B. is supported by the US Department of Energy under grant DE-SC1019775. The work of C.B. was partially supported by funds from the Queen Mary University of London. The work of R.K.S. and R.T. is supported by the US
Department of Energy under grant DE-SC0010008. 

\begin{appendix}
\section{Some Special Functions}\label{app:special_func}
We make use of several special functions in defining and working with the superconformal index and its limits. Here we compile a list of functions used in this work. 
\paragraph{Plethystic Exponential.} This function is defined as:
\begin{equation}
    \mr{PE}[f(t)]=\exp\bigg(\sum_{n=1}^{\infty}\frac{f(t^n)}{n}\bigg)~.
\end{equation}
For example
\begin{equation}
    \mr{PE}[t]=\exp\bigg(\sum_{n=1}^{\infty}\frac{t^n}{n}\bigg)=\exp\bigg(-\log(1-t)\bigg)=\frac{1}{1-t}~.
\end{equation}
Following are two other properties of this function that will be useful for us:
\begin{equation}
    \begin{split}
        &\mr{PE}[f + g]=\mr{PE}[f]\mr{PE}[g]~, \\
        &\mr{PE}[-f]=1/\mr{PE}[f]~. \\
    \end{split}
\end{equation}
\paragraph{Elliptic Gamma Function.} This function is defined as:
\begin{equation}
    \Gamma_e(t):=\Gamma_e(t;p,q)=\prod_{m,n=0}^{\infty}\frac{1-p^{m+1}q^{n+1}t^{-1}}{1-p^mq^nt}=\mr{PE}\bigg[\frac{t-
    \frac{pq}{t}}{(1-p)(1-q)}\bigg]~.
\end{equation}
\paragraph{$q$-Pochhammer Symbol.} These are $q$-analogues of the Pochhammer symbols. This function is defined as:
\begin{equation}
    (z;q)_n=\prod_{k=0}^{n-1}(1-zq^k)~.
\end{equation}
We can similarly define the infinite $q$-Pochhammer symbol:
\begin{equation}
    (z;q)_{\infty}=\prod_{k=0}^{\infty}(1-zq^k)~.
\end{equation}
We also define 
\begin{equation}
    (a_1,a_2,\dots,a_n;q)_\infty:=(a_1;q)_\infty(a_2;q)_\infty\cdots (a_n;q)_\infty~.
\end{equation}
Some results about $q$-Pochhamer symbols that we use are recorded below, see \cite{gasper2004basic,andrews1998theory} for proofs.
\begin{align}
    (a ; q)_n &=\frac{(a ; q)_{\infty}}{\left(a q^n ; q\right)_{\infty}}~.
    \\
    (a ; q)_{n+k}&=(a ; q)_n\left(a q^n ; q\right)_k~.
    \\
    \left(a q^k ; q\right)_{n-k}&=\frac{(a ; q)_n}{(a ; q)_k}~.
    \\
    \frac{1}{(z;q)_\infty}&=\sum_{m\geq 0}\frac{z^m}{(q;q)_m}~.
    \\
    \sum_{n \in \mathds{Z}} q^{\frac{n(n+1)}{2}} z^n&=(q ; q)_{\infty}\left(-1/z ; q\right)_{\infty}(-z q ; q)_{\infty}~.
\end{align}

\paragraph{Functions involving $q$-analogues.}
We first define the $q$-analogues of integers:
\begin{equation}
    [n]_q:=\frac{1-q^n}{1-q}=1+q+q^2+\cdots +q^{n-1}, \quad \lim_{q\to 1}[n]_q=n~.
\end{equation}
We can then use this to define the $q$-analogues of factorials:
\begin{equation}
    [n]_q!:=\prod_{k=1}^{n}\frac{(1-q^k)}{(1-q)}=\frac{(q;q)_n}{(1-q)^n}
\end{equation}
We can use this to define $q$-binomial coefficients as follows:
\begin{equation}
    {n\choose k}_q:=\frac{[n]_q!}{[k]_q![n-k]_q!}=\frac{(q;q)_n}{(q;q)_k(q;q)_{n-k}}~.
\end{equation}
\paragraph{Basic Hypergeometric Series.}
The original Gaussian hypergeometric series $_2F_1(a,b;c;z)$ is given as:
\begin{equation}
    _2F_1(a,b;c;z)=\sum_{n=0}^{\infty}\frac{z^n}{n!}\frac{\Gamma(a)\Gamma(b)}{\Gamma(c)}~,
\end{equation}
and is well-studied in the context of the hypergeometric differential equation and Sturm-Liouville problems. 
This can be generalized to involve $q$-analogues of the integers as defined above: for $r>s+1,q\neq 0$,
\begin{equation}\label{eq:phirs_def}
\begin{aligned}  { }_r \phi_s\left(a_1, a_2, \ldots, a_r ; b_1, \ldots, b_s ; q, z\right) &\equiv{ }_r \phi_s\left[\begin{array}{c}a_1, a_2, \ldots, a_r \\ b_1, \ldots, b_s\end{array} ; q, z\right] \\ & :=\sum_{n=0}^{\infty} \frac{\left(a_1 ; q\right)_n\left(a_2 ; q\right)_n \cdots\left(a_r ; q\right)_n}{(q ; q)_n\left(b_1 ; q\right)_n \cdots\left(b_s ; q\right)_n}\left[(-1)^n q^{\binom{n}{2}}\right]^{1+s-r} z^n ~, 
\end{aligned}
\end{equation}
We will use the following relations \cite{gasper2004basic}:
\begin{enumerate}
\item The $q$-Vandermonde ($q$-Chu-Vandermonde) sums:
\begin{equation}
{ }_2 \phi_1\left(a, q^{-n} ; c ; q, q\right)=\frac{(c / a ; q)_n}{(c ; q)_n} a^n~.
\end{equation}
\item Reversing the order of summation,
\begin{equation}
{ }_2 \phi_1\left(a, q^{-n} ; c ; q, c q^n / a\right)=\frac{(c / a ; q)_n}{(c ; q)_n}~ .
\end{equation}
\item Heine's transform:
\begin{equation}
    { }_2 \phi_1(a, b ; c ; q, z)=\frac{(b, a z ; q)_{\infty}}{(c, z ; q)_{\infty}}{ }_2 \phi_1(c / b, z ; a z ; q, b)~.
\end{equation}
\item Expanding the \tsf{RHS} of Heine's transform gives:
\begin{equation}\label{eq:Heine_exp}
    { }_2 \phi_1(a, b ; c ; q, z)=\frac{(a z ; q)_{\infty}}{(z ; q)_{\infty}} \sum_{n=0}^{\infty} \frac{(a, c / b ; q)_n}{(q, c, a z ; q)_n}(-b z)^n q^{\binom{n}{2}}~.
\end{equation}
\end{enumerate}
\section{Conventions For Superconformal Index And Various Limits}\label{app:conventions}
There exist multiple conventions for the superconformal index in the literature, so in this appendix, we lay out our choice of conventions. See Table 1 of \cite{Gadde:2011uv} for other choices. \\
The superconformal index evaluates the partition function of a 4d $\mathcal{N}=2$ \tsf{SCFT} on $S^3\times S^1_{\beta}$ with certain fugacities for background fields turned on. In the Hamiltonian perspective, this involves four commuting supercharges and takes the form,
\begin{equation}
    \CI_{\mr{SCI}}(\rho,\sigma,\tau)=\text{Tr}_{\mathcal{H}(S^3)}[(-1)^F\sigma^{\frac{1}{2}\tilde{\delta}^1_{\dot{-}}}\rho^{\frac{1}{2}\tilde{\delta}^1_{\dot{+}}}\tau^{\frac{1}{2}\delta^2_{+}}e^{-\beta\delta^1_{-}}]~,
\end{equation}
where,
\begin{equation}
    \begin{split}
        &\delta^1_{-}=E-2R+r-2j~, \\
        &\tilde{\delta}^1_{\dot{+}}=E-2R-r+2\bar{j} ~,\\
        &\tilde{\delta}^1_{\dot{-}}=E-2R-r-2\bar{j}~, \\
        &\delta^2_{+}=E+2R+r+2j~. \\
    \end{split}
\end{equation}
We can change to a different set of variables as follows,
\begin{equation}
    \begin{split}
        &t:=\tau^2~, \\
        &p:=\sigma\tau~, \\
        &q:=\rho\tau~. \\
    \end{split}
\end{equation}
In these new variables, we have,
\begin{equation}
    \CI_{\mr{SCI}}(p,q,t)=\text{Tr}_{\mathcal{H}(S^3)}[(-1)^{F}p^{\frac{1}{2}\tilde{\delta}^1_{\dot{-}}}q^{\frac{1}{2}\tilde{\delta}^1_{\dot{+}}}t^{\frac{1}{4}(\delta^2_+-\tilde{\delta}^1_{\dot{+}}-\tilde{\delta}^1_{\dot{-}})}e^{-\beta\delta^1_-}]~.
\end{equation}
The index is automatically independent of $\beta$, i.e., it counts $1/8$-BPS operators that satisfy,
\begin{equation}
    E=2R-r+2j~.
\end{equation}
Manually substituting this condition into the index, we can write it in a simplified form,
\begin{equation}
    \CI_{\mr{SCI}}(p,q,t)=\text{Tr}_{\mathcal{H}(S^3)}[(-1)^Fp^{j-\bar{j}-r}q^{j+\bar{j}-r}t^{R+r}]~.
\end{equation}
Now, the superconformal index admits various special limits that give more refined counts of BPS operators. We describe some of these limits below, 
\paragraph{MacDonald limit ($p\to 0$).} This implements the additional condition that $j-\bar{j}-r=0$, which leads to the following conditions,
\begin{equation}
    \begin{split}
        &E-2R=j+\bar{j}~, \\
        &r=j-\bar{j}~. \\
    \end{split}
\end{equation}
Only $\hat{\mathcal{B}}_R$, $\bar{\mathcal{D}}_{R(j,0)}$ and $\hat{\mathcal{C}}_{R(j,\bar{j})}$ multiplets\footnote{We are using the notation of \cite{Dolan:2002zh}. See \cite{Cordova:2016emh} for another popular choice.} contribute to the MacDonald index $\CI_{\mr{M}}(q,t)$. We call this the Schur ring of operators, for reasons discussed next.

\paragraph{Schur limit ($t\to q$).} In this limit, the index automatically becomes independent of $p$, so it counts operators satisfying $E=2R-r+2j$ and $r=j-\bar{j}$. Written differently, the constraints are,
\begin{equation}
    \begin{split}
        &E-2R=j+\bar{j}~, \\
        &r=j-\bar{j}~. \\
    \end{split}
\end{equation}
so we see that we count the same operators as we did in the MacDonald limit, but the count is less refined. The contribution of these operators simplifies to,
\begin{equation}
    \CI_{\mr{S}}(q)=\text{Tr}_{\mathcal{H}_{\text{Schur}}}[(-1)^Fq^{E-R}]~.
\end{equation}
Similarly, the MacDonald index may be expressed as,
\begin{equation}
    \CI_{\mr{M}}(q,t)=\text{Tr}_{\mathcal{H}_{\text{Schur}}}[(-1)^Fq^{E-2R-r}t^{R+r}]=\text{Tr}_{\mathcal{H}_{\text{Schur}}}[(-1)^Fq^{2\bar{j}}t^{R+j-\bar{j}}]~.
\end{equation}
This last form is particularly convenient in studying the $(A_1,D_{2n+3})$ and $(A_1,A_{2n})$ theories, where all the Schur operators have $j=\bar{j}$, so that $r=0$. 
\paragraph{Hall-Littlewood limit ($p\to 0$, $q\to 0$).} This limit implements the additional conditions $j-\bar{j}-r=0$ and $j+\bar{j}-r=0$, which leads to the following conditions on quantum numbers,
\begin{equation}
    \begin{split}
        &E=2R+r~, \\
        &j=r~, \\
        &\bar{j}=0~. \\
    \end{split}
\end{equation}
Only $\hat{\mathcal{B}}_R$ and $\bar{\mathcal{D}}_{R(j,0)}$ contribute to this limit. 
The contribution of these operators may be expressed as,
\begin{equation}
    \CI_{\mr{HL}}(t)=\text{Tr}_{\mathcal{H}_{\mr{HL}}}[(-1)^Ft^{E-R}]~.
\end{equation}

\paragraph{Higgs branch Hilbert series.} This quantity is related to the Hall-Littlewood index, although it is not directly a limit of the $\mathcal{N}=2$ superconformal index. The operators that contribute to this quantity are $1/2$-BPS operators annihilated by $Q^1_{\alpha}$ and $\tilde{Q}^1_{\dot{\alpha}}$. This means that their quantum numbers satisfy,
\begin{equation}
    \begin{split}
        &j=\bar{j}=0~, \\
        &r=0~, \\
        &E=2R~.
    \end{split}
\end{equation}
Only the superconformal primaries of $\hat{\mathcal{B}}_R$ multiplets contribute to the Higgs branch Hilbert series. 
\section{Gr\"{o}bner Basis}\label{app:Grob}
In this appendix, we record some definitions from commutative algebra, specifically the Gr\"{o}bner basis of an ideal of a polynomial ring following \cite{loustaunau1994introduction}. This terminology and some standard results recorded here are used in Section \ref{sec:comp_Hilb_ser}.

Let $\mathds{F}$ be a field and $P:=\IF[x_1,\dots,x_n]$ be the polynomial ring in $n$-variables with coefficients in $\IF$. 
Recall that given $s$ polynomials $f_1,\dots,f_s$, the ideal generated by $I$ is given by 
\begin{equation}
    I=\langle f_1,\dots,f_s\rangle:=\left\{\sum_{i=1}^sg_if_i~:g_i\in P\right\}~.
\end{equation}
Our goal is to get a basis (or at least the dimension) of the vector space $P/I$. This is achieved by finding \textit{the} Gr\"{o}bner basis for $I$ which we now explain. The first step is to choose a \textit{monomial ordering} on the monomials in $P$. A monomial in $P$ is of the form 
\begin{equation}
    X^{\vec{a}}:=x_1^{a^1}\dots x_n^{a^n}~,\quad \vec{a}=(a^1,\dots,a^n)\in (\IZ_{\geq 0})^n~.
\end{equation}
A monomial ordering on $P$ is a total order $<$ on the set of monomials such that  
\begin{enumerate}[(1)]
    \item $X<Y\implies XZ<YZ$ for any monomials $X,Y,Z$.
    \item $1<X$ for any monomial $X$.
\end{enumerate}
There are several standard choices of monomial ordering, lexicographic (lex), degree lexicographic (deglex), degree reverse lexicographic (degrevlex) and so on. We will use the deglex ordering, which we now define.
\paragraph{Lexicographic ordering.} Choose an ordering on the variables. For example:
    \begin{equation}
        x_1>x_2>\dots x_n~.
    \end{equation}
Then to compare two monomials, lex first compares the exponent of $x_1$ (i.e., the largest variables), if equal then compares the exponent of $x_2$ (i.e., the second largest variables) and so forth. For example  \begin{equation}
    x_1^2>x_1 x_2>x_1>x_2^2>x_2>1~.
\end{equation}   
\paragraph{Degree lexicographic ordering.}
Deglex first compares the degree of the monomials. If the degrees are equal then, deglex compares the monomials using lex. For example
\begin{equation}
    x_1^2>x_1 x_2>x_2^2>x_1>x_2>1~.
\end{equation}
Once we have chosen a monomial ordering, we can rearrange any polynomial in the order of decreasing order. 
\begin{defn}
For any $0\neq f\in P$ of degree $r$ given by 
\begin{equation}
    f=\sum_{i=1}^rc_iX^{\vec{a}_i}~,
\end{equation}
such that $X^{\vec{a}_1}>X^{\vec{a}_2}>\dots>X^{\vec{a}_r}$, we define 
\begin{enumerate}[(1)]
    \item $\mathsf{LM}(f)=X^{a_1}$, the leading monomial of $f$,
\item $\mathsf{LC}(f)=c_1$, the leading coefficient of $f$,
\item $\mathsf{LT}(f)=c_1 X^{\alpha_1}$, the leading term of $f$.
\end{enumerate}
\end{defn}
\begin{defn}
\begin{enumerate}[(1)]
    \item Given polynomials $f,g,h\in P$, we say that $f$ is reducible to $h$ by $g$ if $\mr{LM}(g)$ divides a nonzero term $X$ of $f$ and 
    \begin{equation}
        h=f-\frac{X}{\mr{LM}(g)}g~.
    \end{equation}
    We symbolically write the reduction as 
    \begin{equation}
        f\overset{g}{\longrightarrow} h~.
    \end{equation}
    \item Given polynomials $f, h \in R$ and a set $G=\left\{g_1, \ldots, g_s\right\} \subset R$ of non-zero polynomials, we say that $f$ is reducible to $h$ modulo $G$, denoted
\begin{equation}
f \xrightarrow{G} h~,
\end{equation}
if there exists a subset $\{g_{i_1}, \ldots, g_{i_k}\} \subseteq G$ and a sequence of polynomials $h_1, \ldots, h_{k-1}$ such that
\begin{equation}
f \xrightarrow{g_{i_1}} h_1 \xrightarrow{g_{i_2}} h_2 \xrightarrow{g_{i_3}} \cdots \xrightarrow{g_{i_{k-1}}} h_{t-1} \xrightarrow{g_{i_k}} h~ .
\end{equation}
\item A polynomial $f$ is called reduced with respect to $G$ if it cannot be reduced modulo $G$. That is, no term of $f$ is divisible by any one of the $\mr{LM}\left(g_i\right)$.
\item The remainder of a polynomial $f$ with respect to $G$ is a polynomial $h$ such that $f$ is reducible to $h$ modulo $G$ and $h$ is reduced with respect to $G$. 
\end{enumerate}    
\end{defn}
\begin{defn}
A set of non-zero polynomials $G=\left\{g_1, \ldots, g_s\right\}$ is called a Gr\"{o}bner basis for an ideal $I\subset P$ if for all nonzero $f \in I$, $\mr{LM}\left(g_i\right)$ divides $\mr{LM}(f)$ for some $g_i \in G$.
\end{defn} 
\begin{thm}\emph{\cite[Theorem 1.6.7]{loustaunau1994introduction}}\label{prop:remain_grob}
$G$ is a Gr\"{o}bner basis if and only if for all $f \in P$ the remainder of reduction of $f$ by $G$ is unique.    
\end{thm}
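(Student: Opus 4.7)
The plan is to prove both directions of this biconditional. The forward direction $(\Rightarrow)$ follows from a direct leading-term argument, while the backward direction $(\Leftarrow)$ proceeds most cleanly by first establishing the intermediate equivalent condition that every $f \in I$ reduces to $0$ modulo $G$, and then extracting the Gr\"{o}bner basis property from there.

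For the forward direction, I would assume $G$ is a Gr\"{o}bner basis and consider two reductions $f \xrightarrow{G} r_1$ and $f \xrightarrow{G} r_2$ of the same $f$ to reduced polynomials. Since each reduction step subtracts a $P$-multiple of some $g_i$ from the polynomial being reduced, both $f - r_1$ and $f - r_2$ lie in $\langle G\rangle \subseteq I$, so $r_1 - r_2 \in I$. If $r_1 \neq r_2$, the Gr\"{o}bner basis property applied to $r_1 - r_2$ forces $\mr{LM}(r_1 - r_2)$ to be divisible by some $\mr{LM}(g_i)$. But $\mr{LM}(r_1 - r_2)$ is a monomial appearing in either $r_1$ or $r_2$, so that supposedly reduced polynomial has a term divisible by $\mr{LM}(g_i)$, a contradiction.

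For the backward direction, I would first show that for any $h \in P$ and $g_i \in G$, the product $h \cdot g_i$ reduces to $0$ using $g_i$ alone: cancelling the leading term subtracts $\mr{LT}(h) \cdot g_i$ in full, leaving $(h - \mr{LT}(h)) \cdot g_i$, which may then be handled inductively until $0$ remains. Uniqueness then forces $\mr{rem}_G(h g_i) = 0$. A short additivity argument shows that the set of polynomials with remainder $0$ is closed under sums: if $f_1, f_2$ both reduce to $0$, any reduction of $f_1 + f_2$ can be completed by separately reducing each summand to $0$ and (if necessary) further reducing an intermediate non-reduced expression, and uniqueness pins down the final remainder as $0$. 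Combining these, every $f = \sum_i h_i g_i \in I$ satisfies $\mr{rem}_G(f) = 0$. Finally, for a nonzero $f \in I$, the reduction $f \xrightarrow{G} 0$ must eventually cancel $\mr{LM}(f)$; since a reduction step applied to a term $X$ of $f$ only introduces new terms of strictly lower order than $X$, the monomial $\mr{LM}(f)$ can only be cancelled by a direct reduction step, which requires $\mr{LM}(g_i) \mid \mr{LM}(f)$ for some $i$. This is exactly the Gr\"{o}bner basis property.

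The main obstacle is the additivity step in the backward direction: uniqueness of remainders is a statement about all of $P$, not specifically about $I$, and I must leverage it globally to deduce the special property that all elements of $I$ reduce to $0$. The subtlety is that the sum of two reduced polynomials need not itself be reduced, so one has to reduce the intermediate $r_1 + r_2$ further and then invoke uniqueness to force the final remainder to be $0$. The standard textbook route handles this via Buchberger's S-polynomial criterion, which equivalently packages the same content into a finite checkable condition on pairs $S(g_i, g_j)$ and provides the cleanest bridge between unique remainders and the defining Gr\"{o}bner basis property.
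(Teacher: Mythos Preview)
The paper does not prove this statement; it is quoted as Theorem 1.6.7 of Adams--Loustaunau and used only as a black box to justify the canonical monomial basis $\CB_{R_n}$ in the paragraph that follows. There is therefore no in-paper argument to compare against.

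As a standalone argument your outline is the standard one. The forward direction is complete. In the backward direction, the only soft spot is exactly the one you flag: the claim that ``any reduction of $f_1 + f_2$ can be completed by separately reducing each summand to $0$'' is not literally true, because a reduction step requires a specific monomial to be present in the polynomial currently being reduced, and cancellation between $f_1$ and $f_2$ can block a step that was legal for $f_1$ alone. The clean repair (without detouring through Buchberger) is to prove instead $r_G(f) = r_G(f + t\,g_i)$ for a single term $t$ at a time: a short case split on the coefficient of the monomial underlying $t\,\mr{LM}(g_i)$ shows that either one of $f$, $f + t g_i$ reduces to the other in a single $g_i$-step, or both reduce in one $g_i$-step to a common polynomial, and uniqueness does the rest. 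Iterating over the terms of each $h_j$ in $f = \sum_j h_j g_j$ then gives $r_G(f) = r_G(0) = 0$ for all $f \in \langle G\rangle$, after which your leading-monomial argument finishes. You correctly note that the textbook itself routes through $S$-polynomials instead; either completion works.
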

Let $r_G(f)\in P$ denote the remainder of the reduction of a polynomial $f$ by $G$. Then Theorem \ref{prop:remain_grob} implies that $f\equiv g\bmod I$ if and only if $g_G(f)=r_G(g)$. This gives us a basis for the $\IF$-vector space $P/I$:
\begin{equation}
    \CB_{P/I}:=\{\text{monomial}~~X^{\vec{a}}:\mr{LM}(g_i)\nmid X^{\vec{a}}\text{ for all }g_i\in G\}~.
\end{equation}
We now discuss an algorithm to construct a Gr\"{o}bner basis for an ideal generated by a set of polynomials. 
\begin{defn}
Given non-zero $f, g \in P$, the $\CS$-polynomial of $f$ and $g$ is defined as
\begin{equation}
\CS(f, g)=\frac{L}{\mathsf{LT}(f)} f-\frac{L}{\mathsf{LT}(g)} g~,
\end{equation}
where $L=\mr{LCM}(\mr{LM}(f), \mr{LM}(g))$.    
\end{defn}
The following theorem is due to Buchberger \cite{Buchberger1985}:
\begin{thm}\emph{\cite[Theorem 1.7.4]{loustaunau1994introduction}}\label{thm:S-pol_grob}
A set of non-zero polynomials $G=\left\{g_1, \ldots, g_s\right\}$ is a Gr\"{o}bner basis for an ideal $I$ if and only if for all $i \neq j$,
\begin{equation}
\CS\left(g_i, g_j\right) \xrightarrow{G} 0~ .
\end{equation}    
\end{thm}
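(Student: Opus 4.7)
The plan is to prove both directions of Buchberger's $\CS$-polynomial criterion, with the converse being the substantive one. The forward direction ($\Rightarrow$) is almost immediate from the definition: if $G$ is a Gr\"obner basis, then for each $\CS(g_i, g_j) \in I$, either it is zero or its leading monomial is divisible by some $\mr{LM}(g_k)$, so a reduction step applies. Because the monomial order is a well-order on monomials and each step strictly decreases the leading monomial, the reduction terminates; the terminal polynomial lies in $I$ and is irreducible modulo $G$, so it must be $0$.

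For the converse ($\Leftarrow$), I will assume $\CS(g_i, g_j) \xrightarrow{G} 0$ for all $i \neq j$, and fix a nonzero $f \in I$. The goal is to show $\mr{LM}(g_k) \mid \mr{LM}(f)$ for some $k$. Since $f \in I$, one can write $f = \sum_i h_i g_i$; among all such representations I pick one minimizing $X := \max\{\mr{LM}(h_i g_i) : h_i g_i \neq 0\}$ under the monomial order. Clearly $X \geq \mr{LM}(f)$, and the crux is to show equality, because any $i$ realizing $\mr{LM}(h_i g_i) = X = \mr{LM}(f)$ gives $\mr{LM}(g_i) \mid \mr{LM}(f)$.

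Suppose for contradiction $X > \mr{LM}(f)$, and let $T = \{i : \mr{LM}(h_i g_i) = X\}$, necessarily $|T| \geq 2$. The top contributions at $X$ must cancel, since no other term of $f$ equals $X$. For each pair $i, j \in T$, set $L_{ij} = \mr{LCM}(\mr{LM}(g_i), \mr{LM}(g_j))$; note $L_{ij} \mid X$, and $\CS(g_i, g_j)$ encodes precisely the pairwise leading-term cancellation between $g_i$ and $g_j$. By hypothesis, $\CS(g_i, g_j) = \sum_k u_k^{(ij)} g_k$ with $\mr{LM}(u_k^{(ij)} g_k) < L_{ij}$; multiplying through by the monomial $X/L_{ij}$ yields a combination of $g_k$'s with leading monomial strictly below $X$. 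Substituting these into $f$ produces a new representation whose $X$-value is strictly smaller, contradicting minimality.

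The main obstacle is making this ``pair and substitute'' argument rigorous when $|T| \geq 3$, where a naive pairwise replacement can reintroduce the monomial $X$ through higher-order terms. The cleanest resolution is a syzygy lemma: every $\IF$-linear relation among the leading-term contributions $\{\mr{LT}(h_i)\,\mr{LT}(g_i) : i \in T\}$ (all proportional to the monomial $X$) is an $\IF$-linear combination of the elementary ``pair'' syzygies $(L_{ij}/\mr{LT}(g_i))\,\mathbf{e}_i - (L_{ij}/\mr{LT}(g_j))\,\mathbf{e}_j$. Granting this lemma, the global cancellation at $X$ is expressed as a sum of pairwise $\CS$-polynomial contributions; replacing each such contribution by its given reduction modulo $G$ strictly lowers $X$ in the representation of $f$ and completes the proof.
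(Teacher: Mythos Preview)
The paper does not supply its own proof of this theorem; it is quoted from the cited textbook as background in the appendix on Gr\"obner bases and then used only to motivate Buchberger's algorithm. There is therefore no in-paper argument to compare your proposal against.

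For what it is worth, your sketch is the standard proof of Buchberger's criterion and is essentially correct. The forward direction follows immediately since $\CS(g_i,g_j)\in I$ and a Gr\"obner basis reduces every element of $I$ to zero (equivalently, by the uniqueness-of-remainder characterization the paper records just above). For the converse you pick a representation $f=\sum_i h_i g_i$ minimizing $X=\max_i \mr{LM}(h_i g_i)$ (well-ordering of the monomial order guarantees a minimum exists), and if $X>\mr{LM}(f)$ you invoke the syzygy lemma---that every $\IF$-linear relation among terms sharing the monomial $X$ is a combination of the pairwise syzygies $(L_{ij}/\mr{LT}(g_i))\mathbf{e}_i-(L_{ij}/\mr{LT}(g_j))\mathbf{e}_j$---to rewrite the cancelling top layer as a sum of monomial multiples of $\CS(g_i,g_j)$'s; substituting each by its assumed reduction $\sum_k u_k^{(ij)}g_k$ with $\mr{LM}(u_k^{(ij)}g_k)<L_{ij}\mid X$ yields a representation with strictly smaller $X$, a contradiction. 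One minor wording quibble: the ``obstacle'' when $|T|\ge 3$ is not that $X$ is reintroduced by higher-order terms, but simply that a single pairwise cancellation need not kill the entire contribution at $X$; the syzygy lemma is precisely what lets you decompose the full cancellation into pairwise pieces, as you then do.
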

Theorem \ref{thm:S-pol_grob} gives us an algorithm to construct a Gr\"{o}bner basis for $I$. 
\paragraph{Buchberger's algorithm for Gr\"{o}bner basis.} Let $I=\langle f_1,\dots,f_s\rangle\subset P$ where each $f_i\neq 0$. To construct a Gr\"{o}bner basis for $I$, we follow the steps below:
\begin{enumerate}[(1)]
    \item Start with $G:=\{f_1,\dots,f_s\}$. If $s=1$, then $G$ is a Gr\"{o}bner basis. If $s>1$, define $A:=\{(f_i,f_j):i<j\}$.
    \item  For $(f,g)\in A$, calculate $\CS(f,g)$ and reduce modulo $G$: $\CS\left(f, g\right) \xrightarrow{G} r$. If $r\neq 0$ then add $f_{s+1}:=r$ to the set $G$ and the pairs $(f_i,f_{s+1})$ to $A$.
    \item Repeat steps (1) and (2) with the new $G$ and $A$ until $\CS\left(f_i, f_j\right) \xrightarrow{G} 0$ for all $(f_i,f_j)\in A$.
\end{enumerate}
It is a consequence of the Hilbert basis theorem that this algorithm terminates for polynomial rings. We refer the reader to \cite{loustaunau1994introduction} for further details on this point.
\section{Proof Of Formula \eqref{eq: hilbert_ser_Rinfty_gen} For Hilbert Series}\label{app:Hilb_ser_count}
In this appendix, we prove the formula \eqref{eq: hilbert_ser_Rn_gen} and \eqref{eq: hilbert_ser_Rinfty_gen} using explicit counting of monomials.  

Recall that given a Gr\"{o}bner basis of $I_n$, a canonical basis for $R_n/I_n$ is given by 
\begin{equation}
    \CB_{R_n}:=\{\text{monomial}~~X_1^{\vec{a}_1}X_2^{\vec{a}_2}\cdots X_N^{\vec{a}_N}:g\nmid X_1^{\vec{a}_1}X_2^{\vec{a}_2}\cdots X_N^{\vec{a}_N}\text{ for all }g\in \mr{LM}(G_n)\}~.
\end{equation}
This gives us the following algorithm for the computation of $\mr{dim}\left(R_n\right)_{\ell, m}$:
\begin{enumerate}
    \item Count the number of monomials in 
    \begin{equation}
        \C[x^{(1)}_{d_1-1},\dots,x^{(1)}_n,x^{(2)}_{d_2-1},\dots,x^{(2)}_n,\dots,x^{(N)}_{d_N-1},\dots,x^{(N)}_n]
    \end{equation}
    of weight $(\ell,m)$. 
    \item Subtract the number of monomials of weight $(\ell,m)$ that can be obtained by multiplying another monomial (necessarily of weight $(\ell',m')$ with $\ell'<\ell$ and $m'<m$) to any monomial in $G_n$.
\end{enumerate}
Let $M[\ell,m]$ be the number of monomials in variables $\{x^{(1)}_{i_1},\dots,x^{(N)}_{i_N}:i_k\geq d_{k}-1\}$ of weight $(\ell,m)$. We call such monomials \textit{free words}. Then the generating function for $M[\ell,m]$ is given by
\begin{equation}\label{eq:gen_free_words}
    \sum_{\ell,m\geq0}M[\ell,m]q^\ell t^m=\prod_{i=1}^N\prod_{j\geq 0}\left(1-q^{(d_i-1)+j}t\right)^{-1}~.
\end{equation}
Then we can write 
\begin{equation}
\mr{dim}\,(R_n)_{\ell,m}=M[\ell,m]-C[\ell,m]~,    
\end{equation}
where $C[\ell,m]$ is the number of monomials we get from step 2. 
To implement step 2, we count all possible ways of constructing a monomial of weight $(\ell,m)$ by multiplying free words of weight $(\ell',m')~,\ell'<\ell,m'<m$ with monomials in $\mr{LM}(G_n)$ of weight $(\ell-\ell',m-m')$. Let us define 
\begin{equation}
\mr{LM}(G_n)_{\ell,m}:=\{g\in\mr{LM}(G_n):\mr{wt}_1(g)\leq \ell,\mr{wt}_2(g)\leq m\}~.    
\end{equation}
For $g\in \mr{LM}(G_n)_{\ell,m}$, define  
\begin{equation}
C_g[\ell,m]:=\{g\cdot X_1^{\vec{a}_1}X_2^{\vec{a}_2}\cdots X_N^{\vec{a}_N}:\mr{wt}^{(2)}(g\cdot X_1^{\vec{a}_1}X_2^{\vec{a}_2}\cdots X_N^{\vec{a}_N})=(\ell,m) \}~.
\end{equation}
If $\mr{wt}^{(2)}(g)=(\ell',m')$, then clearly 
\begin{equation}
    |C_g[\ell,m]|=M[\ell-\ell',m-m']~.
\end{equation}
Then we can write 
\begin{equation}
    C[\ell,m]=\left|\bigcup_{g\in\mr{LM}(G_n)}C_g[\ell,m]\right|
\end{equation}
Note that 
\begin{equation}
    \bigcap_{i=1}^kC_{g_i}[\ell,m]\neq \emptyset\iff \mr{wt}^{(2)}(\mr{LCM}(g_1,\dots,g_k))=(\ell',m')\text{ with }\ell'<\ell,m'<m~.
\end{equation}
In this case
\begin{equation}
\left|\bigcap_{i=1}^kC_{g_i}[\ell,m]\right|=M[\ell-\mr{wt}_1(\mr{LCM}(g_1,\dots,g_k)),m-\mr{wt}_2(\mr{LCM}(g_1,\dots,g_k))]~.    
\end{equation}
Then using the fact that for sets $A_1,\dots,A_M$, we get
\begin{equation}
    \left|\bigcup_{i=1}^MA_i\right|=\sum_{k=1}^M(-1)^{k+1} \sum_{1 \leq i_1<\cdots<i_k \leq M}\left|A_{i_1} \cap \cdots \cap A_{i_k}\right|~,
\end{equation}
we get 
\begin{equation}
\begin{split}
C[\ell,m]&= \sum_{S\subseteq \mr{LM}(G_n)_{\ell,m}}(-1)^{|S|+1}\left|\bigcap_{g\in S}C_{g}[\ell,m]\right|
\\
&=\sum_{S\subseteq \mr{LM}(G_n)_{\ell,m}}(-1)^{|S|+1}M[\ell-\mr{wt}_1(\mr{LCM}(S)),m-\mr{wt}_2(\mr{LCM}(S))]~,
\end{split}
\end{equation}
where $|S|$ denotes the number of monomials in $S$, $\mr{LCM}(S)$ is least common multiple of the monomials in $S$. This gives us the final formula for $\mr{dim}\,(R_n)_{\ell,m}$:
\begin{equation}
    \mr{dim}\,(R_n)_{\ell,m}=M[\ell,m]+\sum_{S\subseteq \mr{LM}(G_n)_{\ell,m}}(-1)^{|S|}M[\ell-\mr{wt}_1(\mr{LCM}(S)),m-\mr{wt}_2(\mr{LCM}(S))]~.
\end{equation}
Substituting this in the formula
\begin{equation}
H_{R_n}(q, t)=\sum_{\ell=0}^n \sum_{m \geq 0} \mr{dim}\left(R_n\right)_{\ell, m} q^{\ell} t^m~,
\end{equation}
and using \eqref{eq:gen_free_words}, we obtain the formula 
\begin{equation}
\label{eq: hilbert_ser_Rn'}
H_{R_n}(q,t) = \mr{Ser}_{q=0,t=0}\frac{1+\sum_{S\subseteq \mr{LM}(G_n)}(-1)^{|S|}(q,t)^{\mr{wt}^{(2)}(\mr{LCM}(S))}}{\prod_{i=1}^N\prod_{j\geq 0}\left(1-q^{(d_i-1)+j}t\right)} \quad (\mr{mod} \;q^{n+1})~.
\end{equation}
The full Hilbert series is then given by taking $n\to \infty$:
\begin{equation}
\label{eq: hilbert_ser_Rinfty'}
H_{R_\infty}(q,t) = \mr{Ser}_{q=0,t=0}\frac{1+\sum_{S\subseteq \mr{LM}(G_\infty)}(-1)^{|S|}(q,t)^{\mr{wt}^{(2)}(\mr{LCM}(S))}}{\prod_{i=1}^N\prod_{j\geq 0}\left(1-q^{(d_i-1)+j}t\right)}~,
\end{equation}
where 
\begin{equation}
    \mr{LM}(G_\infty):=\lim _{\leftarrow } \mr{LM}(G_n)~.
\end{equation}
\section{Hilbert Series For The Arc Space Of $A(\CV_{(A_1,A_4)})$}\label{app:Hilb_A1A4}
The Zhu algebra for the chiral algebra of $(A_1,A_4)$ Argyres-Douglas theory is given by \eqref{eq:zhuA1A2}:
\begin{equation}
R_{(A_1,A_4)}:=A\left(\mathcal{V}_{\left(A_1, A_4\right)}\right) \cong \mathds{C}[J]/\langle J^3\rangle~.
\end{equation}
We compute the Hilbert series of its arc space 
$H_{(A(\mathcal{V}_{(A_1,A_4)})_\infty}(q,t)$ using technique discussed in Section \ref{sec:comp_Hilb_ser}. Let us  order the variables in the deglex ordering:
\begin{equation}
    J_1>J_2>\cdots
\end{equation}
Then from Mathematica computations of Gr\"{o}bner basis, we have the following conjecture: the
set of monomials of weight $(\ell,m)$ constituting a Gr\"obner basis for $\langle J(x)^2\rangle$ is given by
\begin{equation}
    \mathcal{G} (\ell,m) := \begin{cases}
\bigcup_{i = 1}^{(m-2)/2} G^{2i, m/2 - i}(\ell,m), & m \text{ is even},\\
\bigcup_{i = 1}^{(m-1)/2} G^{2i -1 , (m+1)/2 -i}(\ell,m), & m \text{ is odd},
\end{cases}
\end{equation}
where $G^{c,s}(\ell,m)$ is the set of monomials of weight $(\ell,m)$ given recursively by
\begin{equation}\label{eq:Grob_recusive_A1A4}
    G^{c,s}(\ell,m) = \begin{cases}
\{J_s^2\}\cdot F_s(\ell-2s, m -2), & c = 1, \\
\{J_s J_{s+1}\}\cdot F_{s+1}(\ell -2s -1, m -2), & c= 2,3,\\
\bigcup_{i = 1}^{c/2 -1} \{J_s\}\cdot G^{2i -1, s + c/2 +1 -i}(\ell-s , m-1), & c>3 \text{ and even}, \\
\bigcup_{i = 1}^{(c-3)/2}\{J_s\}\cdot G^{2i , s + (c+1)/2 -i}(\ell-s, m-1), & c>3 \text{ and odd,}
\end{cases}
\end{equation}
where $F_s(\ell,m)$ is defined as the set of monomials of weight $(\ell,m)$ freely generated by $\{J_i: \text{for } i\geq s \}$.
Let us write the Hilbert series in the form:
\begin{equation}
    H_{(A(\mathcal{V}_{(A_1,A_4)})_\infty}(q,t)
    \sum_{m = 0}\mathcal{I}_m^{(A_1, A_4)}(q)t^m.
\end{equation}
Let us analyze the Hilbert series to the first orders in $t$:
\begin{enumerate}
    \item $m<3$:
    Since there is no constraints till $O(t^3)$, words are freely generated by $j_i$ for $i \geq 1.$ Then using \eqref{eq:free_gen_ser_A1A4}, the Hilbert series up to $O(t^3)$ is thus given by 
    \begin{equation}
        H_{(A(\mathcal{V}_{(A_1,A_4)})_\infty}(q,t)=\mr{Ser}_{q=0,t=0}\prod_{j=1}^\infty\frac{1}{(1-q^it)}~~(\mr{mod}~t^3)= 1 + \mathcal{I}_1^1(q)t + \mathcal{I}^1_2 (q)t^2 + \mathcal{O}(t^3)~,
    \end{equation}
    where $\mathcal{I}_m^s$ is defined in \eqref{def:Ism} with its explicit expression given in \eqref{eq:Ism}.
    \item $m=3$:
    The set of constraints from Gr\"{o}bner basis is given by:
    \begin{equation}
        \mathcal{G}(\ell,3) =G^{1,1}(\ell,3) = \{j_1^2\}\cdot F_1(\ell-2,1)~.
    \end{equation}
    Thus the monomials which are freely generated by $\{j_\ell: \ell\geq1\}$ and not divisible by elements inside $\mathcal{G}(\ell,3)$ are
    \begin{equation}
        \{j_1\}\cdot F_2(\ell-1,2) \bigcup F_2(\ell,3)~,
    \end{equation}
   This set generates the homogeneous subspace in $(A(\mathcal{V}_{(A_1,A_4)})_\infty$ of weight $(\ell,m =3)$. Therefore, we have
    \begin{equation}
        \mathcal{I}^{(A_1,A_4)}_3 = q \mathcal{I}_{2}^{2} + \mathcal{I}_3^2~.
    \end{equation}
    \item $m =4$:
    The set of constraints from Gr\"{o}bner basis reads
    \begin{equation}
        \mathcal{G}(\ell,4) =G^{2,1}(\ell,4) = \{j_1j_2\}\cdot F_2(\ell-3,2)~.
    \end{equation}
    Thus the words which generate the homogeneous subspace in $(A(\mathcal{V}_{(A_1,A_4)})_\infty$ of weight $(\ell,m =4)$ are given by 
    \begin{equation}
        \{j_1\}\cdot F_3(\ell-1,3) \bigcup F_2(\ell,4)~.
    \end{equation}
   This gives us
\begin{equation}
\mathcal{I}^{(A_1,A_4)}_4 = q \mathcal{I}_{3}^{3} + \mathcal{I}_4^2~.
\end{equation}
\item
The set of constraints from Gr\"{o}bner basis reads
\begin{equation}
    \mathcal{G}(\ell,5) =G^{1,2}(\ell,5)\bigcup G^{3,1}(\ell,5) = \{j_2^2\}\cdot F_2(\ell-4,3)\bigcup\{j_1j_2\}\cdot F_2(\ell-3,3)~.
\end{equation}
The words which generates the homogeneous subspace in $(A(\mathcal{V}_{(A_1,A_2)})_\infty$ of weight $(\ell,m =5)$ are thus given by
\begin{equation}
    \{j_2\}\cdot F_3(\ell-2,4)\bigcup \{j_1\}\cdot F_3(\ell-1,4) \bigcup F_3(\ell,5)~.
\end{equation}
Thus we have
\begin{equation}
    \mathcal{I}_5^{(A_1,A_4)} = q^2 \mathcal{I}_4^3 + q \mathcal{I}_4^3 + \mathcal{I}_5^3~.
\end{equation}
\end{enumerate}
For $m\geq 6$, similar analysis can be performed using the general recursive form of the Gr\"{o}bner basis given in \eqref{eq:Grob_recusive_A1A4}. The result is that we can write
\begin{equation}
    \mathcal{I}_m^{(A_1,A_4)} = \begin{cases}\mathcal{I}_m^{m/2}+
\sum_{i=1}^{(m-2)/2}I^{2i, m/2 -i}_m, &m\text{  is even,}\\
\mathcal{I}_m^{(m+1)/2} +\sum_{i =1}^{(m-1)/2} I_m^{2i -1,(m+1)/2 -i}, & m \text{ is odd},
\end{cases}
\end{equation}
where  $I^{c,s}_m$ is determined by the following recursion relation:
\begin{equation}
I_m^{c, s}= \begin{cases}q^s \mathcal{I}_{m-1}^{s+1}, & c=1, \\ q^s \mathcal{I}_{m-1}^{s+2}, & c=2,3, \\ q^s \mathcal{I}_{m-1}^{s+c+1}+q^s \sum_{i=1}^{c / 2-1} I_{m-1}^{2 i-1, s+c / 2+1-i}, & c>3 \text { and even }, \\ q^s \mathcal{I}_{m-1}^{s+(c+1) / 2}+q^s \sum_{i=1}^{(c-3) / 2} I_{m-1}^{2 i, s+(c+1) / 2-i}, & c>3 \text { and odd~. }\end{cases}    
\end{equation}
\section{Evaluation Of Hilbert Series \eqref{eq: hilbert_ser_Rinf} At $t^3$}\label{app:Hilb_ser_t3}
In this appendix, we give the detailed calculation of the Hilbert series \eqref{eq: hilbert_ser_Rinf} at $t^3$. We want to compute the Hilbert series 
\begin{equation}
\label{eq: hilbert_ser_Rinf_app}
H_{R_\infty}(q,t) = \mr{Ser}_{q=0,t=0}\frac{1+\sum_{S\subseteq \mr{LM}(G_\infty)}(-1)^{|S|}(q,t)^{\mr{wt}^{(2)}(\mr{LCM}(S))}}{(1-t)^3 \prod_{i=1}^{\infty}\left(1-q^i t\right)^4}~,
\end{equation}
where $\mr{LM}(G_n)$ is given by
\begin{equation}
\label{eq:Gr\"{o}bner_basis_Gn_app}
\mr{LM}\left(G_n\right)=\bigcup_{\substack{0 \leq \ell \leq n \\ 2 \leq m \leq \ell+2}} I_{\ell, m}~,
\end{equation}
where $I_{\ell,m}$ is a homogeneous set of weight $(\ell,m)$ given by
\begin{equation}
I_{\ell,m} := \left(\{X_{m-2}\}\cdot F_z(\ell-m+2, m-1)\right)\bigcup \left(\{J_{m-1}\}\cdot F(\ell-m+1, m-1)\right)~.
\end{equation}
Let us write the numerator of \eqref{eq: hilbert_ser_Rinf_app} as
\begin{equation}
\sum_{S \subseteq \mr{L M}\left(G_{\infty}\right)}(-1)^{|S|}(q, t)^{\mr{wt}^{(2)}(\mr{LCM}(S))}=: \sum_{\substack{\ell \geq 0\\ m\geq2}} C[\ell,m]q^\ell t^m~.
\end{equation}
there are several non-trivial contributions to $C[\ell, m]$. The first contribution comes from the subsets $S\subset I_{\ell,m}\subseteq \mr{LM}(G_\ell)$. Note that the sets $S\subset I_{\ell,m}$ contributing to $C[\ell,m]$ are necessarily singletons since \tsf{LCM} of any two elements of $I_{\ell,m}$ has weight $(\ell',m')$ with $m'>m$. For this reason, the constraints obtained from $S\subset I_{\ell,m}$ are called \textit{first-depth constraints}. Other contributions come from subsets of 
\begin{equation}
\bigcup_{\substack{\ell'\leq \ell\\m'<m}}I_{\ell',m'}~.    
\end{equation}
If $|S|=k$, then the constraints obtained from $S$ are called \textit{depth-k constraints}. A good way to organize these constributions is as follows: given $k$ sets $S_1,S_2,\dots S_k$ of monomials, define
\begin{equation}
    \mr{LCM}(S_1,S_2,\dots S_k) := \sum_{\substack{s_1 \in S_1,\dots, s_k\in S_k\\
    s_i \neq s_j}} 
    \mr{LCM}(s_1,s_2,\dots, s_k)~,
\end{equation}
and define $\mr{LCM}^{(\ell,m)}(S_1,S_2,\dots S_k)$ to be the maximal homogeneous polynomial which is a summand in $\mr{LCM}(S_1,S_2,\dots S_k)$ of weight $(\ell,m)$. Then the number of depth-$k$ contributions to $C[\ell,m]$ is given by 
\begin{equation}
    \sum_{S_1,\dots,S_k\subseteq \mr{LM}(G_\ell)}(-1)^k|\mr{LCM}^{(\ell,m)}(S_1,S_2,\dots, S_k)|~,
\end{equation}
where for a homogeneous polynomial $f$, $|f|$ is the number of terms in the polynomial counted with multiplicity\footnote{This means that we write the polynomial such that each monomial has coefficient 1 and then count the number of terms.}. Thus, we can write 
\begin{equation}
    C[\ell,m]=\sum_{k=1}^{|\mr{LM}(G_\ell)|}\sum_{S_1,\dots,S_k\subseteq \mr{LM}(G_\ell)}(-1)^k|\mr{LCM}^{(\ell,m)}(S_1,S_2,\dots S_k)|~.
\end{equation}
\par
We need to obtain all the coefficients $C[\ell,m=3]$ for $\ell\geq 0$. From \eqref{eq:Gr\"{o}bner_basis_Gn_app} with $n=0$, it is not hard to see  that
\begin{equation}
    C[\ell=0, m =3] = 0~.
\end{equation}
For $\ell>0$, we list the constraints contributing to $C[\ell, m =3]$ below. 

\begin{enumerate}
\item $\boldsymbol{C[\ell=1, m=3].}$

From \eqref{eq:Gr\"{o}bner_basis_Gn_app} with $n=1$, the depth-1 constraint is given by
\begin{equation}
    \mr{LCM}^{(1,3)}(\mr{LM}(G_1)) = X_1 Z_0^2~.
\end{equation}
The non-trivial contributions for the depth-2 constraints
are given by 
\begin{equation}
\begin{aligned}
    \mr{LCM}^{(1,3)}(\{X_0 Z_0\}, \{X_0 Z_1\}) &= X_0 Z_0 Z_1~,\\
    \mr{LCM}^{(1,3)}(\{X_0 Z_0\}, \{J_1\}\cdot F(0,1))&= 2 J_1 X_0 Z_0~, \\
    \mr{LCM}^{(1,3)}(\{J_1\}\cdot F(0,1), \{J_1\} \cdot F(0,1)) &= 
    J_1(X_0Y_0+Y_0Z_0+X_0Z_0)~.
\end{aligned}
\end{equation}
Depth-3 contribution are given by
\begin{equation}
    \mr{LCM}^{(1,3)}(\{X_0Z_0\},\{J_1\}\cdot F(0,1), \{J_1\}\cdot F(0,1)) = 
    J_1X_0Z_0~.
\end{equation}
Then we have
\begin{equation}
    C[\ell=1,m=3]=-1+(1+2+3)-1=4~.
\end{equation}
\item $\boldsymbol{C[\ell=2, m = 3].}$

The depth-1 contribution comes from
\begin{equation}
\mr{LCM}^{(2,3)}\left(\mr{L M}\left(G_2\right)\right)=X_1\overline{ F_z(1,2)}+J_2 \overline{F(0,2)}~,
\end{equation}
where given a set $S$, we have defined 
\begin{equation}
    \overline{S}:=\sum_{s\in S}s~.
\end{equation}

The depth-2 constraints are given by:
\begin{equation}
\begin{aligned}
\mr{LCM}^{(2,3)}(\{X_0 Z_0\},\{X_0 Z_2\}) &= X_0Z_0 Z_2~,\\
\mr{LCM}^{(2,3)}(\{X_0 Z_0\},\{J_2\}\cdot F(0,2)) &= J_2 X_0 Z_0~,\\
\mr{LCM}^{(2,3)}(\{X_0 Z_1\},\{J_1\}\cdot F(0,1)) &= J_2 X_0 Z_0~,\\
\mr{LCM}^{(2,3)}(\{X_0 Z_1\},\{J_1\}\cdot F(1,1)) &= J_1 X_0 Z_1~,\\
\mr{LCM}^{(2,3)}(J_1F(0,1),\{J_1\}\cdot F(1,1)) &=J_1 \overline{F(1,2)}~.
    \end{aligned}
\end{equation}
The depth-3 constraints are given by:
\begin{equation}
    \mr{LCM}^{(2,3)}
    (\{X_0 Z_1\}, \{J_1\}\cdot F(0,1), \{J_1\}\cdot F(1,1)) =
    X_0 Z_1 J_1~.
\end{equation}
Then we have 
\begin{equation}
    C[\ell=2, m=3] = -(|F_z(1,2)|+ |F(0,2)|)+ (4 + |F(1,2)|) -1 = 8~, 
\end{equation}
where we used  \eqref{eq:num_Fl2} and 
\begin{equation}
\label{eq:num_Fzl2}
    |F_z(\ell,2)| =\begin{cases}
        \ell/2 + 1, & \ell \text{ even},\\
        (\ell+1)/2, 
        & \ell \text{ odd}~.
    \end{cases}
\end{equation}
\item $\boldsymbol{C[\ell, m=3]}$ \textbf{for} $\boldsymbol{\ell>2.}$

The depth-1 constraints are given by 
\begin{equation}
    \mr{LCM}^{(\ell,3)}(\mr{LM}(G_\ell)) = X_1 \overline{F_z(\ell-1,2)}+ J_2 \overline{F(\ell-2,2)}~.
\end{equation}
The depth-2 constraints are given by
\begin{equation}
\begin{aligned}
\mr{LCM}^{(\ell,3)}(\{X_0 Z_{\ell-1}\},\{J_1 Z_{\ell-1}\}) &= J_1 Z_{\ell-1}X_0~, \\
    \mr{LCM}^{(\ell,3)}(\{X_0 Z_{\ell-1}\}, \{J_1 X_0\}) &= J_1 Z_{\ell-1} X_0~, \\
    \mr{LCM}^{(\ell,3)}(\{X_0 Z_{\ell-2}\}, \{J_2 X_0Z_{\ell-2}\}) &= J_2 Z_{\ell-2}X_0~ ,\\
    \sum_{i=0}^{\lfloor \ell/2\rfloor}\mr{LCM}^{(\ell,3)}(\{X_0 Z_i\},\{X_0 Z_{\ell-i}\}) &= \sum_{i=0}^{\lfloor \ell/2\rfloor} X_0 Z_i Z_{\ell-i}~,\\
    \sum_{i=0}^{\lfloor \ell/2\rfloor} \mr{LCM}^{(\ell,3)}(J_1 F(i,1), J_1F(\ell-1-i,1)) &= 
    \begin{cases}
    J_1 \overline{F(\ell-1,2)}, & \ell \text{ even}~,\\
    J_1 \overline{F(\ell-1,2)}-J_1\big(X^2_{\frac{\ell-1}{2}}+Y^2_{\frac{\ell-1}{2}}+Z^2_{\frac{\ell-1}{2}}+ J^2_{\frac{\ell-1}{2}}\big), & \ell \text{ odd}~,
    \end{cases}\\
    \mr{LCM}^{(\ell,3)}(J_1 F(\ell-3,1), J_2 F(\ell-2,2)) &= J_1 J_2 \overline{F(\ell-3,1)}~,\\
    \mr{LCM}^{(\ell,3)}(J_1 F(\ell-1,1), J_2 F(\ell-2,2)) &= J_1 J_2 \overline{F(\ell-3,1)}~.
\end{aligned}
\end{equation}
The depth-3 constraints are given by:
\begin{equation}
    \begin{aligned}
        \mr{LCM}^{(\ell,3)}(\{X_0 Z_{\ell-1}\},\{J_1 Z_{\ell-1}\},\{J_1 X_0\}) &= J_1 Z_{\ell-1}X_0~,\\
        \mr{LCM}^{(\ell,3)}(J_1 F(\ell-1,1), J_2 F(\ell-2,2), J_1 F(\ell-3,1)) &= J_1 J_2 \overline{F(\ell-3,1)}~.
    \end{aligned}
\end{equation}
Then 
\begin{equation}
    \begin{aligned}C[\ell,m=3]&=\begin{cases}
        -|F_z(\ell-1,2)|-|F(\ell-2,2)| + |F(\ell-3,1)| & l \text{ even},\\
        \quad+|F_z(\ell,2)|+|F(\ell-1,2)| + 1,\\
        -|F_z(\ell-1,2)|-|F(\ell-2,2)| + |F(\ell-3,1)| & l \text{ odd},\\
        \quad+|F_z(\ell,2)|+|F(\ell-1,2)| -2,
    \end{cases}\\
    &= \begin{cases}
        11, & \ell = 3~,\\
        12, & \ell > 3~,
    \end{cases}
\end{aligned}
\end{equation}
where we used \eqref{eq:num_Fl2} and 
\eqref{eq:num_Fzl2}.
\end{enumerate}
Thus we get
\begin{equation}
\sum_{\substack{S \subseteq \mr{L M}\left(G_{\infty}\right) \\ \mr{wt}_2(\mr{LCM}(S))=3}}(-1)^{|S|}(q, t)^{\mathsf{wt}^{(2)}(\mr{LCM}(S))}=t^3\left(4 q+8 q^2+11 q^3+12 \sum_{\ell \geq 4} q^\ell\right)~.
\end{equation}
The Hilbert series up to $t^3$ is then given by
\begin{equation}\label{eq:Hil_ser_modt4}
\begin{aligned}
H_{R_\infty}(q,t) &=\mr{Ser}_{q=0,t=0} \frac{1-t^2(1+4q + 5\sum_{\ell=2}^\infty q^\ell)+t^3(4q + 8 q^2 + 11 q^3 + 12 \sum_{\ell\geq 4}q^\ell)}{(1-t)^3 \prod_{i=1}^{\infty}\left(1-q^i t\right)^4}\quad (\mr{mod} \;t^4)~.
\end{aligned}
\end{equation}
Let us focus on $t^3$ term
\begin{equation}
H_{R_\infty}(q,t)|_{t^3} = \sum_{\ell=0}^\infty \mr{dim}(R_\infty)_{\ell,3} q^\ell~.
\end{equation}
From \eqref{eq:Hil_ser_modt4}, we see that the coefficients are given by
\begin{equation}
\label{eq:coefficients_t3}
\begin{aligned}
\mr{dim}(R_\infty)_{0,3} &= |F(0,3)| - |F(0,1)|~, \\
\mr{dim}(R_\infty)_{1,3} &= |F(1,3)| - |F(1,1)| - 4 |F(0,1)|+4|F(0,0)|~,\\
\mr{dim}(R_\infty)_{2,3} &= |F(2,3)|-|F(2,1)|-4|F(1,1)|-5|F(0,1)|+8 |F(1,1)|~,\\
\mr{dim}(R_\infty)_{3,3} & = |F(3,3)|-|F(3,1)|-4 |F(2,1)|-5 |F(1,1)|-5|F(0,1)|+ 11 |F(0,0)|~,\\
\mr{dim}(R_\infty)_{\ell,3} &= |F(\ell,3)|-|F(\ell,1)|-4|F(\ell-1,1)|-5\sum_{i=0}^{\ell -2} |F(i,1)|+ 12 |F(0,0)| \quad 
\ell >3~.
\end{aligned}
\end{equation}
Note that 
\begin{equation}
F(\ell,1) = \begin{cases}
\{X_0,Y_0,Z_0\} & \ell = 0~,\\
 \{X_\ell, Y_\ell,Z_\ell, J_\ell\} & \ell >0~.
\end{cases}    
\end{equation}
Thus, we have
\begin{equation}
\label{eq:Fl1}
|F(\ell,1)| = \begin{cases}
3 & \ell = 0~,\\
4 & \ell >0~.
\end{cases}
\end{equation}
Next, $|F(0,3)| = 10$, since the elements of $F(0,3)$ is constructed using the variables $\{X_0, Y_0, Z_0\}$.
For $\ell>0$, the number $|F(\ell,3)|$ is counted in six cases below:
\begin{itemize}
\item
The number of monomials constructed  using one variable from $\{X_\ell,Y_\ell,Z_\ell,J_\ell\}$ and two variables from $\{X_0,Y_0,Z_0\}$ is $4\times(3\times 2)=24$.
\item
The number of monomials constructed using one variable from $\{X_i, Y_i,Z_i,J_i\}$, one from $\{X_j, Y_j,Z_j,J_j\}$ and one from $\{X_0,Y_0,Z_0\}$, such that $i>j>0$ and $i+j = \ell$, is 
\begin{equation}
    4\times 4 \times 3 \bigg\lfloor\frac{\ell-1}{2}\bigg\rfloor~.
\end{equation}
\item 
The number of monomials constructed using one variable from $\{X_i, Y_i,Z_i,J_i\}$ and two variables from $\{X_j, Y_j,Z_j,J_j\}$, such that $i >0, j>0,i\neq j, i + 2j = \ell$, is  
\begin{equation}
4 \times 10 \left(\bigg\lfloor \frac{\ell-1}{2}\bigg\rfloor+\bigg\lfloor \frac{\ell-1}{3}\bigg\rfloor - \bigg\lfloor \frac{\ell}{3}\bigg\rfloor\right)~.
\end{equation}
\item 
The number of monomials constructed using one variable from $\{X_0,Y_0,Z_0\}$ and two variables from  $\{X_i, Y_i,Z_i,J_i\}$ , such that $i>0, 2i = \ell$, is 
\begin{equation}
    3 \times 10\left(\bigg\lfloor \frac{\ell}{2}\bigg\rfloor - \bigg\lfloor \frac{\ell-1}{2}\bigg\rfloor \right)~.
\end{equation}
\item 
The number of monomials constructed using three variables from  $\{X_i, Y_i,Z_i,J_i\}$ , such that $i>0, 3i = \ell$, is 
\begin{equation}
    20\left(\bigg\lfloor \frac{\ell}{3}\bigg\rfloor - \bigg\lfloor \frac{\ell-1}{3}\bigg\rfloor \right)~.
\end{equation}
\item 
The number of monomials constructed using three variables from three different sets $\{X_i,Y_i,Z_i,J_i\}$, $\{X_j,Y_j,Z_j,J_j\}$ and $\{X_k, Y_k, Z_k, J_k\}$, such that $i>j>k>0, i+j+k = \ell$, is
\begin{equation}
4\times 4 \times 4 \bigg\lfloor \frac{(\ell-3)^2 + 6}{12}\bigg\rfloor~.    
\end{equation}
\end{itemize}
Therefore, for $\ell > 0$, we get 
\begin{equation}
\label{eq:Fl3}
|F(\ell,3)| = 24 + 64\bigg\lfloor \frac{(\ell-3)^2 + 6}{12}\bigg\rfloor +20\bigg\lfloor \frac{\ell-1}{3}\bigg\rfloor + 58\bigg\lfloor \frac{\ell-1}{2}\bigg\rfloor -20 \bigg\lfloor \frac{\ell}{3}\bigg\rfloor+ 30\bigg\lfloor \frac{\ell}{2}\bigg\rfloor~.
\end{equation}
Combining \eqref{eq:coefficients_t3},\eqref{eq:Fl1} and \eqref{eq:Fl3}, then the coefficients are
\begin{equation}
\label{eq:coeff_t3}
\begin{aligned}
\mr{dim}(R_\infty)_{0,3} &= 7~, \\
\mr{dim}(R_\infty)_{1,3} &= 12~,\\
\mr{dim}(R_\infty)_{2,3} &= 27~,\\
\mr{dim}(R_\infty)_{3,3} & = 48~,\\
\mr{dim}(R_\infty)_{\ell,3} &= 41-20\ell + 64\bigg\lfloor \frac{(\ell-3)^2 + 6}{12}\bigg\rfloor +20\bigg\lfloor \frac{\ell-1}{3}\bigg\rfloor + 58\bigg\lfloor \frac{\ell-1}{2}\bigg\rfloor -20 \bigg\lfloor \frac{\ell}{3}\bigg\rfloor+ 30\bigg\lfloor \frac{\ell}{2}\bigg\rfloor~.
\end{aligned}
\end{equation}
We separate the above coefficients into 12 groups by taking $\ell=12 k+i$, where $i=4,5, \ldots 15,$ and $k\in\IZ_{\geq 0}$. All floor functions can then be removed:
\begin{equation}
\begin{aligned}
\mr{dim}(R_\infty)_{12k+4,3} & =768 k^2+416 k+79~ ,\\
\mr{dim}(R_\infty)_{12k+5,3} & =768 k^2+544 k+117~ ,\\
\mr{dim}(R_\infty)_{12k+6,3} & =768 k^2+672 k+171~, \\
\mr{dim}(R_\infty)_{12k+7,3} & =768 k^2+800 k+229~, \\
\mr{dim}(R_\infty)_{12k+8,3} & =768 k^2+928 k+303~, \\
\mr{dim}(R_\infty)_{12k+9,3} & =768 k^2+1056 k+385~, \\
\mr{dim}(R_\infty)_{12k+10,3} & =768 k^2+1184 k+479~, \\
\mr{dim}(R_\infty)_{12k+11,3} & =768 k^2+1312 k+581~, \\
\mr{dim}(R_\infty)_{12k+12,3}& =768 k^2+1440 k+699~, \\
\mr{dim}(R_\infty)_{12k+13,3} & =768 k^2+1568 k+821~, \\
\mr{dim}(R_\infty)_{12k+14,3} & =768 k^2+1696 k+959~, \\
\mr{dim}(R_\infty)_{12k+15,3} & =768 k^2+1824 k+1105~.
\end{aligned}
\end{equation}
Let us denote
\begin{equation}
\mr{dim}(R_\infty)_{12k+i,3}=a_i k^2+b_i k+c_i~.
\end{equation}
The $i$-th sector can be summed over as follows:
\begin{equation}
\sum_{k=0}^{\infty} \mr{dim}(R_\infty)_{12k+i,3} q^{12 k+i}=q^i \frac{a_i q^{12}\left(1+q^{12}\right)+b_i q^{12}\left(1-q^{12}\right)+c_i\left(1-q^{12}\right)^2}{\left(1-q^{12}\right)^3}~.
\end{equation}
Combing \eqref{eq:coeff_t3}, by straightforward computation, we can show that
\begin{equation}\label{eq:I-3_ser}
\begin{aligned}
\sum_{\ell=0}^\infty \mr{dim}(R_\infty)_{\ell,3} q^\ell&= 7+12 q+27 q^2+48 q^3+\sum_{i=4}^{15} q^i \frac{a_i q^{12}\left(1+q^{12}\right)+b_i q^{12}\left(1-q^{12}\right)+c_i\left(1-q^{12}\right)^2}{\left(1-q^{12}\right)^3} \\
& =\left(1-q^4\right)\left(1-q^5\right)\left(1-q^6\right) \sum_{k=0}^6 \frac{1}{(q ; q)_k(q ; q)_{6-k}}
\\
&=\frac{1}{(q;q)_3} \sum_{k=0}^{6}\binom{6}{k}_q~,
\end{aligned}
\end{equation}
where the second last equality follows from explicit Mathematica calculation. 
This matches with the coefficient of $t^3$ in our formula \eqref{eq:Mac_ind_A1D3}.
\section{Numerical Computation Of $f^n_{m,l}(q)$}\label{app:data_fmnl}
In this section, we present the computation of the polynomials $f^n_{m,l}(q)$ for $n\geq 2$. These are consistent with the general formula \eqref{eq:fnml_closed_form_sec5}. 
In our computation, we noticed a curious recursion relation for 
 $f^n_{m,l}(q)$:
 \begin{equation}\label{eq:rec_fnml}
     f^{n}_{m,l} = \begin{cases}
         0, \quad m \leq 0\; \text{or}\; l \leq 0\;, \text{or}\; m > n l~,\\
         1,\quad m=l> 0~, \\
         q^{(n-1) l^2}, \quad m = n l~,\\
         q^l f^{n}_{m-1,l} + f^{n}_{m-1,l-1}-q^{2m-l-n-1} f^{n}_{m-n-1,l-1}, \quad m/n< l < m~.
     \end{cases}
 \end{equation}
We show in Appendix \ref{app:new_id} that the formula for $f^n_{m,l}$ satisfies this recursion relation for $n=2$ and leave the general proof for future work. We do not yet have a physical interpretation of these recursion relations. 
\begin{table}[H]
    \centering
    \begin{tabular}{|c|l|}
\hline
$(n,m)$ & $f^n_{m,l}(q)$ for $l=1,\dots,m$
\\
\hline
$(2,1)$ & 
$f^2_{1,1} = 1$ \\
\hline
$(2,2)$ & 
$f^2_{2,1} = q$ \\
        & $f^2_{2,2} = 1$ \\
\hline
$(2,3)$ & 
$f^2_{3,1} = 0$ \\
        & $f^2_{3,2} = q + q^2$ \\
        & $f^2_{3,3} = 1$ \\
\hline
$(2,4)$ & 
$f^2_{4,1} = 0$ \\
        & $f^2_{4,2} = q^4$ \\
        & $f^2_{4,3} = q + q^2 + q^3$ \\
        & $f^2_{4,4} = 1$ \\
\hline
$(2,5)$ & 
$f^2_{5,1} = f^2_{5,2} = 0$ \\
        & $f^2_{5,3} = q^4 + q^5 + q^6$ \\
        & $f^2_{5,4} = q + q^2 + q^3 + q^4$ \\
        & $f^2_{5,5} = 1$ \\
\hline
$(2,6)$ & 
$f^2_{6,1} = f^2_{6,2} = 0$ \\
        & $f^2_{6,3} = q^9$ \\
        & $f^2_{6,4} = q^4 + q^5 + 2q^6 + q^7 + q^8$ \\
        & $f^2_{6,5} = q + q^2 + q^3 + q^4 + q^5$ \\
        & $f^2_{6,6} = 1$ \\
\hline
$(2,7)$ & 
$f^2_{7,1} = f^2_{7,2} = f^2_{7,3} = 0$ \\
        & $f^2_{7,4} = q^9 + q^{10} + q^{11} + q^{12}$ \\
        & $f^2_{7,5} = q^4 + q^5 + 2q^6 + 2q^7 + 2q^8 + q^9 + q^{10}$ \\
        & $f^2_{7,6} = q + q^2 + q^3 + q^4 + q^5 + q^6$ \\
        & $f^2_{7,7} = 1$ \\
\hline
$(2,8)$ & 
$f^2_{8,1} = f^2_{8,2} = f^2_{8,3} = 0$ \\
        & $f^2_{8,4} = q^{16}$ \\
        & $f^2_{8,5} = q^9 + q^{10} + 2q^{11} + 2q^{12} + 2q^{13} + q^{14} + q^{15}$ \\
        & $f^2_{8,6} = q^4 + q^5 + 2q^6 + 2q^7 + 3q^8 + 2q^9 + 2q^{10} + q^{11} + q^{12}$ \\
        & $f^2_{8,7} = q + q^2 + q^3 + q^4 + q^5 + q^6 + q^7$ \\
        & $f^2_{8,8} = 1$ \\
\hline
$(2,9)$ & 
$f^2_{9,1} = f^2_{9,2} = f^2_{9,3} = f^2_{9,4} = 0$ \\
        & $f^2_{9,5} = q^{16} + q^{17} + q^{18} + q^{19} + q^{20}$ \\
        & $f^2_{9,6} = q^9 + q^{10} + 2q^{11} + 3q^{12} + 3q^{13} + 3q^{14} + 3q^{15} + 2q^{16} + q^{17} + q^{18}$ \\
        & $f^2_{9,7} = q^4 + q^5 + 2q^6 + 2q^7 + 3q^8 + 3q^9 + 3q^{10} + 2q^{11} + 2q^{12} + q^{13} + q^{14}$ \\
        & $f^2_{9,8} = q + q^2 + q^3 + q^4 + q^5 + q^6 + q^7 + q^8$ \\
        & $f^2_{9,9} = 1$ \\
\hline
\end{tabular}
    \caption{Mathematica computations of the polynomials $f^2_{m,l}$.}
    \label{tab:my_label}
\end{table}
\begin{table}[H]
    \centering
    \begin{tabular}{|c|l|}
    \hline
    $(n,m)$ & $f^n_{m,l}(q)$ for $l=1,\dots,m$
\\
    \hline
$(2,10)$ & 
$f^2_{10,1} = f^2_{10,2} = f^2_{10,3} = f^2_{10,4} = 0$ \\
         & $f^2_{10,5} = q^{25}$ \\
         & $f^2_{10,6} = q^{16} + q^{17} + 2q^{18} + 2q^{19} + 3q^{20} + 2q^{21} + 2q^{22} + q^{23} + q^{24}$ \\
         & $f^2_{10,7} = q^9 + q^{10} + 2q^{11} + 3q^{12} + 4q^{13} + 4q^{14} + 5q^{15} + 4q^{16} + 4q^{17} + 3q^{18} + 2q^{19} + q^{20} + q^{21}$ \\
         & $f^2_{10,8} = q^4 + q^5 + 2q^6 + 2q^7 + 3q^8 + 3q^9 + 4q^{10} + 3q^{11} + 3q^{12} + 2q^{13} + 2q^{14} + q^{15} + q^{16}$ \\
         & $f^2_{10,9} = q + q^2 + q^3 + q^4 + q^5 + q^6 + q^7 + q^8 + q^9$ \\
         & $f^2_{10,10} = 1$ \\
\hline
$(2,11)$ & 
$f^2_{11,1} = f^2_{11,2} = f^2_{11,3} = f^2_{11,4} = f^2_{11,5} = 0$ \\
         & $f^2_{11,6} = q^{25} + q^{26} + q^{27} + q^{28} + q^{29} + q^{30}$ \\
         & \begin{tabular}[t]{@{}l@{}}$f^2_{11,7} = q^{16} + q^{17} + 2q^{18} + 3q^{19} + 4q^{20} + 4q^{21} + 5q^{22}+ 4q^{23} + 4q^{24}+ 3q^{25} + 2q^{26} + q^{27} + q^{28} $\end{tabular} \\
         & \begin{tabular}[t]{@{}l@{}}$f^2_{11,8} = q^9 + q^{10} + 2q^{11} + 3q^{12} + 4q^{13} + 5q^{14} + 6q^{15}+ 6q^{16} + 6q^{17} + 6q^{18} + 5q^{19}+ 4q^{20} + 3q^{21}$ \\
         \quad\quad~$+ 2q^{22} + q^{23} + q^{24} $\end{tabular} \\
         & \begin{tabular}[t]{@{}l@{}}$f^2_{11,9} = q^4 + q^5 + 2q^6 + 2q^7 + 3q^8 + 3q^9 + 4q^{10} + 4q^{11}+ 4q^{12} + 3q^{13} + 3q^{14} + 2q^{15}+ 2q^{16}  $ \\
         \quad\quad~$+ q^{17} + q^{18}$\end{tabular} \\
         & $f^2_{11,10} = q + q^2 + q^3 + q^4 + q^5 + q^6 + q^7 + q^8 + q^9 + q^{10}$ \\
         & $f^2_{11,11} = 1$ \\
\hline
$(2,12)$ & 
$f^2_{12,1} = f^2_{12,2} = f^2_{12,3} = f^2_{12,4} = f^2_{12,5} = 0$ \\
         & $f^2_{12,6} = q^{36}$ \\
         & \begin{tabular}[t]{@{}l@{}}$f^2_{12,7} = q^{25} + q^{26} + 2q^{27} + 2q^{28} + 3q^{29} + 3q^{30} + 3q^{31}+ 2q^{32} + 2q^{33} + q^{34} + q^{35}$ \end{tabular} \\
         & \begin{tabular}[t]{@{}l@{}}$f^2_{12,8} = q^{16} + q^{17} + 2q^{18} + 3q^{19} + 5q^{20} + 5q^{21} + 7q^{22}+ 7q^{23} + 8q^{24} + 7q^{25} + 7q^{26}+ 5q^{27} + 5q^{28} $ \\
         \quad\quad~$+ 3q^{29} + 2q^{30} + q^{31} + q^{32}$\end{tabular} \\
         & \begin{tabular}[t]{@{}l@{}}$f^2_{12,9} = q^9 + q^{10} + 2q^{11} + 3q^{12} + 4q^{13} + 5q^{14} + 7q^{15}+ 7q^{16} + 8q^{17} + 8q^{18} + 8q^{19}+ 7q^{20} + 7q^{21} $ \\
         \quad\quad~$+ 5q^{22} + 4q^{23} + 3q^{24} + 2q^{25} + q^{26} + q^{27}$\end{tabular} \\
         & \begin{tabular}[t]{@{}l@{}}$f^2_{12,10} = q^4 + q^5 + 2q^6 + 2q^7 + 3q^8 + 3q^9 + 4q^{10} + 4q^{11}+ 5q^{12} + 4q^{13} + 4q^{14} + 3q^{15}+ 3q^{16}  $ \\
         \quad\quad ~~$+ 2q^{17} + 2q^{18}+ q^{19} + q^{20}$\end{tabular} \\
         & $f^2_{12,11} = q + q^2 + q^3 + q^4 + q^5 + q^6 + q^7 + q^8 + q^9 + q^{10} + q^{11}$ \\
         & $f^2_{12,12} = 1$ \\
\hline
\end{tabular}
    \caption{Mathematica computations of the polynomials $f^2_{m,l}$ continued.}
    \label{tab:my_label}
\end{table}
\begin{table}[H]
    \centering
    \begin{tabular}{|c|l|}
\hline
$(n,m)$ & $f^n_{m,l}(q)$ for $l=1,\dots,m$ \\
\hline
$(3,1)$ & $f^3_{1,1} = 1$ \\
\hline
$(3,2)$ & $f^3_{2,1} = q$ \\
        & $f^3_{2,2} = 1$ \\
\hline
$(3,3)$ & $f^3_{3,1} = q^2$ \\
        & $f^3_{3,2} = q + q^2$ \\
        & $f^3_{3,3} = 1$ \\
\hline
$(3,4)$ & $f^3_{4,1} = 0$ \\
        & $f^3_{4,2} = q^2 + q^3 + q^4$ \\
        & $f^3_{4,3} = q + q^2 + q^3$ \\
        & $f^3_{4,4} = 1$ \\
\hline
$(3,5)$ & $f^3_{5,1} = 0$ \\
        & $f^3_{5,2} = q^5 + q^6$ \\
        & $f^3_{5,3} = q^2 + q^3 + 2q^4 + q^5 + q^6$ \\
        & $f^3_{5,4} = q + q^2 + q^3 + q^4$ \\
        & $f^3_{5,5} = 1$ \\
\hline
$(3,6)$ & $f^3_{6,1} = 0$ \\
        & $f^3_{6,2} = q^8$ \\
        & $f^3_{6,3} = q^5 + 2q^6 + 2q^7 + q^8 + q^9$ \\
        & $f^3_{6,4} = q^2 + q^3 + 2q^4 + 2q^5 + 2q^6 + q^7 + q^8$ \\
        & $f^3_{6,5} = q + q^2 + q^3 + q^4 + q^5$ \\
        & $f^3_{6,6} = 1$ \\
\hline
$(3,7)$ & $f^3_{7,1} = f^3_{7,2} = 0$ \\
        & $f^3_{7,3} = q^8 + q^9 + 2q^{10} + q^{11} + q^{12}$ \\
        & $f^3_{7,4} = q^5 + 2q^6 + 3q^7 + 3q^8 + 3q^9 + 2q^{10} + q^{11} + q^{12}$ \\
        & $f^3_{7,5} = q^2 + q^3 + 2q^4 + 2q^5 + 3q^6 + 2q^7 + 2q^8 + q^9 + q^{10}$ \\
        & $f^3_{7,6} = q + q^2 + q^3 + q^4 + q^5 + q^6$ \\
        & $f^3_{7,7} = 1$ \\
\hline
$(3,8)$ & $f^3_{8,1} = f^3_{8,2} = 0$ \\
        & $f^3_{8,3} = q^{13} + q^{14} + q^{15}$ \\
        & $f^3_{8,4} = q^8 + q^9 + 3q^{10} + 3q^{11} + 4q^{12} + 3q^{13} + 2q^{14} + q^{15} + q^{16}$ \\
        & $f^3_{8,5} = q^5 + 2q^6 + 3q^7 + 4q^8 + 5q^9 + 4q^{10} + 4q^{11} + 3q^{12} + 2q^{13} + q^{14} + q^{15}$ \\
        & $f^3_{8,6} = q^2 + q^3 + 2q^4 + 2q^5 + 3q^6 + 3q^7 + 3q^8 + 2q^9 + 2q^{10} + q^{11} + q^{12}$ \\
        & $f^3_{8,7} = q + q^2 + q^3 + q^4 + q^5 + q^6 + q^7$ \\
        & $f^3_{8,8} = 1$ \\
\hline
\end{tabular}
    \caption{Mathematica computations of the polynomials $f^3_{m,l}$.}
    \label{tab:my_label}
\end{table}
\begin{table}[H]
    \centering
    \begin{tabular}{|c|l|}
    \hline
$(n,m)$ & $f^n_{m,l}(q)$ for $l=1,\dots,m$ \\
\hline
$(3,9)$ & $f^3_{9,1} = f^3_{9,2} = 0$ \\
&$f^3_{9,3} = q^{18}$
\\
       & $f^3_{9,4} = q^{13}+2q^{14}+3q^{15}+3q^{16}+3q^{17}+2q^{18}+q^{19}+q^{20}$ \\
       & \begin{tabular}[t]{@{}l@{}}$f^3_{9,5} = q^8+q^9+3q^{10}+4q^{11}+6q^{12}+6q^{13}+7q^{14}+5q^{15}+5q^{16}+3q^{17}+2q^{18}+q^{19}+q^{20}$\end{tabular} \\
       & \begin{tabular}[t]{@{}l@{}}$f^3_{9,6} = q^5 + 2q^6 + 3q^7 + 4q^8 + 6q^9 + 6q^{10} + 6q^{11}+ 6q^{12} + 5q^{13} + 4q^{14} + 3q^{15}+ 2q^{16} + q^{17}+q^{18} $\end{tabular} \\
       & \begin{tabular}[t]{@{}l@{}}$f^3_{9,7} = q^2 + q^3 + 2q^4 + 2q^5 + 3q^6 + 3q^7 + 4q^8+ 3q^9 + 3q^{10} + 2q^{11} + 2q^{12} + q^{13} + q^{14}$\end{tabular} \\
       & $f^3_{9,8} = q + q^2 + q^3 + q^4 + q^5 + q^6 + q^7 + q^8$ \\
       & $f^3_{9,9} = 1$ \\
\hline
$(3,10)$ & $f^3_{10,1} = f^3_{10,2} = f^3_{10,3} = 0$ \\
&$f^3_{10,4} =q^{18}+q^{19}+2q^{20}+2q^{21}+2q^{22}+q^{23}+q^{24}$
\\
        & $f^3_{10,5} = q^{13}+2q^{14}+4q^{15}+5q^{16}+7q^{17}+7q^{18}+7q^{19}+6q^{20}+5q^{21}+3q^{22}+2q^{23}+q^{24}+q^{25}$ \\
        & \begin{tabular}[t]{@{}l@{}}$f^3_{10,6} = q^8+q^9+3q^{10}+4q^{11}+7q^{12}+8q^{13}+10q^{14}+10q^{15}+11q^{16}+9q^{17}+8q^{18}+6q^{19}+5q^{20}$
\\\quad\quad~$+3q^{21}+2q^{22}+q^{23}+q^{24}$\end{tabular} \\
        & \begin{tabular}[t]{@{}l@{}}$f^3_{10,7} = q^5+2q^6+3q^7+4q^8+6q^9+7q^{10}+8q^{11}+8q^{12}+8q^{13}+7q^{14}+7q^{15}+5q^{16}+4q^{17}$
\\\quad\quad~$+3q^{18}+2q^{19}+q^{20}+q^{21}$
\end{tabular} \\
        & \begin{tabular}[t]{@{}l@{}}$f^3_{10,8} = q^2 + q^3 + 2q^4 + 2q^5 + 3q^6 + 3q^7 + 4q^8+ 4q^9 + 4q^{10} + 3q^{11} + 3q^{12} + 2q^{13} + 2q^{14} + q^{15} + q^{16}$\end{tabular} \\
        & $f^3_{10,9} = q + q^2 + q^3 + q^4 + q^5 + q^6 + q^7 + q^8 + q^9$ \\
        & $f^3_{10,10} = 1$ \\
\hline
$(3,11)$ & $f^3_{11,1} = f^3_{11,2} =f^3_{11,3} = 0$ \\
&$ f^3_{11,4} = q^{25}+q^{26}+q^{27}+q^{28}$
\\
& $f^3_{11,5}=q^{18}+q^{19}+3q^{20}+4q^{21}+6q^{22}+6q^{23}+7q^{24}+5q^{25}+5q^{26}+3q^{27}+2q^{28}+q^{29}+q^{30} $
\\
        &\begin{tabular}[t]{@{}l@{}} $f^3_{11,6} = q^{13}+2q^{14}+4q^{15}+6q^{16}+9q^{17}+11q^{18}+13q^{19}+14q^{20}+14q^{21}+13q^{22}+11q^{23}+9q^{24}$\\\quad\quad~$+7q^{25}+5q^{26}+3q^{27}+2q^{28}+q^{29}+q^{30}$\end{tabular} \\
        & \begin{tabular}[t]{@{}l@{}}$f^3_{11,7} = q^8+q^9+3q^{10}+4q^{11}+7q^{12}+9q^{13}+12q^{14}+13q^{15}+16q^{16}+15q^{17}+16q^{18}+14q^{19}$\\\quad\quad~$+13q^{20}+10q^{21}+9q^{22}+6q^{23}+5q^{24}+3q^{25}+2q^{26}+q^{27}+q^{28}$\end{tabular} \\
        & \begin{tabular}[t]{@{}l@{}}$f^3_{11,8} = q^5+2q^6+3q^7+4q^8+6q^9+7q^{10}+9q^{11}+10q^{12}+10q^{13}+10q^{14}+10q^{15}+9q^{16}+8q^{17}$\\\quad\quad~$
+7q^{18}+5q^{19}+4q^{20}+3q^{21}+2q^{22}+q^{23}+q^{24}$\end{tabular} \\
        & \begin{tabular}[t]{@{}l@{}}$f^3_{11,9} = q^2+q^3+2q^4+2q^5+3q^6+3q^7+4q^8+4q^9+5q^{10}+4q^{11}+4q^{12}+3q^{13}+3q^{14}+2q^{15}$\\\quad\quad~$+2q^{16}+q^{17}+q^{18}$\end{tabular} \\
        & $f^3_{11,10} = q + q^2 + q^3 + q^4 + q^5 + q^6 + q^7 + q^8 + q^9 + q^{10}$ \\
        & $f^3_{11,11} = 1$ \\
\hline
$(3,12)$ & $f^3_{12,1} = f^3_{12,2} = f^3_{12,3} =  0$
\\
&$f^3_{12,4} =q^{32} $
\\
&$f^3_{12,5} =q^{25}+2q^{26}+3q^{27}+4q^{28}+5q^{29}+4q^{30}+4q^{31}+3q^{32}+2q^{33}+q^{34}+q^{35}$
\\
&\begin{tabular}[t]{@{}l@{}} $f^3_{12,6} = q^{18}+q^{19}+3q^{20}+5q^{21}+8q^{22}+10q^{23}+14q^{24}+14q^{25}+16q^{26}+15q^{27}+14q^{28}+11q^{29}$\\\quad\quad~$+10q^{30}+7q^{31}+5q^{32}+3q^{33}+2q^{34}+q^{35}+q^{36}$\end{tabular}
\\
        &\begin{tabular}[t]{@{}l@{}} $f^3_{12,7} = q^{13}+2q^{14}+4q^{15}+6q^{16}+10q^{17}+13q^{18}+17q^{19}+20q^{20}+23q^{21}+24q^{22}+25q^{23}+23q^{24}$
\\\quad\quad~$+22q^{25}+19q^{26}+16q^{27}+12q^{28}+10q^{29}+7q^{30}+5q^{31}+3q^{32}+2q^{33}+q^{34}+q^{35}$\end{tabular} \\
        & \begin{tabular}[t]{@{}l@{}}$f^3_{12,8} = q^8+q^9+3q^{10}+4q^{11}+7q^{12}+9q^{13}+13q^{14}+15q^{15}+19q^{16}+20q^{17}+22q^{18}+22q^{19}+23q^{20}$\\\quad\quad~$+20q^{21}+19q^{22}+16q^{23}+14q^{24}+11q^{25}+9q^{26}+6q^{27}+5q^{28}+3q^{29}+2q^{30}+q^{31}+q^{32}$\end{tabular} \\
        & \begin{tabular}[t]{@{}l@{}}$f^3_{12,9} = q^5+2q^6+3q^7+4q^8+6q^9+7q^{10}+9q^{11}+11q^{12}+12q^{13}+12q^{14}+13q^{15}+12q^{16}+12q^{17}$\\\quad\quad~$+11q^{18}+10q^{19}+8q^{20}+7q^{21}+5q^{22}+4q^{23}+3q^{24}+2q^{25}+q^{26}+q^{27}$\end{tabular} \\
        & \begin{tabular}[t]{@{}l@{}}$f^3_{12,10} = q^2+q^3+2q^4+2q^5+3q^6+3q^7+4q^8+4q^9+5q^{10}+5q^{11}+5q^{12}+4q^{13}+4q^{14}+3q^{15}$\\\quad\quad~~$+3q^{16}+2q^{17}+2q^{18}+q^{19}+q^{20}$\end{tabular} \\
        & \begin{tabular}[t]{@{}l@{}}$f^3_{12,11} = q+q^2+q^3+q^4+q^5+q^6+q^7+q^8+q^9+q^{10}+q^{11}$\end{tabular} \\
        & $f^3_{12,12} = 1$ \\
\hline
\end{tabular}

    \caption{Mathematica computations of the polynomials $f^3_{m,l}$ continued.}
    \label{tab:my_label}
\end{table}


\section{Explicit Formula For $f^n_{m,l}(q)$}\label{app:new_id}    
In this appendix, we find an explicit form of the polynomials $f_{m,l}^n(q)$.
Recall that  
\begin{equation}\label{eq:main_id_gen_app}
\frac{\left(z ; q^{2n+1}\right)_{\infty}\left(q^{2n+1} / z ; q^{2n+1}\right)_{\infty}\left(q^{2n+1} ; q^{2n+1}\right)_{\infty}}{(z ; q)_{\infty}(q / z ; q)_{\infty}(q ; q)_{\infty}}=\sum_{m=0}^{\infty}q^m\sum_{l=1}^m \frac{f^n_{m,l}(q)}{(q ; q)_l} \sum_{k=-l}^l{
2 l \choose
l+k}_q z^k~,    
\end{equation}
where $f^n_{m,l}$ is a polynomial satisfying 
\begin{equation}\label{eq:fmnl_def_app}
    1 + \sum_{m=1}^\infty q^m \sum_{l=1}^m 
    \frac{f^n_{m,l}(q)}{(q;q)_l} (q^N; q)_l 
    (q^{1-N};q)_l = q^{nN(N-1)}~.
\end{equation}
Our goal is to find a closed form expression for the polynomials $f_{m,l}^n(q)$. Let us begin by proving a lemma.
\begin{lemma}\label{lemma:S_kqN_rel}
For each $k\in\mathds{N}$, define the sum
\begin{equation}
\begin{split}
    S_k:=\sum_{i_k,i_{k-1},\dots,i_3,i_2,i_1\geq 0}&q^{i_k^2+(i_k+i_{k-1})^2+\dots+(i_k+\dots+i_3+i_2)^2}q^{ki_k+(k-1)i_{k-1}+\dots+2i_2+i_1}
    \\
    \times&\frac{(q^N;q)_{i_k+i_{k-1}+\dots +i_2+i_1}(q^{1-N};q)_{i_k+i_{k-1}+\dots +i_2+i_1}}{(q;q)_{i_k}(q;q)_{i_{k-1}}\dots (q;q)_{i_3}(q;q)_{i_2}(q;q)_{i_1}}~.
\end{split}    
\end{equation}
Then we have 
\begin{equation}
    S_k=q^{kN(N-1)}~.
\end{equation}
\end{lemma}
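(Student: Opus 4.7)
The strategy is induction on $k$. Both the base case and the inductive step will rely on a single algebraic input: the terminating summation
\begin{equation*}
\sum_{i=0}^{n} \frac{(a;q)_i (q^{-n};q)_i}{(q;q)_i} q^i = a^n,
\end{equation*}
i.e., the $c \to 0$ limit of the $q$-Vandermonde identity \eqref{eq:phi_identity} already invoked in the proof of \eqref{eq:S_1_sum}. Setting $a = q^N$ and $n = N-1$ recovers the base case $S_1 = q^{N(N-1)}$ directly. I take $N$ to be a positive integer throughout, which is all that is needed for the lemma's sole application: the derivation of \eqref{eq:main_id_gen_app} matches coefficients of $z^N$ only for $N \geq 0$, and for such $N$ the factor $(q^{1-N};q)_{\,\cdot}$ forces every sum in sight to be finite.

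For the inductive step, assume $S_{k-1}(N) = q^{(k-1)N(N-1)}$ and introduce the partial sum $j := i_2 + i_3 + \cdots + i_k$. The plan is to carry out the innermost summation over $i_1$ first. Using $(a;q)_{m+n} = (a;q)_m(aq^m;q)_n$, the Pochhammer product factors as
\begin{equation*}
(q^N;q)_{i_1+j}(q^{1-N};q)_{i_1+j} = (q^N;q)_j(q^{1-N};q)_j \cdot (q^{N+j};q)_{i_1}(q^{1-N+j};q)_{i_1},
\end{equation*}
after which the $i_1$-summation decouples from the rest and, by the displayed identity above with $a = q^{N+j}$ and $n = N-j-1$, evaluates to $q^{(N+j)(N-j-1)}$ (in the case $N \leq j$ the outer factor $(q^{1-N};q)_j$ already vanishes, so there is nothing to check). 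Exploiting the elementary identity $(N+j)(N-j-1) = N(N-1) - j^2 - j$, the $-j^2$ cancels the $s = 2$ term $(i_k + \cdots + i_2)^2 = j^2$ in the quadratic part $Q$, and the $-j$ cancels the leftover $j = i_2 + \cdots + i_k$ in the $i_1$-removed linear part $k i_k + (k-1) i_{k-1} + \cdots + 2 i_2$. Factoring out $q^{N(N-1)}$, the residual summand is, upon the relabeling $i_s \mapsto i_{s-1}$, precisely the summand of $S_{k-1}(N)$, so $S_k(N) = q^{N(N-1)} \cdot S_{k-1}(N)$, and the induction closes to give $q^{kN(N-1)}$.

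I do not anticipate a substantial obstacle here: the argument is purely algebraic, with the only verifications being the elementary identity $(N+j)(N-j-1) + j^2 + j = N(N-1)$ and the bookkeeping that the $k-2$ remaining quadratic terms $(i_k + \cdots + i_s)^2$ for $s \geq 3$, together with the linear combination $(k-1) i_k + (k-2) i_{k-1} + \cdots + i_2$, reindex cleanly to the quadratic and linear parts defining $S_{k-1}$. The only subtlety worth flagging is that the inner $_2\phi_1$ evaluation requires a terminating series, which is guaranteed by the integrality of $N$; fortunately this integrality is exactly the regime in which the lemma is used downstream.
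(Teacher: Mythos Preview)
Your proposal is correct and follows essentially the same approach as the paper's own proof: induction on $k$, with both the base case and the inductive step driven by the $c\to 0$ limit of the $q$-Vandermonde sum applied to the innermost $i_1$-summation, followed by the same cancellation $(N+j)(N-j-1)=N(N-1)-j^2-j$ and the relabeling $i_s\mapsto i_{s-1}$ to recover $S_{k-1}$. Your explicit remark that integrality of $N$ guarantees termination (and that the $N\le j$ case is vacuous) is a nice clarification the paper leaves implicit, but otherwise the arguments coincide.
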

\begin{proof}
We prove this by induction on $k$. For $k=1$, we have 
\begin{equation}
    S_1=\sum_{i_1\geq 0}q^{i_1}\frac{(q^N;q)_{i_1}(q^{1-N};q)_{i_1}}{(q;q)_{i_1}}~.
\end{equation}
This is the hypergeometric sum in \eqref{eq:S_1_sum} and is equal to $q^{N(N-1)}$. We now assume that $S_{k-1}=q^{(k-1)N(N-1)}$ and show that $S_{k}=q^{kN(N-1)}$. We have 
\begin{equation}
\begin{split}
    S_k=\sum_{i_k,i_{k-1},\dots,i_3,i_2\geq 0}&q^{i_k^2+(i_k+i_{k-1})^2+\dots+(i_k+\dots+i_3+i_2)^2}q^{ki_k+(k-1)i_{k-1}+\dots+2i_2}
    \\
    \times&\frac{(q^N;q)_{i_k+i_{k-1}+\dots +i_2}(q^{1-N};q)_{i_k+i_{k-1}+\dots +i_2}}{(q;q)_{i_k}(q;q)_{i_{k-1}}\dots (q;q)_{i_3}(q;q)_{i_2}}
    \\
    \times&\sum_{i_1\geq 0}q^{i_1}\frac{(q^{N+i_k+i_{k-1}+\dots+i_2};q)_{i_1}(q^{1-N+i_k+i_{k-1}+\dots+i_2};q)_{i_1}}{(q;q)_{i_1}}~,
\end{split}    
\end{equation}
where we used 
\begin{equation}
    (a;q)_{n+k}=(a;q)_n(aq^n;q)_k~.
\end{equation}
The inner sum can again be evaluated using the hypergeometric sum \eqref{eq:phi_identity} to be
\begin{equation}
\begin{split}
\sum_{i_1\geq 0}q^{i_1}\frac{(q^{N+i_k+i_{k-1}+\dots+i_2};q)_{i_1}(q^{1-N+i_k+i_{k-1}+\dots+i_2};q)_{i_1}}{(q;q)_{i_1}}&=q^{(N+i_k+i_{k-1}+\dots+i_2)(N-i_k-i_{k-1}-\dots-i_2-1)}
\\
&=q^{N(N-1)-(i_k+\dots+i_2)^2-i_k-i_{k-1}-\dots-i_2}~.
\end{split}
\end{equation}
Plugging in $S_k$, we get 
\begin{equation}
\begin{split}
    S_k=\sum_{i_k,i_{k-1},\dots,i_3,i_2\geq 0}&q^{i_k^2+(i_k+i_{k-1})^2+\dots+(i_k+\dots+i_3+i_2)^2}q^{ki_k+(k-1)i_{k-1}+\dots+2i_2}
    \\
    \times&\frac{(q^N;q)_{i_k+i_{k-1}+\dots +i_2}(q^{1-N};q)_{i_k+i_{k-1}+\dots +i_2}}{(q;q)_{i_k}(q;q)_{i_{k-1}}\dots (q;q)_{i_3}(q;q)_{i_2}}
    \\
    \times&q^{N(N-1)-(i_k+\dots+i_2)^2-i_k-i_{k-1}-\dots-i_2}
    \\
    =q^{N(N-1)}&\sum_{i_k,i_{k-1},\dots,i_3,i_2\geq 0}q^{i_k^2+(i_k+i_{k-1})^2+\dots+(i_k+\dots+i_3)^2}q^{(k-1)i_k+(k-2)i_{k-1}+\dots+i_2}
    \\
    &\hspace{3cm}\times\frac{(q^N;q)_{i_k+i_{k-1}+\dots +i_2}(q^{1-N};q)_{i_k+i_{k-1}+\dots +i_2}}{(q;q)_{i_k}(q;q)_{i_{k-1}}\dots (q;q)_{i_3}(q;q)_{i_2}}~.
\end{split}    
\end{equation}
Renaming the dummy variables $i_j\to i_{j-1},~2\leq j\leq k$ in the last sum, we see that
\begin{equation}
    S_k=q^{N(N-1)}S_{k-1}=q^{(k-1)N(N-1)}q^{N(N-1)}=q^{kN(N-1)}~.
\end{equation}
\end{proof}
Now in $S_k$ we change variable to  
\begin{equation}
    i_1+i_2+\dots+i_k=:i_1'~.
\end{equation}
Then the sum becomes 
\begin{equation}
\begin{split}
    S_k=\sum_{i_k,i_{k-1},\dots,i_3,i_2,i_1'\geq 0}&q^{Q(i_2,i_3,\dots,i_k)}q^{(k-1)i_k+(k-2)i_{k-1}+\dots+i_2+i_1'}
    \\
    \times&\frac{(q^N;q)_{i'_1}(q^{1-N};q)_{i'_1}}{(q;q)_{i_k}(q;q)_{i_{k-1}}\dots (q;q)_{i_3}(q;q)_{i_2}(q;q)_{i'_1-i_2-\dots-i_k}}~,
\end{split}    
\end{equation}
where we have defined 
\begin{equation}\label{eq:Q_def}
    Q(i_2,i_3,\dots,i_k):=i_k^2+(i_k+i_{k-1})^2+\dots+(i_k+\dots+i_3+i_2)^2~.
\end{equation}
Next we change variables to 
\begin{equation}
    (k-1)i_k+(k-2)i_{k-1}+\dots+i_2+i_1'=:i_2'~.
\end{equation}
Noting that 
\begin{equation}
    \frac{1}{(q;q)_k}=\frac{(q^{k+1};q)_\infty}{(q;q)_\infty}=0,\quad \text{for}\quad k<0~,
\end{equation}
we get 
\begin{equation}
\begin{split}
    S_k&=\sum_{i_k,i_{k-1},\dots,i_3,i_2',i_1'\geq 0}\frac{q^{Q(i_2'-(k-1)i_k-(k-2)i_{k-1}-\dots-i_1',i_3,\dots,i_k)}q^{i_2'}}{(q;q)_{i_k}(q;q)_{i_{k-1}}\dots (q;q)_{i_3}}
    \\
    &\hspace{4cm}\times\frac{(q^N;q)_{i'_1}(q^{1-N};q)_{i'_1}}{(q;q)_{i_2'-(k-1)i_k-(k-2)i_{k-1}-\dots-i_1'}(q;q)_{2i'_1+i_3+2i_4+\dots+(k-2)i_k-i_2'}}
    \\
    &=\sum_{i_2'\geq 0}q^{i_2'}\sum_{i_1'\geq 0}\frac{(q^N;q)_{i'_1}(q^{1-N};q)_{i'_1}}{(q;q)_{i_1'}}
    \\
    &\times\sum_{i_k,i_{k-1},\dots,i_3\geq 0}\frac{q^{Q(i_2'-(k-1)i_k-(k-2)i_{k-1}-\dots-i_1',i_3,\dots,i_k)}(q;q)_{i_1'}}{(q;q)_{i_k}\dots (q;q)_{i_3}(q;q)_{i_2'-(k-1)i_k-(k-2)i_{k-1}-\dots-i_1'}(q;q)_{2i'_1+i_3+2i_4+\dots+(k-2)i_k-i_2'}} ~.
\end{split}      
\end{equation}
Note that the second sum is bounded above by $i_2'$ since 
\begin{equation}
    \frac{1}{(q;q)_{i_2'-(k-1)i_k-(k-2)i_{k-1}-\dots-i_1'}}=0\quad\text{unless}\quad i_1'\leq i_2'~.
\end{equation}
Thus we have 
\begin{equation}
    S_k=\sum_{m=0}^\infty q^{m}\sum_{l= 0}^{m}f^k_{m,l}(q)\frac{(q^N;q)_{l}(q^{1-N};q)_{l}}{(q;q)_{l}}~,
\end{equation}
where 
\begin{equation}\label{eq:fnml_closed_form}
f^k_{m,l}(q)=\sum_{i_k,i_{k-1},\dots,i_3\geq 0}\frac{q^{Q(m-(k-1)i_k-(k-2)i_{k-1}-\dots-2i_3-l,i_3,\dots,i_k)}(q;q)_{l}}{(q;q)_{i_k}\dots (q;q)_{i_3}(q;q)_{m-(k-1)i_k-(k-2)i_{k-1}-\dots-2i_3-l}(q;q)_{2l+i_3+2i_4+\dots+(k-2)i_k-m}}~.    
\end{equation}
By Lemma \ref{lemma:S_kqN_rel}, we see that $f^n_{m,l}$ in \eqref{eq:fnml_closed_form} satisfies \eqref{eq:fmnl_def_app}. 
\subsection{Proof Of Recursion Relation \eqref{eq:rec_fnml} For $k=2$}
In this appendix, we prove that the formula \eqref{eq:fnml_closed_form} satisfies the recursion relation \eqref{eq:rec_fnml} for $k=2$. First note that by definition $f^k_{m,l}=0$ for $m\leq 0$ or $l\leq 0$. For $m=l$, we see that only the term $i_3=i_4=\dots i_k=0$ is nonvanishing since 
\begin{equation}
    \frac{1}{(q;q)_n}=0 \quad\text{for}\quad n<0~.
\end{equation}
Clearly, the term $i_3=i_4=\dots i_k=0$ in the sum \eqref{eq:fnml_closed_form} is 1. For $m=kl$, the only nonvanishing term in the sum is $i_3=i_4=\dots i_{k-1}=0,i_k=l$. This term in the sum is given by 
\begin{equation}
    \frac{q^{(k-1)l^2}(q;q)_l}{(q;q)_l}=q^{(k-1)l^2}~.
\end{equation}
We now prove the final recursion relation
\begin{equation}
f^2_{m,l}=q^l f^{2}_{m-1,l} + f^{2}_{m-1,l-1}-q^{2m-l-3} f^{2}_{m-3,l-1}~.    
\end{equation}
For $k=2$, the sum is empty and we simply have 
\begin{equation}
f^2_{m,l}(q)=q^{(m-l)^2}\frac{(q;q)_{l}}{(q;q)_{m-l}(q;q)_{2l-m}}=q^{(m-l)^2}{l\choose m-l}_q~.    
\end{equation}
We have 
\begin{equation}
\begin{split}
q^l f^{2}_{m-1,l} &+ f^{2}_{m-1,l-1}-q^{2m-l-3} f^{2}_{m-3,l-1}\\&=\frac{q^{(m-l-1)^2+l}(q;q)_{l}}{(q;q)_{m-l-1}(q;q)_{2l-m+1}}+ \frac{q^{(m-l)^2}(q;q)_{l-1}}{(q;q)_{m-l}(q;q)_{2l-m-1}}-\frac{q^{(m-l-2)^2+2m-l-3}(q;q)_{l-1}}{(q;q)_{m-l-2}(q;q)_{2l-m+1}}
\\&=\frac{q^{(m-l)^2}(q;q)_{l}}{(q;q)_{m-l}(q;q)_{2l-m}}\left[q^{l-2(m-l)+1}\frac{1-q^{m-l}}{1-q^{2l-m+1}}+\frac{1-q^{m-2l}}{1-q^l}\right.\\&\left.\hspace{6cm}-q^{-2m+3l+1}\frac{(1-q^{m-l-1})(1-q^{m-l})}{(1-q^l)(1-q^{2l-m+1})}\right]~.
\end{split}
\end{equation}
Straightforward calculation gives
\begin{equation}
q^{l-2(m-l)+1}\frac{1-q^{m-l}}{1-q^{2l-m+1}}+\frac{1-q^{m-2l}}{1-q^l}-q^{-2m+3l+1}\frac{(1-q^{m-l-1})(1-q^{m-l})}{(1-q^l)(1-q^{2l-m+1})}= 1~.   
\end{equation}
\end{appendix}

\bibliography{mtc}

\providecommand{\href}[2]{#2}\begingroup\raggedright\begin{thebibliography}{10}

\bibitem{Argyres:1995jj}
P.~C. Argyres and M.~R. Douglas, {\it {New phenomena in SU(3) supersymmetric gauge theory}},  {\em Nucl. Phys. B} {\bf 448} (1995) 93--126, [\href{http://arxiv.org/abs/hep-th/9505062}{{\tt hep-th/9505062}}].

\bibitem{Argyres:1995xn}
P.~C. Argyres, M.~R. Plesser, N.~Seiberg, and E.~Witten, {\it {New N=2 superconformal field theories in four-dimensions}},  {\em Nucl. Phys. B} {\bf 461} (1996) 71--84, [\href{http://arxiv.org/abs/hep-th/9511154}{{\tt hep-th/9511154}}].

\bibitem{Witten:1997sc}
E.~Witten, {\it {Solutions of four-dimensional field theories via M-theory}},  {\em Nucl. Phys. B} {\bf 500} (1997) 3--42, [\href{http://arxiv.org/abs/hep-th/9703166}{{\tt hep-th/9703166}}].

\bibitem{Cecotti:2010fi}
S.~Cecotti, A.~Neitzke, and C.~Vafa, {\it {R-Twisting and 4d/2d Correspondences}},  \href{http://arxiv.org/abs/1006.3435}{{\tt arXiv:1006.3435}}.

\bibitem{Xie:2012hs}
D.~Xie, {\it {General Argyres-Douglas Theory}},  {\em JHEP} {\bf 01} (2013) 100, [\href{http://arxiv.org/abs/1204.2270}{{\tt arXiv:1204.2270}}].

\bibitem{Rattazzi:2008pe}
R.~Rattazzi, V.~S. Rychkov, E.~Tonni, and A.~Vichi, {\it {Bounding scalar operator dimensions in 4D CFT}},  {\em JHEP} {\bf 12} (2008) 031, [\href{http://arxiv.org/abs/0807.0004}{{\tt arXiv:0807.0004}}].

\bibitem{Dolan:2002zh}
F.~A. Dolan and H.~Osborn, {\it {On short and semi-short representations for four-dimensional superconformal symmetry}},  {\em Annals Phys.} {\bf 307} (2003) 41--89, [\href{http://arxiv.org/abs/hep-th/0209056}{{\tt hep-th/0209056}}].

\bibitem{Cordova:2016emh}
C.~Cordova, T.~T. Dumitrescu, and K.~Intriligator, {\it {Multiplets of Superconformal Symmetry in Diverse Dimensions}},  {\em JHEP} {\bf 03} (2019) 163, [\href{http://arxiv.org/abs/1612.00809}{{\tt arXiv:1612.00809}}].

\bibitem{Argyres:1996eh}
P.~C. Argyres, M.~R. Plesser, and N.~Seiberg, {\it {The Moduli space of vacua of N=2 SUSY QCD and duality in N=1 SUSY QCD}},  {\em Nucl. Phys. B} {\bf 471} (1996) 159--194, [\href{http://arxiv.org/abs/hep-th/9603042}{{\tt hep-th/9603042}}].

\bibitem{Maruyoshi:2016aim}
K.~Maruyoshi and J.~Song, {\it {$ \mathcal{N}=1 $ deformations and RG flows of $ \mathcal{N}=2 $ SCFTs}},  {\em JHEP} {\bf 02} (2017) 075, [\href{http://arxiv.org/abs/1607.04281}{{\tt arXiv:1607.04281}}].

\bibitem{Maruyoshi:2016tqk}
K.~Maruyoshi and J.~Song, {\it {Enhancement of Supersymmetry via Renormalization Group Flow and the Superconformal Index}},  {\em Phys. Rev. Lett.} {\bf 118} (2017), no.~15 151602, [\href{http://arxiv.org/abs/1606.05632}{{\tt arXiv:1606.05632}}].

\bibitem{Kinney:2005ej}
J.~Kinney, J.~M. Maldacena, S.~Minwalla, and S.~Raju, {\it {An Index for 4 dimensional super conformal theories}},  {\em Commun. Math. Phys.} {\bf 275} (2007) 209--254, [\href{http://arxiv.org/abs/hep-th/0510251}{{\tt hep-th/0510251}}].

\bibitem{Beem:2013sza}
C.~Beem, M.~Lemos, P.~Liendo, W.~Peelaers, L.~Rastelli, and B.~C. van Rees, {\it {Infinite Chiral Symmetry in Four Dimensions}},  {\em Commun. Math. Phys.} {\bf 336} (2015), no.~3 1359--1433, [\href{http://arxiv.org/abs/1312.5344}{{\tt arXiv:1312.5344}}].

\bibitem{Gadde:2011uv}
A.~Gadde, L.~Rastelli, S.~S. Razamat, and W.~Yan, {\it {Gauge Theories and Macdonald Polynomials}},  {\em Commun. Math. Phys.} {\bf 319} (2013) 147--193, [\href{http://arxiv.org/abs/1110.3740}{{\tt arXiv:1110.3740}}].

\bibitem{Buican:2015tda}
M.~Buican and T.~Nishinaka, {\it {Argyres-Douglas Theories, the Macdonald Index, and an RG Inequality}},  {\em JHEP} {\bf 02} (2016) 159, [\href{http://arxiv.org/abs/1509.05402}{{\tt arXiv:1509.05402}}].

\bibitem{Cordova:2015nma}
C.~Cordova and S.-H. Shao, {\it {Schur Indices, BPS Particles, and Argyres-Douglas Theories}},  {\em JHEP} {\bf 01} (2016) 040, [\href{http://arxiv.org/abs/1506.00265}{{\tt arXiv:1506.00265}}].

\bibitem{Song:2016yfd}
J.~Song, {\it {Macdonald Index and Chiral Algebra}},  {\em JHEP} {\bf 08} (2017) 044, [\href{http://arxiv.org/abs/1612.08956}{{\tt arXiv:1612.08956}}].

\bibitem{loustaunau1994introduction}
W.~Loustaunau, {\em An Introduction to Grobner Bases}.
\newblock American Mathematical Soc., 1994.

\bibitem{Gaiotto:2009hg}
D.~Gaiotto, G.~W. Moore, and A.~Neitzke, {\it {Wall-crossing, Hitchin systems, and the WKB approximation}},  {\em Adv. Math.} {\bf 234} (2013) 239--403, [\href{http://arxiv.org/abs/0907.3987}{{\tt arXiv:0907.3987}}].

\bibitem{Gaiotto:2009we}
D.~Gaiotto, {\it {N=2 dualities}},  {\em JHEP} {\bf 08} (2012) 034, [\href{http://arxiv.org/abs/0904.2715}{{\tt arXiv:0904.2715}}].

\bibitem{Shapere:2008zf}
A.~D. Shapere and Y.~Tachikawa, {\it {Central charges of N=2 superconformal field theories in four dimensions}},  {\em JHEP} {\bf 09} (2008) 109, [\href{http://arxiv.org/abs/0804.1957}{{\tt arXiv:0804.1957}}].

\bibitem{Chacaltana:2010ks}
O.~Chacaltana and J.~Distler, {\it {Tinkertoys for Gaiotto Duality}},  {\em JHEP} {\bf 11} (2010) 099, [\href{http://arxiv.org/abs/1008.5203}{{\tt arXiv:1008.5203}}].

\bibitem{Tachikawa:2013kta}
Y.~Tachikawa, {\em {N=2 supersymmetric dynamics for pedestrians}}, vol.~890.
\newblock 12, 2013.

\bibitem{LeFloch:2020uop}
B.~Le~Floch, {\it {A slow review of the AGT correspondence}},  {\em J. Phys. A} {\bf 55} (2022), no.~35 353002, [\href{http://arxiv.org/abs/2006.14025}{{\tt arXiv:2006.14025}}].

\bibitem{Beem:2023ofp}
C.~Beem, M.~Martone, M.~Sacchi, P.~Singh, and J.~Stedman, {\it {Simplifying the Type $A$ Argyres-Douglas Landscape}},  \href{http://arxiv.org/abs/2311.12123}{{\tt arXiv:2311.12123}}.

\bibitem{Tachikawa:2015bga}
Y.~Tachikawa, {\it {A review of the $T_N$ theory and its cousins}},  {\em PTEP} {\bf 2015} (2015), no.~11 11B102, [\href{http://arxiv.org/abs/1504.01481}{{\tt arXiv:1504.01481}}].

\bibitem{Beem:2014rza}
C.~Beem, W.~Peelaers, L.~Rastelli, and B.~C. van Rees, {\it {Chiral algebras of class S}},  {\em JHEP} {\bf 05} (2015) 020, [\href{http://arxiv.org/abs/1408.6522}{{\tt arXiv:1408.6522}}].

\bibitem{adamovic2019realizations}
D.~Adamovi{\'c}, {\it Realizations of simple affine vertex algebras and their modules: the cases sl (2) sl (2)\^{} and osp (1, 2) osp (1, 2)\^{}},  {\em Communications in mathematical physics} {\bf 366} (2019) 1025--1067.

\bibitem{Deb:2025ypl}
A.~Deb and S.~S. Razamat, {\it {Generalized Schur partition functions and RG flows}},  \href{http://arxiv.org/abs/2506.13764}{{\tt arXiv:2506.13764}}.

\bibitem{Beem:2019tfp}
C.~Beem, C.~Meneghelli, and L.~Rastelli, {\it {Free Field Realizations from the Higgs Branch}},  {\em JHEP} {\bf 09} (2019) 058, [\href{http://arxiv.org/abs/1903.07624}{{\tt arXiv:1903.07624}}].

\bibitem{Beem:2024fom}
C.~Beem, A.~Deb, M.~Martone, C.~Meneghelli, and L.~Rastelli, {\it {Free field realizations for rank-one SCFTs}},  {\em JHEP} {\bf 12} (2024) 004, [\href{http://arxiv.org/abs/2407.01674}{{\tt arXiv:2407.01674}}].

\bibitem{Arakawa:2016hkg}
T.~Arakawa and K.~Kawasetsu, {\it {Quasi-lisse vertex algebras and modular linear differential equations}},  \href{http://arxiv.org/abs/1610.05865}{{\tt arXiv:1610.05865}}.

\bibitem{Zhu1995ModularIO}
Y.~Zhu, {\it Modular invariance of characters of vertex operator algebras},  {\em Journal of the American Mathematical Society} {\bf 9} (1995) 237--302.

\bibitem{Xie:2019zlb}
D.~Xie and W.~Yan, {\it {Schur sector of Argyres-Douglas theory and $W$-algebra}},  {\em SciPost Phys.} {\bf 10} (2021), no.~3 080, [\href{http://arxiv.org/abs/1904.09094}{{\tt arXiv:1904.09094}}].

\bibitem{frenkel1992vertex}
I.~B. Frenkel and Y.~Zhu, {\it Vertex operator algebras associated to representations of affine and virasoro algebras},  {\em Duke Mathematical Journal} {\bf 66} (1992), no.~1 123--168.

\bibitem{bruschek2011arc}
C.~Bruschek, H.~Mourtada, and J.~Schepers, {\it Arc spaces and rogers-ramanujan identities},  {\em Discrete Mathematics \& Theoretical Computer Science} (2011), no.~Proceedings.

\bibitem{beem2014infinite}
C.~Beem and L.~Rastelli, ``Infinite chiral symmetry in four and six dimensions.'' Seminar at Harvard University, Nov., 2014.
\newblock Presented by L. Rastelli.

\bibitem{Beem:2017ooy}
C.~Beem and L.~Rastelli, {\it {Vertex operator algebras, Higgs branches, and modular differential equations}},  {\em JHEP} {\bf 08} (2018) 114, [\href{http://arxiv.org/abs/1707.07679}{{\tt arXiv:1707.07679}}].

\bibitem{Bhargava:2023hsc}
C.~Bhargava, M.~Buican, and H.~Jiang, {\it {Exact Operator Map from Strong Coupling to Free Fields: Beyond Seiberg-Witten Theory}},  {\em Phys. Rev. Lett.} {\bf 132} (2024), no.~3 031602, [\href{http://arxiv.org/abs/2306.05507}{{\tt arXiv:2306.05507}}].

\bibitem{bai2020quadratic}
Y.~Bai, E.~Gorsky, and O.~Kivinen, {\it Quadratic ideals and rogers--ramanujan recursions},  {\em The Ramanujan Journal} {\bf 52} (2020) 67--89.

\bibitem{Buican:2015ina}
M.~Buican and T.~Nishinaka, {\it {On the superconformal index of Argyres{\textendash}Douglas theories}},  {\em J. Phys. A} {\bf 49} (2016), no.~1 015401, [\href{http://arxiv.org/abs/1505.05884}{{\tt arXiv:1505.05884}}].

\bibitem{Buican:2015hsa}
M.~Buican and T.~Nishinaka, {\it {Argyres{\textendash}Douglas theories, S$^1$ reductions, and topological symmetries}},  {\em J. Phys. A} {\bf 49} (2016), no.~4 045401, [\href{http://arxiv.org/abs/1505.06205}{{\tt arXiv:1505.06205}}].

\bibitem{andrews1998theory}
G.~Andrews, {\em The Theory of Partitions}.
\newblock Cambridge mathematical library. Cambridge University Press, 1998.

\bibitem{gasper2004basic}
G.~Gasper and M.~Rahman, {\em Basic Hypergeometric Series}.
\newblock Encyclopedia of Mathematics and its Applications. Cambridge University Press, 2004.

\bibitem{Buchberger1985}
B.~Buchberger, {\it Gröbner bases: An algorithmic method in polynomial ideal theory},  in {\em Multidimensional Systems Theory} (N.~K. Bose, ed.), pp.~184--232.
\newblock D. Reidel Publishing Company, 1985.

\end{thebibliography}\endgroup
\end{document}